\documentclass[3p,times, 11pt]{elsarticle}
 \usepackage[latin1]{inputenc}
\usepackage{amssymb,amsmath,amsthm,amscd,latexsym}
\usepackage{mathrsfs}
\usepackage{amsfonts}
\usepackage{amscd}
\usepackage{array}
\usepackage{hyperref}
\hypersetup{
    colorlinks=true,
    linkcolor=blue,
    citecolor=blue,
    urlcolor=blue
}
\usepackage{times}
\usepackage{listings}
\usepackage{mathtools}
\usepackage{mathdots}
%%%%%%%%%%%%%%%%%%%%%%%%%%%%%%%%%%%%

\newtheorem{Definition}{Definition}
\newtheorem{Theorem}{Theorem}

\newtheorem{Example}{Example}[section]
\newtheorem{Remark}{Remark}

\newtheorem{Definition and Notation}{Definition and Notation}
\newtheorem{Lemma}{Lemma}

\newtheorem{Proposition}{Proposition}
\newtheorem{Corollary}{Corollary}

\journal{ Cryptography and Communications}
%%%%%%%%%%%%%%%%%%%%%%%%%%%%%%%%%%%%%%%%%%%%%%%%%%%%
\begin{document}

\begin{frontmatter}

\title{Polycyclic codes over serial rings and their annihilator CSS construction}

\author[1,m,v]{Maryam Bajalan}  
\ead{maryam.bajalan@math.bas.bg}
\fntext[m]{Maryam Bajalan is supported by the Bulgarian Ministry of Education and Science, Scientific Programme ``Enhancing the Research Capacity in Mathematical Sciences (PIKOM)", No. DO1-67/05.05.2022.}

\author[2,v]{Edgar Mart\'inez-Moro} 
\ead{edgar.martinez@uva.es}
\fntext[v]{Both authors were partially supported by   Grant TED2021-130358B-I00 funded by MCIU/AEI/10.13039/501100011033 and by the ``European Union NextGenerationEU/PRTR". This research was partially conducted during a research stay of the second author funded by the Institute of Mathematics and Informatics at the Bulgarian Academy of Sciences.}

\address[1]{Institute of Mathematics and Informatics\\ Bulgarian Academy of Sciences\\Bl. 8, Acad. G. Bonchev Str., 1113, Sofia, Bulgaria}
\address[2]{Institute of Mathematics\\ University of Valladolid, Castilla, Spain}

\begin{abstract}
In this paper, we investigate the algebraic structure for polycyclic codes over a specific class of serial rings, defined as $\mathscr R=R[x_1,\ldots, x_s]/\langle t_1(x_1),\ldots, t_s(x_s) \rangle$,  where $R$ is a chain ring and each $t_i(x_i)$ in $R[x_i]$ for $i\in\{1,\ldots, s\}$ is a monic square-free polynomial. We define quasi-$s$-dimensional polycyclic codes and establish an $R$-isomorphism between these codes and polycyclic codes over $\mathscr R$. We provide necessary and sufficient conditions for the existence of annihilator self-dual, annihilator self-orthogonal, annihilator linear complementary dual, and annihilator dual-containing polycyclic codes over this class of rings. We also establish the CSS construction for annihilator dual-preserving polycyclic codes over Frobenius rings.

\end{abstract}
 
\begin{keyword} Polycyclic code, Serial ring, Annihilator dual, quasi-$s$-dimensional polycyclic code, annihilator CSS construction

\emph{AMS Subject Classification 2010: 94B15, 13B25, 81P70} 
\end{keyword}

\end{frontmatter}
%%%%%%%%%%%%%%%%%%%%%%%%%%%%%%%%%%%%%%%%%%%%%%%%%%%%%%%%
\section{Introduction}
Polycyclic codes are ideals in the quotient of the polynomial ring $A[x]/\langle f(x)\rangle$, where $A$ is a ring and $f(x)$ is a polynomial in $A[x]$, see \cite{Alahmadi_2016,Bajalan_2022, Bajalan_2024, Permouth_2009}. Cyclic codes and constacyclic codes over $A$ are particular instances, when $f(x)=x^n-1$ and $f(x)=x-\lambda$, respectively. A ring is called uniserial if all its submodules are linearly ordered by inclusion, and it is called serial if it is a direct sum of uniserial rings. In this paper, we aim to investigate polycyclic codes in the case $A$ is the serial ring $\mathscr R=R[x_1,\ldots, x_s]/\langle t_1(x_1),\ldots, t_s(x_s) \rangle$, where $R$ is a chain ring and each $t_i(x_i)$ in $R[x_i]$ is a monic square-free polynomial for $i=1,2,\ldots, s$.

%++++++++++++++++++++++++++ literature review
We briefly review the vast literature on polycyclic codes over some special classes of rings. 
Alkenani et al. \cite{Alkenani_2020} studied constacyclic codes over the ring $\mathbb F_q[u_1, u_2]/\langle u_{1}^2 - u_1, u_{2}^2 - u_2, u_1u_2 - u_2u_1 \rangle$. Dinh et al. \cite{Dinh_2020}  investigated constacyclic codes over the ring $\mathbb F_p + u_1 \mathbb F_p + \cdots + u_s \mathbb F_p$, subject to the conditions $u_i^2 = u_i$,  $u_i u_j = u_j u_i = 0$, and $p$ is an odd prime.
Dinh et al. \cite{Dinh_2017} explored the algebraic structure of constacyclic codes over finite semi-simple rings,   defined as direct sums of finite fields. Islam et al. \cite{Islam_2021}   focused their study on cyclic codes over the non-chain ring $\mathbb F_q[u]/\langle u^e-1\rangle$. Bag et.al \cite{Bag_2020}  examined skew constacyclic codes over the ring $\mathbb F_q [u, v]/\langle u^2- 1, v^2- 1, uv-vu\rangle$. Ali et al. \cite{Ali_2024}  worked out the case cyclic codes over non-chain rings $\mathbb  F_q[u_1, u_2, \ldots, u_s]/\langle u^2_ i - (\alpha_i)^2, \quad u_i u_j - u_j u_i\rangle$, where each $\alpha_i$ is a non-zero element of $\mathbb F_q$. Goyal and Raka \cite{Goyal_2020, Goyal_2021}   explored polyadic cyclic and constacyclic codes over the non-chain ring $\mathbb F_q[u,v]/ \langle f(u),g(v),uv -vu\rangle$, with $f(u)$ and $g(v)$ splitting into distinct linear factors over $\mathbb F_q$. Bhardwaj et al. \cite{Bhardwaj_2023} investigated constacyclic codes over the finite non-chain ring $\mathbb{F}_q[u_1,u_2, \ldots, u_k]/\langle f_i(u_i), u_i u_j - u_j u_i\rangle$, where each $f_i(u_i)$ is a polynomials that splits into distinct linear factors over $\mathbb{F}_q$. Qi \cite{Qi_2022} focused on polycyclic codes over $ \mathbb{F}_q+u\mathbb{F}_q $, with $u^2=0$. Karthick \cite{Karthick_2023} investigated  polycyclic codes over the ring $R=\mathbb F_q + u \mathbb F_q + v \mathbb F_q$, subject to the conditions $ u^2 = \alpha u$, $ v^2 = v$, and $ uv = vu = 0$, where $\alpha$ represents a unit element of $R$. 
%++++++++++++++++++++++++++ In Section 2,3
 
The serial ring $\mathscr R=\frac{R[x_1,\ldots, x_s]}{\langle t_1(x_1),\ldots, t_s(x_s) \rangle}$, where $R$ is a finite commutative chain ring with maximal ideal $\mathfrak m$ and the natural projection of $t_i$ on $(R/\mathfrak m) [x_i]$ is a square free polynomial for each $i\in\{ 1,\ldots, s\}$,  is a comprehensive generalization of the rings previously discussed in the papers \cite{Ali_2024, Alkenani_2020, Bag_2020, Bhardwaj_2023,  Dinh_2020,   Goyal_2020, Goyal_2021, Islam_2021,  Karthick_2023,  Qi_2022}.  Note that $R$ is a Frobenius ring and when $R$ is assumed to be a field the serial ring $\mathscr{R}$ is a semi-simple ring, see Proposition \ref{32}. 

In this paper, we investigate polycyclic codes over the general ring $\mathscr{R}$. The outline of the paper is as follows.  Section \ref{77-1} shows the foundational structure of $\mathscr R$. In Section \ref{28-1}, we examine how a polycyclic code $\mathscr C$ over $\mathscr R$ decomposes into a direct sum $\oplus _{i=1}^k \mathscr C_i$, where each $\mathscr C_i$ constitutes a polycyclic code over a chain ring. 
Section \ref{7} explores the concept of quasi-2-dimensional polycyclic (Q2DP) codes along with their multidimensional counterparts (QsDP), which extend the ideas of quasi-polycyclic codes and multidimensional cyclic codes  \cite{Guneri_2016, Guneri_2012, Hassan_2022, Moro_2018, Wu_2022}. Furthermore, we define the concept of $f(x)$-polycyclic-QsDP codes as a specific subclass of QsDP codes. Then, we prove that when considering polycyclic codes over $\mathscr{R}$ as $R$-submodules, they precisely coincide with $f(x)$-polycyclic-QsDP codes.
%++++++++++++++++++++++++++ In Section 5 and section 6
 Polycyclic codes are defined as the invariant $\mathscr{R}$-submodules of $\mathscr{R}^n$ under the operator $\tau_{f(x)}$. Their counterpart are sequential codes, which are defined as invariant subspace under the adjoint operator of $\tau_{f(x)}$. It is well-known that if a linear subspace is invariant under one operator, its Euclidean dual is invariant under the adjoint operator. Consequently, a polycyclic code $\mathscr{C}$ and its Euclidean dual $\mathscr{C}^{\perp}$ do not share the same structure.  Thus, the first ones are equivalent to ideals in $\mathscr R[x]/\langle f(x)\rangle$, defining a polynomial and ideal structure for the second ones is challenging. Note that, if $\mathscr{C}$ is an ideal in $\mathscr R[x]/\langle f(x)\rangle$ generated by the polynomial $g(x)$ with the property that $f(x)=g(x)h(x)$, it cannot be assumed that $\mathscr{C}^{\perp}$ is generated by $h(x)$ because $\mathscr{C}^{\perp}$ is not necessarily a polycyclic code. This incorrect interpretation happened in the proof of Theorem 5 in \cite{Patel_2022}. In  Section \ref{74}, using multivariable codes over chain rings in \cite{Moro_2006, moro_2007}, we establish a polynomial structure for $\mathscr C^\perp$ for the special case where $f(x)$ is a polynomial with coefficients in $R$. The annihilator dual is an alternative duality concept ensuring that the dual of polycyclic codes has a polycyclic structure, a feature that fails in the Euclidean dual. In Section~\ref{66}, some properties of the annihilator dual of a polycyclic code over $\mathscr R$ are studied. Usually, it is stated in the literature that \textit{``a polycyclic code $\langle g(x) \rangle$ is annihilator self-dual if and only if $f(x) = g^2(x)$''}. As explained in Remark \ref{56}, that statement relies on the assumption that \textit{``if $b$ divides $c$ and $c$ divides $b$, then $b$ equals $c$'}', without considering that the divisibility relation for rings is different, namely \textit{``if $b$ divides $c$ and $c$ divides $b$, then $\langle b\rangle$ equals $\langle c\rangle$''}. Thus in Section~\ref{66}, we revise this property as follows, \textit{``a polycyclic code $\langle g(x) \rangle$ is annihilator self-dual if and only if $f(x)=ag^2(x)$ for some unit element $a$ in the ring $\mathscr R$''}. 
%++++++++++++++++++++++++++ In section 7
To finish this paper, we study some quantum code constructions from this type of codes. Calderbank-Shor-Steane (CSS) construction allows for constructing quantum codes from dual-preserving classical codes, see for example \cite{Nadella_2012, Sarvepalli_2008}. Existing literature has applied this construction only to codes that preserve Euclidean or Hermitian duality. In Section \ref{76}, we present this construction for polycyclic codes over Frobenius rings that preserve annihilator duality.
%%%%%%%%%%%%%%%%%%%%%%%%%%%%%%%%%%%%%%%%%%%%%%%%%%%%%%%%%
\section*{List of symbols and abbreviations}
\begin{itemize}

    \item[$R$]: Finite chain ring with the maximal ideal $\mathfrak{m}=\langle \gamma\rangle$ and the nilpotency index $e$.
    \item[$\mathbb{F}_q$]: Residue field $\mathbb F_q = R/\mathfrak{m}$, where $q = p^t$, $p$ a prime number.
    \item[$\overline{\phantom{r}}$]: Natural ring homomorphism from $R[y]$ to $\mathbb{F}_q[y]$.
    \item[$\mathscr{R}$]: Ring $R[x_1,\ldots, x_s]/\langle t_1(x_1),\ldots, t_s(x_s) \rangle$, where $t_i(x_i)\in R[x_i]$ is a monic square-free of degree $m_i$.
    \item[$H_i$]:  Set of roots of $\bar{t}_i(x_i)$ in an extension of $\mathbb{F}_q$.
    \item[$C(\nu)$]: Set of the class of $\nu \in \mathcal H = \prod_{i=1}^{s} H_i$ as $ \{(\nu^{ q^{j}_1}, \ldots, \nu^{ q^{j}_s}) \mid j \in \mathbb{N}\}$.
    \item[$\mathcal C$]: Set of all classes $C(\nu)$. 
    \item[$C$]: Simpler notation for $C(\nu)$.
    \item[$p_{C,i}(x_i)$]:  Polynomial $\text{Irr}(\nu_i, \mathbb{F}_q)$, where $\nu=(\nu_1, \ldots, \nu_s) \in \mathcal H $.
    \item[$b_{C,i}(x_i)$]: Polynomial $\text{Irr}(\nu_i, \mathbb{F}_q(\nu_1, \ldots, \nu_{i-1}))$.
    \item[$\widetilde{b}_{C,i}(x_i)$]: Polynomial  $p_{C,i}(x_i)/b_{C,i}(x_i)$.
    \item[$w_{C,i}(x_1, \ldots, x_i)$]: Polynomial obtained by replacing $\nu_i$ with $x_i$ in  $b_{C,i}(x_i)$.
    \item[$\pi_{C,i}(x_1, \ldots, x_i)$]: Polynomial obtained by replacing $\nu_i$ by $x_i$ in  $\widetilde{b}_{C,i}(x_i)$.
    \item[$q_{C,i}$]: Hensel lift to $R$ of $p_{C,i}$.
    \item[$z_{C,i}$]: Hensel lift to $R$ of $w_{C,i}$.
    \item[$\sigma_{C,i}$]: Hensel lift to $R$ of $\pi_{C,i}$.
    \item[$I_C$]: Ideal $ \langle q_{C,1}, z_{C,2}, \ldots, z_{C,s} \rangle$.
    \item[$T_C$]: Chain ring $R[x_1, \ldots, x_s]/I_C$.
    \item[$h_C$]: Polynomial  $h_C(x_1, x_2, \ldots, x_s)= \prod_{i=1}^{s} t_i(x_i) q_{C,i}(x_i) \prod_{i=1}^{s} \sigma_{q_{C,i}}(x_2, \ldots, x_i)$.
    \item[$e_C$]: Orthogonal idempotent such that $\mathscr R e_C  \cong \langle h_C + I \rangle$.
    \item[$\mathscr R_{i}$ ]: Chain ring isomorphic to  $\mathscr Re_{C_i}$, where $\{C_i\}_{i=1}^k$ is a set of representatives of $\mathcal C$.
    \item[$\mathscr{C}$]: Linear code of length $n$ over $\mathscr R$.
    \item[$f(x)$]: Polynomial $x^n-(f_{n-1}x^{n-1}+\cdots+f_1x+f_0)$ in $\mathscr R[x]$.
    \item[$\tau_{f(x)}$]: $f(x)$-polycyclic shift. 
    \item[$\boldsymbol c$]: Element $(c_0, c_1, \ldots, c_{n-1}) \in \mathscr R^n$.
    \item[$c_{i,j}$]: Components of $c_j$ ($0\leqslant j\leqslant n-1$) in \(c_j = c_{1,j} e_{C_1} + c_{2,j} e_{C_2}+ \cdots+ c_{k,j} e_{C_k}\in \mathscr R\).
    \item[$\boldsymbol c_i$]: Components of $\boldsymbol c$ in $\boldsymbol c=\boldsymbol c_1 e_{C_1}+ \boldsymbol c_2 e_{C_2}+\cdots+ \boldsymbol c_k e_{C_k}$.
    \item[$\mathscr C_{i}$]: Set  $\{ \boldsymbol c_i \in \mathscr R^n \mid \boldsymbol c_1 e_{C_1}+ \boldsymbol c_2 e_{C_2}+\cdots+ \boldsymbol c_k e_{C_k} \in \mathscr C\}$.
    \item[$f_i(x)$]: Polynomial $x^n - (f_{i,n-1}x^{n-1} + \cdots + f_{i,1}x + f_{i,0})$, where $f_{i,j}$ is component of $f_j$.
    \item[$\operatorname{rank}_{\mathscr R}(\mathscr C)$]: Minimum number of generators for spanning $\mathscr C$ as $\mathscr R$-linear combinations.
    \item[$\mathscr R_f$]: Quotient ring $\mathscr R[x]/\langle f(x)\rangle$.
    \item[$\phi$]: Map that associates each vector in $\mathscr{R}$ with its corresponding polynomial in $\mathscr{R}_f$.
    \item[$\mathbf x^\alpha$]: Monomial $x_1^{\alpha_1}  x_2^{\alpha_2} \ldots x_s^{\alpha_s}$, where $\alpha = (\alpha_1, \alpha_2, \ldots, \alpha_s)\in (\mathbb{N}\cup\{0\})^s$.
    \item[$\succ$]: Lexicographic ordering on the monomials.
    \item[$\Phi$]: $R$-isomorphism from  $\mathscr R$ to $R^{m}$, where $m=m_1m_2\ldots m_s$.
    \item[$QsDP$]: Quasi-s-dimensional polycyclic codes.
    \item[$C_f$]: Companion matrix of the polynomial $f$.
    \item[$I_{i\times i}$]: Identity matrix.
    \item[$\otimes$]: Tensor product of matrices.
    \item[$\nu_{f(x)}$]: $f(x)$-sequential shift.
    \item[$\mathscr{C}^\perp$]: Euclidean dual of $\mathscr{C}$.
    \item[$ \mathscr C^{\circ}$]: Annihilator dual of $\mathscr{C}$.
    \item[$LCD$]: Linear complementary dual code.
    \item[$\chi$]: Character in $\operatorname{Hom}(R; \mathbb{C}^*)$.
    \item[$\operatorname{Ann}(\mathscr C)$]: Annihilator of the subset $\mathscr C$ of $\mathscr R_f$.
    \item[$\operatorname{swt}$]: Symplectic weight.
    \item[$\operatorname{wt}$]: Hamming weight.

\end{itemize}
%%%%%%%%%%%%%%%%%%%%%%%%%%%%%%%%%%%%%%%%%%%%%%%%%%%%%%%%
\section{The structure of the ring $\mathscr R=\frac{R[x_1,\ldots, x_s]}{\langle t_1(x_1),\ldots, t_s(x_s) \rangle}$}\label{77-1}
Let $R$ be a finite commutative chain ring, and let $\gamma \in R$ be a fixed generator of the maximal ideal $\mathfrak{m}$ with nilpotency index $e$. Consider $\mathbb F_q = R/\mathfrak{m}$ as the residue field, where $q = p^t$, $p$ a prime number. Let $\overline{\phantom{x}}: R[y] \to \mathbb{F}_q[y]$ represent the natural ring homomorphism, mapping $r \in R$ to $r + \mathfrak{m}$ and the variable $y$ to itself. A polynomial $p(y)=\sum_{i=0}^{n-1}p_iy^i\in R[y]$ is called square-free (basic irreducible) if $\bar{p}(y)$ is square-free (basic irreducible) in $\mathbb{F}_q[y]$. Additionally, it is called regular if it is not a zero divisor in $R[y]$. In \cite[Proposition 2.3]{Dinh_2004} it is stated that $p(y)$ is regular if and only if $p_i$ is a unit element for some $i$. Hence, monic polynomials are regular. If $g(y)\in R[y]$ is a regular basic irreducible polynomial, then the quotient ring $R[y]/\langle g(y)\rangle $ forms a chain ring, see \cite[Lemma 3.1]{Dinh_2004}. The monic square-free polynomial $g(y)$  factors uniquely as a product of monic basic irreducible pairwise coprime polynomials, given by  $g(y)=g_1(y)g_2(y)\cdots g_r(y)$ \cite[Theorem XIII.4]{McDonald_1974}. Hence, by the Chinese Remainder Theorem, the quotient ring $R[y]/\langle g(y) \rangle$ can be decomposed as a direct sum of chain rings
 \begin{equation}\label{1-2}
    \frac{R[y]}{\langle g(y) \rangle}\cong \frac{R[y]}{\langle g_1(y) \rangle}\bigoplus \frac{R[y]}{\langle g_2(y) \rangle} \bigoplus\cdots\bigoplus \frac{R[y]}{\langle g_r(y) \rangle}\cdot
 \end{equation} 
The algebraic properties of the algebra $\mathscr{R}=R[x_1,\ldots, x_s]/\langle t_1(x_1),\ldots, t_s(x_s) \rangle$, assuming that each $t_i(x_i)$ in $R[x_i]$ is a monic square-free polynomial of degree $m_i$, was explored in \cite{Moro_2006}. Let $H_i$ denote the set of roots of $\bar{t}_i(x_i)$ in a suitable extension of $\mathbb{F}_q$. For each $\nu \in \mathcal H = \prod_{i=1}^{s} H_i$, let $C(\nu)$ denote the set of the class of $\nu$ as $ \{(\nu^{ q^{j}_1}, \ldots, \nu^{ q^{j}_s}) \mid j \in \mathbb{N}\}$. The set of all these classes, denoted by $\mathcal C = \mathcal C(t_1, \ldots, t_s)$,  forms a partition of $ \mathcal H$. 

The size of each class is given by $|C(\nu)| = \text{lcm}(d_1, \ldots, d_s) = [\mathbb{F}_q(\nu_1, \ldots, \nu_s) : \mathbb{F}_q]$, where $d_i$ is the degree of the irreducible polynomial of $\nu_i$ over $\mathbb{F}_q$. Let $p_{C,i}(x_i)$ denote the polynomial $\text{Irr}(\nu_i, \mathbb{F}_q)$, where $\nu=(\nu_1, \ldots, \nu_s) \in \mathcal H $ and $C=C(\nu)$ is a class of $\nu$. For each $i = 2, \ldots, s$, let $b_{C,i}(x_i) = \text{Irr}(\nu_i, \mathbb{F}_q(\nu_1, \ldots, \nu_{i-1})) \in \mathbb{F}_q(\nu_1, \ldots, \nu_{i-1})[x_i]$ and $\widetilde{b}_{C,i}(x_i) = p_{C,i}(x_i)/b_{C,i}(x_i)$. The polynomials $b_{C,i}(x_i)$ and $\widetilde{b}_{C,i}(x_i)$ are coprime polynomials. One can define the multivariable polynomials $w_{C,i}(x_1, \ldots, x_i)$ and $\pi_{C,i}(x_1, \ldots, x_i)$ obtained by substituting $\nu_i$ by $x_i$ in polynomials $b_{C,i}(x_i)$ and $\widetilde{b}_{C,i}(x_i)$, respectively and denote the Hensel lifts to $R$ of the polynomials $p_{C,i}$, $w_{C,i}$, and $\pi_{C,i}$ by $q_{C,i}$, $z_{C,i}$, and $\sigma_{C,i}$, respectively. If we denote $ \langle q_{C,1}, z_{C,2}, \ldots, z_{C,s} \rangle$ by $I_C$, then the ring $T_C = R[x_1, \ldots, x_s]/I_C$ is a chain ring with maximal ideal $\mathfrak M = \langle a, q_{C,1}, z_{C,2}, \ldots, z_{C,s} \rangle + I_C$, and the residue field $T_C/\mathfrak  M \cong \mathbb{F}_q(\nu_1, \ldots, \nu_s)$. Let us define  the polynomial
 \begin{equation}\label{72}
    h_C(x_1, x_2, \ldots, x_s) := \prod_{i=1}^{s} t_i(x_i) q_{C,i}(x_i) \prod_{i=1}^{s} \sigma_{q_{C,i}}(x_2, \ldots, x_i).
 \end{equation}
The following two propositions outline the properties of the serial ring $\mathscr{R}$.
 
 \begin{Proposition}[\cite{Moro_2006}]\label{13}\
    \begin{enumerate}
        \item $\mathscr R \cong \bigoplus_{C \in \mathcal C} \langle h_C + I \rangle\cong  \bigoplus_{C \in \mathcal C} R[x_1, \ldots, x_s]/I_C$,  where 
        $\mathcal C$ is the partition of $\mathcal H$.
        \item $R[x_1, \ldots, x_s]/I_C$ is a chain ring with maximal ideal $\mathfrak M = \langle a, q_{C,1}, z_{C,2}, \ldots, z_{C,s} \rangle + I_C$ and its
              residue field is given by $T_C/\mathfrak  M \cong \mathbb{F}_q(\nu_1, \ldots, \nu_s)$.
        \item There exists a complete set of orthogonal idempotent elements $\{e_C \in \mathscr 
              R \mid C \in \mathcal C\}$ such that $\mathscr R e_C  \cong \langle h_C + I \rangle$.
        \item If $\lvert \mathcal{C} \rvert = k$ and $\{C_i\}_{i=1}^k$ is a set of representatives of $\mathcal C$, then the ring $\mathscr{R}$ can be decomposed as a direct sum $$\mathscr R\cong\bigoplus_{i=1}^k\mathscr R 
        e_{C_i}\cong \bigoplus_{i=1}^k\mathscr R_{i},$$ where each $\mathscr R_{i}$ is a chain ring. Moreover,
             any element $r\in \mathscr R$  can be expressed uniquely as  \(r = r_1e_{C_1} + r_2e_{C_2} +\cdots+ r_ke_{C_k}\), where $r_i\in \mathscr R$.            
    \end{enumerate} 
 \end{Proposition}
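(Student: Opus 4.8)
The plan is to lift the Chinese Remainder decomposition from the residue field $\mathbb F_q$ up to $R$ by Hensel's lemma, and then to recognize each resulting factor as a chain ring built from a tower of basic irreducible extensions. I would start at the field level. Because each $\bar t_i(x_i)$ is square-free, the ring $\overline{\mathscr R}=\mathbb F_q[x_1,\ldots,x_s]/\langle\bar t_1,\ldots,\bar t_s\rangle$ is a reduced finite-dimensional $\mathbb F_q$-algebra and hence a product of fields. Its maximal ideals correspond to the common zeros $\nu=(\nu_1,\ldots,\nu_s)\in\mathcal H$ taken up to the Frobenius action $\nu\mapsto(\nu_1^q,\ldots,\nu_s^q)$, and these orbits are exactly the classes $C(\nu)$; thus $\mathcal C$ partitions $\mathcal H$ and $\overline{\mathscr R}\cong\prod_{C\in\mathcal C}\mathbb F_q(\nu_1,\ldots,\nu_s)$. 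The tower of polynomials $p_{C,1},b_{C,2},\ldots,b_{C,s}$ attached to an orbit records precisely its maximal ideal.

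Next I would produce the chain ring factors, proving part (2). The generators $q_{C,1},z_{C,2},\ldots,z_{C,s}$ of $I_C$ are the Hensel lifts to $R$ of the basic irreducible tower $p_{C,1},w_{C,2},\ldots,w_{C,s}$, which Hensel's lemma furnishes as monic polynomials whose reductions remain irreducible over the successive residue fields. Setting $R^{(0)}=R$ and $R^{(i)}=R^{(i-1)}[x_i]/\langle z_{C,i}\rangle$ with $z_{C,1}:=q_{C,1}$, each step adjoins to a chain ring a root of a regular basic irreducible polynomial, so by the cited fact that $R[y]/\langle g\rangle$ is a chain ring for such $g$ \cite{Dinh_2004} every $R^{(i)}$ is again a chain ring. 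The top of the tower is $T_C=R^{(s)}$, whose residue field is $\mathbb F_q(\nu_1,\ldots,\nu_s)$ and whose maximal ideal is generated by the image of $\gamma$ together with those of the $z_{C,i}$, as claimed.

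To obtain part (1) I would verify that $\{I_C\}_{C\in\mathcal C}$ is a pairwise comaximal family with $\bigcap_{C}I_C=\langle t_1,\ldots,t_s\rangle=:I$. Since each $q_{C,i}$ divides $t_i$ (Hensel lifting the factorization of $\bar t_i$) and $q_{C,i}\in\langle z_{C,i}\rangle\subseteq I_C$, one has $I\subseteq I_C$ for every $C$. Both comaximality and the equality $I=\bigcap_C I_C$ reduce modulo $\gamma$ to the corresponding statements about the distinct maximal ideals of $\overline{\mathscr R}$, and because $\gamma$ is nilpotent, hence in the Jacobson radical of $R[x_1,\ldots,x_s]$, Nakayama's lemma promotes each of these from a congruence modulo $\mathfrak m$ to an honest identity over $R$. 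The Chinese Remainder Theorem then gives $\mathscr R=R[x_1,\ldots,x_s]/I\cong\bigoplus_{C}R[x_1,\ldots,x_s]/I_C=\bigoplus_C T_C$, and the summand $\langle h_C+I\rangle$ is identified with the $C$-component because $h_C$, by its construction in \eqref{72}, lies in $I_{C'}$ for every $C'\neq C$ while reducing to a unit modulo $I_C$.

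Parts (3) and (4) are then bookkeeping: the preimages under the CRT isomorphism of the standard idempotents of the product form a complete orthogonal set $\{e_C\}$ with $\mathscr R e_C=\langle h_C+I\rangle\cong T_C$, and choosing orbit representatives $\{C_i\}_{i=1}^k$ rewrites this as $\mathscr R\cong\bigoplus_{i=1}^k\mathscr R e_{C_i}\cong\bigoplus_{i=1}^k\mathscr R_i$, with the uniqueness of $r=\sum_i r_i e_{C_i}$ being exactly the directness of the sum. The main obstacle is the Hensel-lifting step together with the comaximality verification: one must check that the multivariable tower polynomials $w_{C,i}$, which a priori have coefficients in the extension field $\mathbb F_q(\nu_1,\ldots,\nu_{i-1})$, admit well-defined Hensel lifts $z_{C,i}$ over $R$ generating an ideal whose reduction is the intended maximal ideal, and that comaximality genuinely lifts through the nilpotent $\gamma$ rather than merely holding modulo $\mathfrak m$. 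Everything else is a mechanical application of the Chinese Remainder Theorem and the chain-ring tower lemma.
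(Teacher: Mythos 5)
Your proposal is correct and follows essentially the same route as the source the paper cites for this result (\cite{Moro_2006}): reduce modulo the maximal ideal to get a product of fields indexed by the Frobenius orbits $\mathcal C$, Hensel-lift the tower $p_{C,1}, w_{C,2},\ldots,w_{C,s}$ to build each chain ring $T_C$, verify comaximality of the $I_C$ to apply the Chinese Remainder Theorem, and read off the idempotents $e_C$ and the role of $h_C$. The paper itself offers no independent proof beyond the citation, and your reconstruction correctly identifies the two genuinely delicate points (well-definedness of the multivariable Hensel lifts and lifting comaximality through the nilpotent $\gamma$), which are exactly the steps carried out in detail in that reference.
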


A ring $A$ is called  \emph{uniserial} if all $A$-submodules of the module $A_A$ are linearly ordered by inclusion.  A ring is called {serial} if it is a direct sum of uniserial rings. Any uniserial ring is local, and a serial ring is uniserial if and only if it is local, see \cite[ Section 1.4]{Puninski_2001}. Note that any chain ring is a serial ring, thus, if $g(y)\in R[y]$ is a monic basic irreducible polynomial, then the quotient ring $R[y]/\langle g(y)\rangle$ is a serial ring. Finite commutative semi-simple rings are those rings that can be represented as finite direct sums of finite fields. The following proposition is straightforward from the previous paragraph.

 \begin{Proposition}\label{32}\
  \begin{enumerate}
    \item $\mathscr R$  is a serial ring.
    \item $\mathscr R$  is a principal ideal Frobenius ring.
    \item If R is a field, then $\mathscr R$  is a semi-simple ring.  
  \end{enumerate} 
 \end{Proposition}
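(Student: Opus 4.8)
The plan is to obtain all three items as formal consequences of the decomposition $\mathscr R\cong\bigoplus_{i=1}^k\mathscr R_i$ into chain rings furnished by Proposition \ref{13}(4), combined with the definitions of uniserial, serial and semi-simple rings recalled just above and the facts about chain rings collected in Section \ref{77-1}.

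For item (1), the first step is to note that for a commutative ring the submodules of the regular module $A_A$ coincide with its ideals, so $A$ is uniserial exactly when its ideals are linearly ordered by inclusion. Each $\mathscr R_i$ is a chain ring, whose ideals form the chain $0\subset\mathfrak m_i^{\,e-1}\subset\cdots\subset\mathfrak m_i\subset\mathscr R_i$; hence each $\mathscr R_i$ is uniserial, and a finite direct sum of uniserial rings is serial by definition. This settles (1) immediately.

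For item (2) I would treat the two properties separately. A chain ring is a principal ideal ring because each of its ideals is a power $\mathfrak m_i^{\,j}=\langle\gamma_i^{\,j}\rangle$ of its principal maximal ideal; in the finite product every ideal splits as $\bigoplus_i I_i$ with $I_i\trianglelefteq\mathscr R_i$ (using the orthogonal idempotents $e_{C_i}$), so a componentwise choice of generators $I_i=\langle a_i\rangle$ yields $\bigoplus_i I_i=\langle(a_1,\ldots,a_k)\rangle$; thus $\mathscr R$ is a principal ideal ring. For the Frobenius property I would use that every finite chain ring is a local Frobenius ring — its socle equals $\mathfrak m_i^{\,e-1}$, a simple module isomorphic to the residue field $\mathscr R_i/\mathfrak m_i$, which is precisely the Frobenius condition in the local case — together with the fact that a finite direct product of Frobenius rings is again Frobenius, since the character module $\widehat{\mathscr R}$ splits as the product of the character modules of the factors.

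For item (3), when $R$ is a field its maximal ideal vanishes, so $\gamma=0$ and the nilpotency index is $e=1$. By Proposition \ref{13}(2) the maximal ideal of $\mathscr R_i\cong R[x_1,\ldots,x_s]/I_{C_i}$ is the image of $\langle\gamma,q_{C,1},z_{C,2},\ldots,z_{C,s}\rangle$; but $q_{C,1},z_{C,2},\ldots,z_{C,s}$ already generate $I_{C_i}$ and $\gamma=0$, so this maximal ideal is zero and each $\mathscr R_i$ is a field (isomorphic to $\mathbb F_q(\nu_1,\ldots,\nu_s)$). Hence $\mathscr R$ is a finite direct sum of fields, i.e.\ semi-simple. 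The only genuinely non-formal step, and the one I expect to be the main obstacle, is the Frobenius claim in (2): one must separately justify that finite chain rings are Frobenius and that the Frobenius property is inherited by finite direct products; everything else follows mechanically once the chain-ring decomposition of Proposition \ref{13} is in hand.
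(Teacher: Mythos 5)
Your proposal is correct and follows essentially the same route as the paper, which simply declares the proposition ``straightforward from the previous paragraph'', i.e.\ from the decomposition $\mathscr R\cong\bigoplus_{i=1}^k\mathscr R_i$ into chain rings of Proposition \ref{13} together with the standard facts that chain rings are uniserial, principal ideal, Frobenius (and fields when $\gamma=0$), and that these properties pass to finite direct sums. Your write-up just supplies the details the authors left implicit.
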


\section{Polycyclic codes over $\mathscr R$}\label{28-1}

A subset $\mathscr{C}\subseteq \mathscr R^n$  is called a \emph{linear code over} $\mathscr R$ (or an $\mathscr R$-linear code) of length $n$ if it is an $\mathscr R$-submodule of $\mathscr R^n$. It is called  \emph{free code} if it is a free $\mathscr{R}$-submodule. Throughout this paper 
\begin{equation}\label{79}
    f(x)=x^n-(f_{n-1}x^{n-1}+\cdots+f_1x+f_0)
\end{equation}
will denote a monic polynomial in $\mathscr R[x]$. An $f(x)$-polycyclic shift $\tau_{f(x)} : \mathscr{R}^n \rightarrow \mathscr{R}^n$ is defined by 
\begin{equation}\label{90}
\tau_{f(x)}(b_0, b_1, \ldots, b_{n-1}) = (0, b_0, b_1, \ldots, b_{n-2}) + b_{n-1}(f_0, f_1, \ldots, f_{n-1}).  
\end{equation}
An $\mathscr R$-linear code of length $n$ is called \emph{$f(x)$-polycyclic code} if it remains invariant under the $f(x)$-polycyclic shift $\tau_{f(x)}$.

According to Proposition \ref{13}, every component \(c_j\in \mathscr R\) in   \(\boldsymbol c = (c_0, c_1, \ldots, c_{n-1}) \in \mathscr C\subseteq \mathscr R^n\) has a unique representation  \(c_j = c_{1,j} e_{C_1} + c_{2,j} e_{C_2}+ \cdots+ c_{k,j} e_{C_k}\), where $c_{i,j}\in \mathscr R$ for all $i,j$, $1\leqslant i\leqslant k$,\, $0\leqslant j\leqslant n-1$. Therefore, $\boldsymbol c$ can be  expressed as
 \begin{equation}\label{33}
  \boldsymbol c = (c_{1,0}, c_{1,1}, c_{1,2},\ldots, c_{1,n-1})e_{C_1}+ (c_{2,0}, c_{2,1}, c_{2,2},\ldots, c_{2,n-1})e_{C_2}+\cdots+ (c_{k,0}, c_{k,1}, c_{k,2},\ldots, c_{k,n-1})e_{C_k},            
\end{equation}
which we  will denote  as $\boldsymbol c=\boldsymbol c_1 e_{C_1}+ \boldsymbol c_2 e_{C_2}+\cdots+ \boldsymbol c_k e_{C_k}$.
For a  code $\mathscr C$ of length $n$ over $\mathscr R$, we define the subsets $\mathscr{C}_{i}$ of $\mathscr{R}^{n}$ for $1 \leqslant i \leqslant k$ as follows:
\begin{equation*}
    \mathscr C_{i} = \{ \boldsymbol c_i \in \mathscr R^n \mid \boldsymbol c_1 e_{C_1}+ \boldsymbol c_2 e_{C_2}+\cdots+ \boldsymbol c_k e_{C_k} \in \mathscr C\}.
\end{equation*} The following result is straightforward.
 \begin{Theorem}
     Let $\mathscr C$ be a linear code of length $n$ over $\mathscr R$. Then, each $\mathscr C_i$ is a linear code over $\mathscr R$ and $\mathscr C \cong \oplus_{i=1}^k\mathscr C_ie_{C_i} $. 
 \end{Theorem}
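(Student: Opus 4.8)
The plan is to exploit the complete set of orthogonal idempotents $\{e_{C_1}, \ldots, e_{C_k}\}$ furnished by Proposition \ref{13}, which satisfy $e_{C_i}^2 = e_{C_i}$, $e_{C_i}e_{C_j}=0$ for $i\neq j$, and $\sum_{i=1}^k e_{C_i}=1$. Applying these idempotents coordinatewise to $\mathscr R^n$, the ambient module splits as $\mathscr R^n=\bigoplus_{i=1}^k(\mathscr Re_{C_i})^n$. The first fact I would record is that, by Equation \eqref{33} together with the orthogonality relations, $\boldsymbol c\, e_{C_i}=\boldsymbol c_i e_{C_i}$ for every $\boldsymbol c\in\mathscr R^n$; this piece lies in $(\mathscr Re_{C_i})^n$ and is uniquely determined, even though the representative $\boldsymbol c_i\in\mathscr R^n$ is only determined modulo the annihilator of $e_{C_i}$. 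This is the technical point that makes the definition of $\mathscr C_i$ meaningful: everything of interest about $\boldsymbol c_i$ enters solely through the product $\boldsymbol c_i e_{C_i}$.

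To show each $\mathscr C_i$ is an $\mathscr R$-submodule I would argue straight from the definition. Given $\boldsymbol a_i,\boldsymbol b_i\in\mathscr C_i$, pick codewords $\boldsymbol a,\boldsymbol b\in\mathscr C$ having $\boldsymbol a_i,\boldsymbol b_i$ as their $i$-th components; since $\mathscr C$ is a submodule, $\boldsymbol a+\boldsymbol b\in\mathscr C$, and multiplying by $e_{C_i}$ gives $(\boldsymbol a+\boldsymbol b)e_{C_i}=(\boldsymbol a_i+\boldsymbol b_i)e_{C_i}$, so $\boldsymbol a_i+\boldsymbol b_i\in\mathscr C_i$. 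Likewise, for $r\in\mathscr R$ and $\boldsymbol c_i\in\mathscr C_i$ arising from $\boldsymbol c\in\mathscr C$, the codeword $r\boldsymbol c\in\mathscr C$ satisfies $(r\boldsymbol c)e_{C_i}=(r\boldsymbol c_i)e_{C_i}$, whence $r\boldsymbol c_i\in\mathscr C_i$. Both closure checks use only the idempotent relations and the submodule property of $\mathscr C$, so nothing subtle happens here.

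For the decomposition I would introduce the $\mathscr R$-linear map $\Psi:\mathscr C\to\bigoplus_{i=1}^k\mathscr C_ie_{C_i}$ sending $\boldsymbol c\mapsto(\boldsymbol c\,e_{C_1},\ldots,\boldsymbol c\,e_{C_k})$, which is well defined because each $\boldsymbol c\,e_{C_i}=\boldsymbol c_ie_{C_i}$ belongs to $\mathscr C_ie_{C_i}$ and is unique. Injectivity follows from $\boldsymbol c=\sum_{i=1}^k\boldsymbol c\,e_{C_i}$, using $\sum_i e_{C_i}=1$, so a codeword is reconstructed from its pieces. For surjectivity, an arbitrary tuple $(\boldsymbol d_1 e_{C_1},\ldots,\boldsymbol d_k e_{C_k})$ with $\boldsymbol d_i\in\mathscr C_i$ admits a preimage: writing $\boldsymbol d_ie_{C_i}=\boldsymbol c^{(i)}e_{C_i}$ for suitable $\boldsymbol c^{(i)}\in\mathscr C$, the element $\boldsymbol c=\sum_{i=1}^k\boldsymbol c^{(i)}e_{C_i}\in\mathscr C$ satisfies $\boldsymbol c\,e_{C_j}=\boldsymbol c^{(j)}e_{C_j}=\boldsymbol d_je_{C_j}$ by orthogonality, so $\Psi(\boldsymbol c)$ is the given tuple. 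Directness of $\bigoplus_i\mathscr C_ie_{C_i}$ is inherited from $\mathscr R^n=\bigoplus_i(\mathscr Re_{C_i})^n$, since $\mathscr C_ie_{C_i}\subseteq(\mathscr Re_{C_i})^n$.

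The only place demanding care is maintaining the distinction between the genuinely unique projection $\boldsymbol c\,e_{C_i}$ and the non-unique representative $\boldsymbol c_i$. Once one verifies that $\mathscr C_i$ and every map above depend on $\boldsymbol c_i$ only through $\boldsymbol c_ie_{C_i}$, the statement collapses to the standard fact that a module over a finite product of rings decomposes compatibly with the idempotent decomposition of the base ring, confirming that the theorem is indeed straightforward.
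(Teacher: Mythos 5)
Your proof is correct and is precisely the argument the paper has in mind: the paper states this result without proof, declaring it ``straightforward,'' and your idempotent-based verification (including the careful observation that only the products $\boldsymbol c_i e_{C_i}$ are uniquely determined) supplies exactly the omitted details. The only cosmetic remark is that the symbol $\Psi$ is already reserved in Section~\ref{7} for the map $\Phi\circ\phi^{-1}$, so your isomorphism should be given a different name.
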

Similarly to Equation~\eqref{33}, one can rewrite the polycyclic relationship as follows: consider the decomposition by the orthogonal idempotents  $\boldsymbol f_i = (f_{i,0}, f_{i,1}, \ldots, f_{i,n-1})\in\mathscr R^n$ with \(\boldsymbol f = (f_0, f_1, \ldots, f_{n-1})\in\mathscr R^n\), and one can write $\boldsymbol f=\boldsymbol f_1 e_{C_1}+ \boldsymbol f_2 e_{C_2}+\cdots+ \boldsymbol f_k e_{C_k}.$ Corresponding to each $\boldsymbol{f}_i$, we define the polynomial
\begin{equation}\label{34}
    f_i(x)= x^n - (f_{i,n-1}x^{n-1} + \cdots + f_{i,1}x + f_{i,0}).
\end{equation}
 \begin{Lemma}\label{3}
   Let $\mathscr C$ be a linear code over $\mathscr R$. Then, $\tau_{f(x)}( \boldsymbol c) \in \mathscr C$ if and only if $\tau_{f_i(x)}(\boldsymbol c_i)\in \mathscr C_i$ for all $1\leqslant i\leqslant k$.
 \end{Lemma}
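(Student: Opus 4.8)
The plan is to reduce the whole statement to a single algebraic identity, namely that the polycyclic shift is compatible with the idempotent decomposition of Proposition~\ref{13}(4), and then to read off both implications at once from the decomposition $\mathscr C \cong \oplus_{i=1}^k \mathscr C_i e_{C_i}$ of the preceding Theorem. The identity I would aim for is
\[
  \tau_{f(x)}(\boldsymbol c) \;=\; \sum_{i=1}^{k} \tau_{f_i(x)}(\boldsymbol c_i)\, e_{C_i}.
\]

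First I would establish this identity by expanding definition~\eqref{90}. Writing $c_j = \sum_{i} c_{i,j} e_{C_i}$ and $f_j = \sum_{i} f_{i,j} e_{C_i}$ (as in Equation~\eqref{33} and the decomposition of $\boldsymbol f$), the shift part splits coordinatewise as $(0, c_0, \ldots, c_{n-2}) = \sum_i (0, c_{i,0}, \ldots, c_{i,n-2})\, e_{C_i}$, since each coordinate decomposes along the $e_{C_i}$. For the correction term $c_{n-1}(f_0, \ldots, f_{n-1})$, the key point is the orthogonality relations $e_{C_i} e_{C_l} = \delta_{il} e_{C_i}$ from Proposition~\ref{13}(3): they force the product $c_{n-1} f_j = \bigl(\sum_i c_{i,n-1} e_{C_i}\bigr)\bigl(\sum_l f_{l,j} e_{C_l}\bigr)$ to collapse to $\sum_i c_{i,n-1} f_{i,j}\, e_{C_i}$, killing all cross terms. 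Adding the two pieces and grouping by $e_{C_i}$, the expression attached to each $e_{C_i}$ is exactly $(0, c_{i,0}, \ldots, c_{i,n-2}) + c_{i,n-1}(f_{i,0}, \ldots, f_{i,n-1})$, which is precisely $\tau_{f_i(x)}(\boldsymbol c_i)$ by definition~\eqref{90} applied with the polynomial $f_i(x)$ of~\eqref{34}. This yields the displayed identity, and I expect this orthogonality bookkeeping in the correction term to be the only place requiring genuine care.

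With the identity in hand, the lemma is immediate from the uniqueness of the decomposition. By the preceding Theorem, a vector $\boldsymbol d = \sum_i \boldsymbol d_i e_{C_i}$ lies in $\mathscr C$ if and only if each component $\boldsymbol d_i$ lies in $\mathscr C_i$; this uses that $\mathscr C$ is an $\mathscr R$-submodule, so multiplication by $e_{C_i}$ maps $\mathscr C$ into itself and isolates the $i$-th component. Applying this to $\boldsymbol d = \tau_{f(x)}(\boldsymbol c)$, whose unique idempotent components are $\tau_{f_i(x)}(\boldsymbol c_i)$ by the identity, gives
\[
  \tau_{f(x)}(\boldsymbol c) \in \mathscr C
  \iff \tau_{f_i(x)}(\boldsymbol c_i) \in \mathscr C_i \text{ for all } 1 \leqslant i \leqslant k,
\]
which is the claim. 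Both directions are obtained simultaneously, so no separate argument is needed for the converse.
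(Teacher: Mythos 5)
Your proposal is correct and follows essentially the same route as the paper: expand $\tau_{f(x)}(\boldsymbol c)$ via definition~\eqref{90}, use the orthogonality of the idempotents to collapse the cross terms in $c_{n-1}(f_0,\ldots,f_{n-1})$, and regroup to obtain $\tau_{f(x)}(\boldsymbol c)=\sum_{i=1}^{k}\tau_{f_i(x)}(\boldsymbol c_i)\,e_{C_i}$, from which both implications follow by uniqueness of the decomposition. Your closing remark spelling out why membership in $\mathscr C$ is equivalent to componentwise membership in the $\mathscr C_i$ is slightly more explicit than the paper, which states this conclusion without comment.
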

 \begin{proof}
   Let $\boldsymbol c = (c_0, c_1, \ldots, c_{n-1}) \in \mathscr C$. Since $\{e_{C_1,},\ldots, e_{C_k}\}$ is a complete   set of orthogonal idempotents,  we have
   \begin{align*}
    \tau_{f(x)}(\boldsymbol c) &= (0, c_0, c_1, \ldots, c_{n-2}) + c_{n-1}( f_0, f_1, \ldots, f_{n-1})\\
                            &= \bigg(0, \sum_{i=1}^{k} c_{i,0}e_{C_i}, \sum_{i=1}^{k} c_{i,1}e_{C_i}, \ldots, \sum_{i=1}^{k} c_{i,n-2}e_{C_i}\bigg) + \sum_{i=1}^{k} c_{i,n-1}e_{C_i}\bigg(\sum_{i=1}^{k}f_{i,0}e_{C_i}, \sum_{i=1}^{k}f_{i,1}e_{C_i}, \ldots, \sum_{i=1}^{k}f_{i,n-1}e_{C_i}\bigg)\\
                            &= \bigg(0, \sum_{i=1}^{k} c_{i,0}e_{C_i}, \sum_{i=1}^{k} c_{i,1}e_{C_i}, \ldots, \sum_{i=1}^{k} c_{i,n-2}e_{C_i}\bigg) + \bigg(\sum_{i=1}^{k}c_{i,n-1}f_{i,0}e_{C_i}, \sum_{i=1}^{k}c_{i,n-1}f_{i,1}e_{C_i}, \ldots, \sum_{i=1}^{k}c_{i,n-1}f_{i,n-1}e_{C_i}\bigg)\\
                            &=  \sum_{i=1}^{k}\bigg( \big( 0, c_{i,0}, c_{i,1},\ldots, c_{i,n-2} \big)+c_{i,n-1}\big( f_{i,0}, f_{i,1}, \ldots, f_{i,n-1} \big)\bigg)e_{C_i} \\
                            &= \sum_{i=1}^{k}\bigg( \tau_{f_i(x)}(c_{i,0}, c_{i,1},\ldots, c_{i,n-1})\bigg)e_{C_i}=\sum_{i=1}^{k}\bigg(\tau_{f_i(x)}(\boldsymbol c_i)\bigg)e_{C_i}.
  \end{align*}
  Therefore, $\tau_{f(x)}( \boldsymbol c) \in \mathscr C$ if and only if $\tau_{f_i(x)}(\boldsymbol c_i)\in \mathscr C_i$ for all $1\leqslant i\leqslant k$.
 \end{proof}
One can consider the $\mathscr R$-module isomorphism $\phi:\mathscr R^n\to \mathscr R_f=\mathscr R[x]/\langle f(x)\rangle$, which maps each vector to its corresponding polynomial representation. It was proved in \cite{Dougherty_2017} that the image of any $f(x)$-polycyclic code under $\phi$ is an ideal in $\mathscr R_f$. From now on, we abuse the notation $\mathscr{C}$ and we will also call the ideals in the ring $\mathscr{R}_f$ $f(x)$-polycyclic codes. The ring  $\mathscr R_f$ might not always be a principal ideal ring and d a single codeword generator is a nice property for encoding and decoding. Therefore, we determine the condition under which the ring $\mathscr R_f$ is a principal ideal ring.

\begin{Proposition}\label{21}\
\begin{enumerate}
    \item The ring $\mathscr R[x]/\langle f(x)\rangle$ is isomorphic to $\oplus_{i=1}^k \mathscr R_i[x]/\langle f_i(x)\rangle$, where $\mathscr R_i$ represents the chain ring defined in Proposition \ref{13} and $f_i(x)$ denotes the polynomial defined in Equation~\eqref{34}.
    \item The ring $\mathscr R[x]/\langle f(x)\rangle$ is principal ideal if and only if  $\mathscr R_i[x]/\langle f_i(x)\rangle$ is a principal ideal ring for each $i\in \{1,\ldots ,k\}$.
\end{enumerate}  
 \end{Proposition}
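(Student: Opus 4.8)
The plan is to lift the orthogonal idempotent decomposition of $\mathscr R$ from Proposition~\ref{13} to the polynomial ring $\mathscr R[x]$ and then pass to the quotient. For Part~(1), I would start from $\mathscr R \cong \bigoplus_{i=1}^k \mathscr R e_{C_i}$, where $\{e_{C_1},\ldots,e_{C_k}\}$ is the complete set of orthogonal idempotents. Since these idempotents are central and satisfy $\sum_i e_{C_i}=1$, they act coefficient-wise on polynomials, so $\mathscr R[x] = \bigoplus_{i=1}^k \mathscr R[x]\,e_{C_i}$ with $\mathscr R[x]\,e_{C_i} \cong (\mathscr R e_{C_i})[x] \cong \mathscr R_i[x]$. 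The essential step is to track the generator: because $x^n = x^n\sum_i e_{C_i}$ and $f_j = \sum_i f_{i,j}e_{C_i}$, one has $f(x) = \sum_{i=1}^k f_i(x)\,e_{C_i}$, and under the identification $\mathscr R[x]e_{C_i}\cong \mathscr R_i[x]$ the element $f(x)e_{C_i}$ corresponds exactly to the polynomial $f_i(x)$ of Equation~\eqref{34} (now viewed over $\mathscr R_i$ via $\mathscr R e_{C_i}\cong\mathscr R_i$).

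Once this is in place, the ideal decomposes as
\[
\langle f(x)\rangle = f(x)\,\mathscr R[x] = \bigoplus_{i=1}^k f(x)e_{C_i}\,\mathscr R[x]e_{C_i} \cong \bigoplus_{i=1}^k f_i(x)\,\mathscr R_i[x] = \bigoplus_{i=1}^k \langle f_i(x)\rangle,
\]
where the middle equalities use $e_{C_i}e_{C_j}=\delta_{ij}e_{C_i}$. Quotienting $\mathscr R[x]=\bigoplus_i\mathscr R[x]e_{C_i}$ by this ideal summand by summand yields $\mathscr R_f \cong \bigoplus_{i=1}^k \mathscr R_i[x]/\langle f_i(x)\rangle$, which is Part~(1).

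For Part~(2), I would invoke the general ring-theoretic fact that a finite direct sum $\bigoplus_{i=1}^k A_i$ of commutative unital rings is a principal ideal ring if and only if each $A_i$ is. The forward direction follows because $A_i \cong \big(\bigoplus_j A_j\big)\big/\big(\bigoplus_{j\neq i}A_j\big)$ is a quotient of a principal ideal ring, and quotients of principal ideal rings are again principal ideal rings; concretely, for an ideal $J\trianglelefteq A_i$ the ideal $A_1\times\cdots\times J\times\cdots\times A_k$ is principal, and reading off its $i$-th coordinate exhibits a generator of $J$. For the converse, every ideal of $\bigoplus_i A_i$ is of the form $\bigoplus_i J_i$ with $J_i\trianglelefteq A_i$, and writing $J_i=\langle g_i\rangle$ gives $\bigoplus_i J_i = \langle (g_1,\ldots,g_k)\rangle$. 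Taking $A_i = \mathscr R_i[x]/\langle f_i(x)\rangle$ and combining with Part~(1) then delivers Part~(2).

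Apart from the routine idempotent bookkeeping, the only point demanding genuine care is the identification $f(x)e_{C_i}\leftrightarrow f_i(x)$: one must use that $f$ is monic so that the leading term $x^n$ survives in every component (guaranteed by $\sum_i e_{C_i}=1$), ensuring each $f_i(x)$ is monic of degree $n$ and matches Equation~\eqref{34} rather than suffering a spurious degree drop. I expect this identification, together with the well-definedness of $f_i(x)$ as an element of $\mathscr R_i[x]$, to be the main obstacle; once it is confirmed, both parts follow formally from the direct-sum decomposition.
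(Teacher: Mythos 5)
Your proposal is correct and follows essentially the same route as the paper: both decompose coefficients via the complete set of orthogonal idempotents $\{e_{C_i}\}$ and define the componentwise isomorphism $\eta$ sending $g(x)$ to $(g_1(x),\ldots,g_k(x))$. You simply spell out the two points the paper leaves implicit, namely the identification $f(x)e_{C_i}\leftrightarrow f_i(x)$ needed for well-definedness modulo $\langle f(x)\rangle$, and the general fact about principal ideals in finite direct sums which the paper dismisses as straightforward.
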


 \begin{proof}\
\begin{enumerate}
    \item Let $g(x)=g_0+g_1x+\cdots+g_{n-1}x^{n-1}$ be an element in $\mathscr R[x]/\langle f(x)\rangle$. We know that the coefficients $g_j\in \mathscr R$ are uniquely expressed as $g_j = g_{1,j} e_{C_1} + g_{2,j} e_{C_2}+ \cdots+ g_{k,j} e_{C_k}$, i.e., $g_j$ can be uniquely expressed as $(g_{1,j},\, g_{2,j},\, \ldots,\, g_{k,j})$. Therefore, $g(x)=g_1(x)e_{C_1}+ g_2(x)e_{C_2}+\cdots+ g_k(x)e_{C_k}$, where $g_i(x)= g_{i,0}+ g_{i,1}x+ g_{i,2}x^2+\cdots+ g_{i,n-1}x^{n-1}\in\mathscr R_i[x]$. Hence, the polynomial $g(x)$ can also be represented as  $k$-tuples of polynomials $g(x)=\big(g_1(x), g_2(x),\ldots, g_k(x)\big)$.  Now, we define the ring isomorphism
    \begin{align}\label{24}
      \eta :\mathscr R_f=\frac{\mathscr R[x]}{\langle f(x)\rangle} &\longrightarrow \frac{\mathscr R_1[x]}{\langle f_1(x)\rangle}\oplus \frac{\mathscr R_2[x]}{\langle f_2(x)\rangle} \oplus \dots \oplus \frac{\mathscr R_k[x]}{\langle f_k(x)\rangle}, \\
      g(x) &\longmapsto \big(g_1(x),\, g_2(x),\, \dots,\, g_k(x)\big).\nonumber
    \end{align}
    \item It is straightforward.
\end{enumerate}
\end{proof}
\begin{Remark}\label{81}\
\begin{itemize}
\item The condition for the ring $\mathscr R_i[x]/\langle f_i(x)\rangle$ being a principal ideal ring in terms of the chain ring  $\mathscr R_i$ and the polynomial $f_i(x)$ can be found in \cite[Theorem 1]{Moro_2018b}.
\item If $f_i(x)$ is a regular basic irreducible polynomial in the ring $\mathscr R_i[x]$, then $\mathscr R_i[x]/\langle f_i(x)\rangle$ is a chain ring with the maximal ideal $\langle \gamma_i+\langle f_i(x)\rangle\rangle$, where $\gamma_i$ denotes the maximal ideal of the chain ring $\mathscr R_i$, see \cite[Lemma 3.1]{Dinh_2004}.
\end{itemize}
\end{Remark}
 Let $\mathscr C$ be an $f(x)$-polycyclic code over $\mathscr R$, we will denote the \emph{rank} of the code $\mathscr C$ as  $\operatorname{rank}_{\mathscr R}(\mathscr C)$, i.e. the minimum number of generators for spanning $\mathscr C$ as $\mathscr R$-linear combinations. We will also denote the \emph{minimum Hamming distance} of $\mathscr C$  as  $d(\mathscr C)$.
\begin{Theorem}\label{22} Let $\mathscr C$ be an $f(x)$-polycyclic code over $\mathscr R$.  Then we have the following statements.
 \begin{enumerate}
    \item There are $f_i(x)$-polycyclic codes $\mathscr C_i$ in $\mathscr R_i[x]/\langle f_i(x)\rangle$ such that  $\mathscr C\cong\oplus_{i=1}^k\mathscr C_i$. 
    \item $\mathscr C$ is an $f(x)$-polycyclic code if and only if every $\mathscr C_i$ is an $f_i(x)$-polycyclic code.
     \item $\mathscr C$ is generated by a single codeword if and only if every $\mathscr C_i$  is generated by a single codeword.
     \item If each $\mathscr C_i$ is generated by the polynomial $g_i(x)$, then $\mathscr C$ is generated by $g(x) = \oplus_{i=1}^k g_i(x)e_{C_i}$.
    \item $|\mathscr C| = \prod_{i=1}^{k} |\mathscr C_i|$.
    \item $ \operatorname{rank}_{\mathscr R}(\mathscr C) = \max\{\operatorname{rank}_{\mathscr R_i}(\mathscr C_i)\mid i = 1, \ldots, k\}$.
    \item $d(\mathscr C)=\min\{d(\mathscr C_i)\mid i = 1, \ldots, k\}$.
    \end{enumerate} 
\end{Theorem}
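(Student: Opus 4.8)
The plan is to reduce every statement to the CRT-type decomposition already in place: the Theorem preceding Lemma~\ref{3} gives $\mathscr{C} \cong \oplus_{i=1}^k \mathscr{C}_i e_{C_i}$ as $\mathscr{R}$-modules, Lemma~\ref{3} transports the polycyclic shift through the orthogonal idempotents, and the ring isomorphism $\eta$ of Proposition~\ref{21} identifies $\mathscr{R}_f$ with $\oplus_{i=1}^k \mathscr{R}_i[x]/\langle f_i(x)\rangle$. Parts (1) and (2) follow at once: an $\mathscr{R}$-submodule $\mathscr{C}$ is $\tau_{f(x)}$-invariant if and only if, by Lemma~\ref{3}, each $\mathscr{C}_i$ is $\tau_{f_i(x)}$-invariant, so the $\mathscr{C}_i$ are exactly $f_i(x)$-polycyclic codes and $\mathscr{C} \cong \oplus_i \mathscr{C}_i$ under $\eta$.

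For parts (3) and (4) I would work inside $\eta$. An ideal $\mathscr{C}$ corresponds to the external direct sum of the ideals $\mathscr{C}_i$, and a direct sum of ideals is principal precisely when each summand is; the generator of the $i$-th summand, pulled back through the idempotents, assembles into $g(x) = \oplus_i g_i(x) e_{C_i}$. Concretely, $e_{C_i}\, g(x) = g_i(x) e_{C_i}$ by orthogonality, so $\langle g(x)\rangle$ projects onto $\langle g_i(x)\rangle$ in each component and conversely contains every component, giving both directions of (3) and the explicit generator of (4). Part (5) is immediate from the module isomorphism $\mathscr{C} \cong \oplus_i \mathscr{C}_i$, which is in particular a bijection of finite sets, so $|\mathscr{C}| = \prod_i |\mathscr{C}_i|$.

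The parts that need a genuine argument are (6) and (7), and I expect the crux to be explaining why the rank takes the \emph{maximum} while the size takes the \emph{product}. For the lower bound on the rank, if $\{G_1, \ldots, G_m\}$ generates $\mathscr{C}$ over $\mathscr{R}$, then multiplying by $e_{C_i}$ (which realises the projection $\mathscr{R} \to \mathscr{R}_i$) shows $\{e_{C_i}G_1, \ldots, e_{C_i}G_m\}$ generates $\mathscr{C}_i$, so $m \geq \operatorname{rank}_{\mathscr{R}_i}(\mathscr{C}_i)$ for every $i$, whence $m \geq \max_i \operatorname{rank}_{\mathscr{R}_i}(\mathscr{C}_i)$. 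For the matching upper bound I would \emph{bundle} generators across components: writing $r = \max_i \operatorname{rank}_{\mathscr{R}_i}(\mathscr{C}_i)$, choosing generating sets $\{g_{i,1}, \ldots, g_{i,r_i}\}$ of each $\mathscr{C}_i$ padded by zeros to length $r$, and setting $G_j = \sum_i g_{i,j} e_{C_i}$, one checks that $A_j G_j = \sum_i a_{i,j} g_{i,j} e_{C_i}$ for $A_j = \sum_i a_{i,j} e_{C_i}$, so the $r$ elements $G_j$ generate $\mathscr{C}$. This packing is precisely why the product in (5) collapses to a maximum in (6).

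For part (7) I would compare supports through the idempotent decomposition, using that a coordinate $c_j = \sum_i c_{i,j} e_{C_i}$ vanishes if and only if all $c_{i,j}$ vanish. For $d(\mathscr{C}) \geq \min_i d(\mathscr{C}_i)$, any nonzero $\boldsymbol{c} \in \mathscr{C}$ has some nonzero component $\boldsymbol{c}_{i_1}$, and $\operatorname{supp}(\boldsymbol{c}_{i_1}) \subseteq \operatorname{supp}(\boldsymbol{c})$, so $\operatorname{wt}(\boldsymbol{c}) \geq d(\mathscr{C}_{i_1}) \geq \min_i d(\mathscr{C}_i)$. For the reverse inequality, taking the index $i_0$ achieving the minimum and a minimum-weight $\boldsymbol{c}_{i_0} \in \mathscr{C}_{i_0}$, the single-component codeword $\boldsymbol{c}_{i_0} e_{C_{i_0}} \in \mathscr{C}$ has weight exactly $\operatorname{wt}(\boldsymbol{c}_{i_0}) = d(\mathscr{C}_{i_0})$, realising the minimum. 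This isolation of one component is the dual phenomenon to the bundling used for the rank, and handling these two opposite behaviours is the main point of the proof.
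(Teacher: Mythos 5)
Your proposal is correct and follows essentially the same route as the paper: decompose through the idempotents and the isomorphism $\eta$ of Proposition~\ref{21}, reduce parts (1)--(5) to that decomposition, bundle padded generating sets across components for the rank, and isolate a single component for the minimum distance. If anything, your versions of (6) and (7) are slightly more careful than the paper's --- you justify the lower bound on the rank by projecting a generating set via $e_{C_i}$, and you correctly handle the possibility that some components of a minimum-weight codeword vanish, a point the paper's proof glosses over.
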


\begin{proof}\
\begin{enumerate}
    \item The ideal $\mathscr C$ in $\mathscr R[x]/\langle f(x)\rangle$ can be expressed as $\mathscr C = \eta(\mathscr C_1,  \mathscr C_2, \ldots,  \mathscr C_k)$, where each $\mathscr C_i$ is an ideal in $\mathscr R_i[x]/\langle f_i(x)\rangle$ and  $\eta$ is the isomorphism defined in \eqref{24}. Conversely, the set of ideals $\mathscr C_i$ in  $\mathscr R_i[x]/\langle f_i(x)\rangle$ corresponds to an ideal $\mathscr C := \eta^{-1}(\mathscr C_1, \mathscr C_2, \ldots, \mathscr C_k)$ in $\mathscr C$ in $\mathscr R[x]/\langle f(x)\rangle$.
    \item It follows directly from the proof of the first statement.
     \item We apply the second statement in Proposition \ref{21}.
     \item It follows from the proof of Proposition \ref{21}.
    \item It is straightforward from the first statement.
    \item Let $r_i=\operatorname{rank}_{\mathscr R_i}(\mathscr C_i)$ and $w_{1}^i, w_{2}^i, \ldots, w_{r_i}^i$ be a set of generators for $\mathscr C_i$. Let $r=\max\{r_1,r_2,\ldots, r_k\}$. Consider the set of vectors
      \begin{equation*}
       \left\{\eta^{-1}(v_{1}^1, v_{1}^2, \ldots, v_{1}^k), \eta^{-1}(v_{2}^1, v_{2}^2, \ldots, v_{2}^k), \ldots, \eta^{-1}(v_{r}^1, v_{r}^2, \ldots, v_{r}^k)\right\}
       \end{equation*}
      where $v_j^i$ is the $j$-th component of  $w_j^i$ and zeros are added as necessary to make each vector have $k$ components. This set spans $\mathscr C$. To see this, note that each generator $w_j^i$ can be mapped via $\eta$ to a vector in $\mathscr C$ by placing it in the $i$-th component and filling the other components with zeros. Thus, this set generates $\mathscr C$. Since we require $r$ vectors to generate $\mathscr C$ and there exists at least one index $i$ such that $r_i=r$, therefore it follows that $\operatorname{rank}_{\mathscr R}(\mathscr C) = \max\{\operatorname{rank}_{\mathscr R_i}(\mathscr C_i)\mid i = 1, \ldots, k\}$.
    \item Let $d_\star=\min\{d(\mathscr C_i)\mid i = 1, \ldots, k\}$. Then, there exists a $j\in \{1,\ldots ,k\}$ such that $d(\mathscr C_j) = d_\star$. Let $v_j$ be a minimum weight non-zero vector in $\mathscr C_j$. Consider the vector $\eta^{-1}(0, 0, \cdots, 0, v_j, 0, \ldots, 0)$, where $v_j$ is in the $j$-th component. This vector has Hamming weight $d_\star$, which implies $d(\mathscr C) \leqslant d_\star$.
       Now, let $v$ be a minimum weight vector in $\mathscr C$. Then $\eta(v)$ has components $(v_1, v_2, \ldots, v_k)$, where each $v_i$ belongs to $\mathscr C_i$. Since $d_\star$ is the minimum distance of $\mathscr C_j$, the weight of $v_j$ is at least $d_\star$. Thus, the weight of $\eta(v)$ is at least $d_\star$, which implies $d(\mathscr C) \geqslant d_\star$. Therefore, $d_\star = d(\mathscr C)$.  
   \end{enumerate}
\end{proof}

%%%%%%%%%%%%%%%%%%%%%%%%%%%%%%%%%%%%%%%%%%%%%%%%%%%%
\section{Description of polycyclic codes over $\mathscr R$ as $R$-submodules}\label{7}

 A monomial $\mathbf x^\alpha$ in $R[x_1,x_2,\ldots, x_s]$, where $\alpha = (\alpha_1, \alpha_2, \ldots, \alpha_s)\in (\mathbb{N}\cup\{0\})^s$,  is a product of the form $\mathbf x^\alpha= x_1^{\alpha_1}  x_2^{\alpha_2} \ldots x_s^{\alpha_s}$.   We fix the lexicographic ordering on the monomials induced from the lexicographic ordering on $(\mathbb{N}\cup\{0\})^s$, that is,  $\mathbf{x}^{\alpha}{\succ} \mathbf{x}^{\beta}$  if the leftmost nonzero entry in the vector difference $\alpha - \beta $ is positive. Any element $g(x_1, x_2, \ldots, x_s)\in R[x_1, x_2, \ldots, x_s]$ can be expressed uniquely as 
 \begin{equation}\label{11}
   g(x_1, x_2, \ldots, x_s) = \sum\limits_{\alpha \in (\mathbb{N}\cup\{0\})^s} g_{\alpha} \mathbf{x}^\alpha,
 \end{equation}
where finitely many coefficients $g_\alpha$ are non-zero.
The degree of the monomial $\mathbf{x}^\alpha$ is   $\alpha_1 + \alpha_2 + \cdots + \alpha_s$ and the degree of $ g(x_1, x_2, \ldots, x_s)$ is the maximum degree among the degrees of the monomials involved in $ g(x_1, x_2, \ldots, x_s)$. Note that, fixed the lexicographic ordering, the sorted list of monomials  can be represented as $\{\mathbf{x}^{\alpha^{(1)}}, \mathbf{
x}^{\alpha^{(2)}}, \ldots, \mathbf{x}^{\alpha^{(m)}}\}$, where $m=m_1m_2\ldots m_s$ and $\alpha^{(j)}_i < m_i$ the degree of $t_i(x_i)$ is a basis for  $\mathscr R$. Therefore, we can represent an element $ g(x_1, x_2, \ldots, x_k)\in \mathscr R$  as $\sum_{i=1}^{m} g_{i} \mathbf{x}^{\alpha^{(i)}}$.  We define the $R$-isomorphism $\Phi: \mathscr R \to R^{m}$, which maps   $g(x_1, x_2, \ldots, x_s)$ to the vector $(g_1, g_2,\ldots, g_{m})$. Therefore, $\Psi:=\Phi\circ \phi^{-1}: \mathscr R_f\to R^{mn}$ is an $R$-module isomorphism, where $\phi$ is defined in Section \ref{28-1}. Through this mapping, $f(x)$-polycyclic codes over $\mathscr R$ can be viewed as $R$-submodules of $R^{mn}$. Consider $\mathscr C$   an $f(x)$-polycyclic code, and let $c(x)=c_0+c_1x+c_2x^2+\cdots+c_{n-1}x^{n-1}$ be an element of $\mathscr C$. When $s=2$, we take the lexicographical order 
\begin{equation}
    \{1,\,x_1,\,\ldots\,, x_1^{m_1-1},\, x_2,\, x_1x_2,\, \ldots\,, x_1^{m_1-1}x_2,\, \ldots\,, x_2^{m_2-1}, \,x_1x_2^{m_2-1},\, \ldots\,, x_1^{m_1-1}x_2^{m_2-1}\}.
\end{equation}
Hence, $\Psi(c)$ is represented as the vector $(\Phi(c_0), \Phi(c_1), \ldots, \Phi(c_{n-1}))$, where
\begin{equation}
    \Phi(c_i)= (c^{0,0}_i, \ldots, c^{m_1-1,0}_i, c^{0,1}_i, \ldots, c^{m_1-1,1}_i, \ldots, c^{0,m_2-1}_i, \ldots, c^{m_1-1,m_2-1}_i)
\end{equation}
and $c^{u,v}_i$ are the coefficients of $x_1^ux_2^v$ for $0\leqslant u\leqslant m_1-1$  and $0\leqslant v\leqslant m_2-1$.
Each $\Phi(c_i)$  can be viewed as the following array

 \begin{equation}\label{8}
     \begin{pmatrix}
    c^{0,0}_i & \cdots & c^{m_1-1,0}_i \\
    c^{0,1}_i & \cdots & c^{m_1-1,1}_i \\
    \vdots & \ddots & \vdots \\
    c^{0,m_2-1}_i & \cdots & c^{m_1-1,m_2-1}_i \\
    \end{pmatrix}.
\end{equation}
Therefore, $\Psi(c)$ consist of $n$-subarrays arranged as follows:

\begin{equation}\label{61}
     \begin{pmatrix}
    \begin{array}{c|c|c|c}
        \begin{matrix}
            c^{0,0}_0 & \cdots & c^{m_1-1,0}_0 \\
            c^{0,1}_0 & \cdots & c^{m_1-1,1}_0 \\
            \vdots & \ddots & \vdots \\
            c^{0,m_2-1}_0 & \cdots & c^{m_1-1,m_2-1}_0 \\
        \end{matrix} & 
        \begin{matrix}
            c^{0,0}_1 & \cdots & c^{m_1-1,0}_1 \\
            c^{0,1}_1 & \cdots & c^{m_1-1,1}_1 \\
            \vdots & \ddots & \vdots \\
            c^{0,m_2-1}_1 & \cdots & c^{m_1-1,m_2-1}_1 \\
        \end{matrix} & \,\,\cdots \,\, &
        \begin{matrix}
            c^{0,0}_{n-1} & \cdots & c^{m_1-1,0}_{n-1} \\
            c^{0,1}_{n-1} & \cdots & c^{m_1-1,1}_{n-1} \\
            \vdots & \ddots & \vdots \\
            c^{0,m_2-1}_{n-1} & \cdots & c^{m_1-1,m_2-1}_{n-1} \\
        \end{matrix} \\
    \end{array}
\end{pmatrix}.
\end{equation}
\begin{Remark}\label{50}
    Since $\mathscr{C}$ is an $f(x)$-polycyclic code, it follows that $x_1c\in \mathscr{C}$, $x_2c\in \mathscr{C}$, and $xc\in \mathscr{C}$. The inclusion of $x_1c\in \mathscr{C}$ implies that the columns of each subarray in \eqref{61} are closed under $t_1(x_1)$-polycyclic shift. Similarly, the inclusion of $x_2c$ in $\mathscr{C}$ implies that the rows of each subarray in Equation \eqref{61} are closed under $t_2(x_2)$-polycyclic shift. Furthermore, the inclusion of $xc\in \mathscr{C}$ implies that the subarrays in Equation \eqref{61} themselves are closed under $f(x)$-polycyclic shift.
\end{Remark}

\begin{Definition}
   Let $m=m_1 m_2$. A length $m n$ linear code $C$ over the chain ring $R$ with codewords given as in Equation \eqref{61} is a called quasi-2-dimensional polycyclic (Q2DP) code if the columns of each subarray are closed under $t_1(x_1)$-polycyclic shift and the rows of each subarray are closed under $t_2(x_2)$-polycyclic shift.
\end{Definition}
The above definition is a generalization of quasi-2-dimensional cyclic (Q2DC) codes defined in \cite{Guneri_2016}. Similarly, we can define a multidimensional version of Q2DP codes and abbreviate them as QsDP.
\begin{Definition}
   A Q2DP code is called an $f(x)$-polycyclic-Q2DP code if the subarrays in Equation \eqref{61} are closed under the $f(x)$-polycyclic shift.
\end{Definition}
Similarly, we can define a multidimensional version of $f(x)$-polycyclic-Q2DP codes, termed as $f(x)$-polycyclic-QsDP codes. Based on Remark \ref{50}, we have the following theorem.
\begin{Theorem}\label{77}
  Any $f(x)$-polycyclic code over $\mathscr R$ is $R$-isomorphic to an $f(x)$-polycyclic-QsDP code.
\end{Theorem}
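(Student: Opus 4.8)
The plan is to show that the $R$-module isomorphism $\Psi=\Phi\circ\phi^{-1}:\mathscr R_f\to R^{mn}$ carries the ideal $\mathscr C$ onto an $f(x)$-polycyclic-QsDP code; since $\Psi$ is already an $R$-isomorphism, its restriction to $\mathscr C$ is automatically an $R$-isomorphism onto the image $\Psi(\mathscr C)$, which is an $R$-submodule of $R^{mn}$, i.e. a linear code of length $mn$ over $R$ whose codewords have the array shape of Equation~\eqref{61}. It therefore only remains to verify that $\Psi(\mathscr C)$ satisfies the closure conditions in the definition of an $f(x)$-polycyclic-QsDP code.

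First I would record the algebraic reformulation of the hypothesis. The $R$-algebra $\mathscr R_f$ is generated by $x_1,\dots,x_s$ together with $x$, so an $R$-submodule of $\mathscr R_f$ is an ideal precisely when it is invariant under multiplication by each of $x_1,\dots,x_s$ and by $x$. Being an $f(x)$-polycyclic code, $\mathscr C$ is such an ideal and hence enjoys all $s+1$ of these invariances; this is exactly the content of the inclusions $x_ic\in\mathscr C$ and $xc\in\mathscr C$ noted in Remark~\ref{50}.

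The heart of the argument is to transport each of these multiplications through $\Psi$. Fix a codeword $c=\sum_{j=0}^{n-1}c_jx^j$ with $c_j\in\mathscr R$, so that $\Psi(c)$ is the array of $n$ subarrays. Expanding $c_j=\sum_\alpha c_j^\alpha\mathbf x^\alpha$ in the fixed lexicographic basis, multiplication by $x_i$ sends $x_i^{\alpha_i}\mapsto x_i^{\alpha_i+1}$ and reduces the overflow $x_i^{m_i}$ via $t_i(x_i)$; because the remaining variables are left untouched, along each line of the array running in the $i$-th coordinate direction this is exactly the companion-matrix action of the $t_i(x_i)$-polycyclic shift. Thus $\Psi$ conjugates multiplication by $x_i$ into the $t_i(x_i)$-polycyclic shift along the $i$-th dimension of every subarray, which for $s=2$ is the row/column statement of Remark~\ref{50}. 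Likewise, reducing $xc=\sum_j c_jx^{j+1}$ modulo $f(x)$ gives $\tau_{f(x)}(c_0,\dots,c_{n-1})=(0,c_0,\dots,c_{n-2})+c_{n-1}(f_0,\dots,f_{n-1})$, so $\Psi$ conjugates multiplication by $x$ into the $f(x)$-polycyclic shift acting on the subarrays as whole blocks.

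Combining these translations, the invariance of $\mathscr C$ under multiplication by $x_1,\dots,x_s$ becomes closure of $\Psi(\mathscr C)$ under the $t_i(x_i)$-polycyclic shifts along all $s$ dimensions (the QsDP property), while its invariance under multiplication by $x$ becomes closure under the $f(x)$-polycyclic shift of the subarrays (the $f(x)$-polycyclic property). Hence $\Psi(\mathscr C)$ is an $f(x)$-polycyclic-QsDP code $R$-isomorphic to $\mathscr C$. The step I expect to require the most care, and the only genuinely delicate point, is the alignment of the lexicographic monomial ordering with the geometric layout of the $s$-dimensional array: one must confirm that the ordering places the coefficient of $\mathbf x^\alpha$ so that incrementing the exponent $\alpha_i$ advances one position along the $i$-th array direction, so that reduction modulo $t_i$ is genuinely the polycyclic shift along that direction and not a permuted variant. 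For $s=2$ this is read off directly from the layouts in Equation~\eqref{8} and Equation~\eqref{61}, and for general $s$ it is the routine, if notation-heavy, extension of that bookkeeping.
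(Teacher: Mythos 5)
Your proposal is correct and follows essentially the same route as the paper: the paper simply invokes Remark~\ref{50} (the invariance of the ideal $\mathscr C$ under multiplication by $x_1,\dots,x_s$ and $x$ translating, via $\Psi$, into the $t_i(x_i)$-polycyclic shifts on the array directions and the $f(x)$-polycyclic shift on the subarrays) and states the theorem as an immediate consequence. Your write-up merely makes explicit the conjugation of each multiplication through $\Psi$ and the bookkeeping of the lexicographic ordering, which the paper leaves implicit.
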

We know that the basis of the algebra $\mathscr R_f$ as an $\mathscr R$-module is $\{1, x,\ldots, x^{n-1}\}$. Recall that $\phi^{-1}$ maps the polynomial $g(x)=g_0+g_1x+\cdots+g_{n-1}x^{n-1}\in \mathscr R_f$ to $(g_0,g_1,\ldots, g_{n-1})$. The \emph{representation matrix} of $g(x)$ is a matrix whose rows are  $\phi^{-1}(g(x)), \phi^{-1}(xg(x)), \ldots, \phi^{-1}(x^{n-1}g(x))$.  It is easy to see that the representation matrix of $x$ is the companion matrix $C_f$. So, similar to \cite{Bajalan_2022, Permouth_2009}, we can prove that any $f(x)$-polycyclic code over $\mathscr{R}$, as an $\mathscr{R}$-submodule of $\mathscr{R}^n$,  is invariant under right multiplication by the companion matrix $C_f$. Now, our objective is to explore the relationship between $f(x)$-polycyclic codes (or equivalently, $f(x)$-polycyclic-QsDP codes) and invariant $R$-submodules of $ R^{mn}$.

The basis of $\mathscr R_f$ as an $\mathscr R$-module is $\{1, x,\ldots, x^{n-1}\}$ whereas its basis as an $R$-module is the set

\begin{equation}
   \mathcal B= \{  \mathbf{x}^{\alpha^{(j)}} , \mathbf{x}^{\alpha^{(j)}} x, \ldots, \mathbf{x}^{\alpha^{(j)}} x^{n-1} \,:\, 1\leqslant j\leqslant m\}.
\end{equation}
For example, suppose that $\mathscr R=R[x_1,x_2]/\langle t_1(x_1), t_2(x_2)\rangle$, where $t_1(x_1)=x_1^2-b_1x_1-b_0$,  $t_2(x_2)=x_2^2-c_1x_2-c_0$ and $f(x)=x^2-f_1x-f_0$. If we take the lexicographic order $\{1,x_1,x_2,x_1x_2\}$ over $\mathscr R$, then the basis of  $\mathscr R_f$ as an $R$-module is $\mathcal B=\{1, x_1, x_2, x_1x_2, x, x_1x, x_2x, x_1x_2x \}$. The representation matrix of $x$ with respect to $\mathcal B$ is a matrix whose rows consist of coefficients of polynomials $x, x_1x, x_2x, x_1x_2x, x^2, x_1x^2, x_2x^2, x_1x_2x^2$, respectively. 
We denote this matrix by  $C_F$ and let $\Phi(f_i)=(r_0^i, r_1^i, r_2^i, r_3^i)$ for $i=0,1$. Then, it is easy to see that 
\begin{equation*}
   C_F=\begin{pmatrix}
         O_{4\times 4} & I_{4\times 4}\\
         F_0 & F_1\\
       \end{pmatrix},
 \end{equation*}
 where $O_{4\times 4}$ denotes the zero $4\times 4$ matrix, $I_{4\times 4}$  the identity $4\times 4$ matrix and
 \begin{equation*}
    F_0=\begin{pmatrix}
        {\Phi(f_0)\bigg(I_{2\times 2}\otimes I_{2\times 2}\bigg)}\\
        {\Phi(f_0)\bigg(I_{2\times 2}\otimes C_{t_1}\bigg)} \\     
        {\Phi(f_0)\bigg(C_{t_2}\otimes I_{2\times 2}\bigg)}\\      
        {\Phi(f_0)\bigg(C_{t_2}\otimes C_{t_1}\bigg)}\\         
       \end{pmatrix},\quad
    F_1=\begin{pmatrix}
        {\Phi(f_1)\bigg(I_{2\times 2}\otimes I_{2\times 2}\bigg)}\\
        {\Phi(f_1)\bigg(I_{2\times 2}\otimes C_{t_1}\bigg)} \\     
        {\Phi(f_1)\bigg(C_{t_2}\otimes I_{2\times 2}\bigg)}\\      
        {\Phi(f_1)\bigg(C_{t_2}\otimes C_{t_1}\bigg)}\\         
       \end{pmatrix}.
 \end{equation*}
Here $C_{t_1}$ and $C_{t_2}$ are companion matrices associated  with $t_1(x_1)$ and $t_2(x_2)$, respectively. We notice that 
 \begin{equation*}
     O_{4\times 4}=\begin{pmatrix}
         {\Phi(0)\bigg(I_{2\times 2}\otimes I_{2\times 2}\bigg)}\\
         {\Phi(0)\bigg(I_{2\times 2}\otimes C_{t_1}\bigg)} \\     
         {\Phi(0)\bigg(C_{t_2}\otimes I_{2\times 2}\bigg)}\\      
         {\Phi(0)\bigg(C_{t_2}\otimes C_{t_1}\bigg)}\\         
       \end{pmatrix},\quad
    I_{4\times 4}=\begin{pmatrix}
        {\Phi(1)\bigg(I_{2\times 2}\otimes I_{2\times 2}\bigg)}\\
        {\Phi(1)\bigg(I_{2\times 2}\otimes C_{t_1}\bigg)} \\     
        {\Phi(1)\bigg(C_{t_2}\otimes I_{2\times 2}\bigg)}\\      
        {\Phi(1)\bigg(C_{t_2}\otimes C_{t_1}\bigg)}\\         
      \end{pmatrix}.
      \end{equation*}
If we denote by $\Phi(C_f)$ the matrix obtained by applying the map $\Phi$ to each entry of the companion matrix $C_f$, it becomes evident that rows 1 and 5 of $C_F$ form $\Phi(C_f)\otimes \big( I_{2\times 2}\otimes I_{2\times 2} \big)$, rows 2 and 6 form $\Phi(C_f)\otimes \big(I_{2\times 2}\otimes C_{t_1}\big)$, rows 3 and 7 form $\Phi(C_f)\otimes \big(C_{t_2}\otimes I_{2\times 2}\big)$, and rows 4 and 8 form $\Phi(C_f)\otimes\big(C_{t_2}\otimes C_{t_1}\big)$. Therefore, $C_F$ is equivalent to the following matrix:

\begin{equation}\label{30}
    \begin{pmatrix}
        {\Phi(C_f)\bigg(I_{2\times 2}\otimes I_{2\times 2}\bigg)}\\
        {\Phi(C_f)\bigg(I_{2\times 2}\otimes C_{t_1}\bigg)} \\     
        {\Phi(C_f)\bigg(C_{t_2}\otimes I_{2\times 2}\bigg)}\\      
        {\Phi(C_f)\bigg(C_{t_2}\otimes C_{t_1}\bigg)}\\         
       \end{pmatrix}= \Phi(C_f)\otimes
       \begin{pmatrix}
        {I_{2\times 2}\otimes I_{2\times 2}}\\
        {I_{2\times 2}\otimes C_{t_1}} \\     
        {C_{t_2}\otimes I_{2\times 2}}\\      
        {C_{t_2}\otimes C_{t_1}}\\ 
        \end{pmatrix}= \Phi(C_f)\otimes\begin{pmatrix}
        {C_{t_2}^0\otimes C_{t_1}^0}\\
        {C_{t_2}^0\otimes C_{t_1}} \\     
        {C_{t_2}\otimes C_{t_1}^0}\\      
        {C_{t_2}\otimes C_{t_1}}\\ 
        \end{pmatrix}:=\Phi(C_f)\otimes M.
\end{equation}
Thus, if we consider $\mathscr R_f$ as an $R$-module, then the representation matrix of $x$ is the matrix in Equation \eqref{30}. Hence,  keeping the above notations, the following results hold for the $s=2$ case. 

\begin{Theorem}
Let $\mathscr C$ be an $f(x)$-polycyclic code over $\mathscr R$. Then $\Psi(\mathscr C)$ is closed under the right multiplication by the matrix $\Phi(C_f)\otimes M$, defined in \eqref{30}.
\end{Theorem}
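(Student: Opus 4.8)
The plan is to exploit the fact that an $f(x)$-polycyclic code is an ideal of $\mathscr R_f$, and then to transport the action of multiplication by $x$ through the two successive change-of-basis isomorphisms $\phi^{-1}$ and $\Phi$. First I would recall that since $\mathscr C$ is an ideal of $\mathscr R_f$ it is closed under multiplication by $x$; equivalently, as already observed just before the theorem, the $\mathscr R$-submodule $\phi^{-1}(\mathscr C)\subseteq\mathscr R^n$ is invariant under right multiplication by the companion matrix $C_f$, whose entries lie in $\mathscr R$. Thus it suffices to show that refining the $\mathscr R$-module description into the $R$-module description via $\Phi$ converts right multiplication by $C_f$ into right multiplication by $\Phi(C_f)\otimes M$; the statement for $\mathscr C$ then follows immediately because $\Psi=\Phi\circ\phi^{-1}$.

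The central step is to establish the intertwining identity $\Psi(xc(x))=\Psi(c(x))\bigl(\Phi(C_f)\otimes M\bigr)$ for every $c(x)\in\mathscr R_f$. Writing $c(x)=\sum_{j=0}^{n-1}c_jx^j$ with $c_j\in\mathscr R$, multiplication by $x$ sends the coefficient vector $(c_0,\dots,c_{n-1})$ to $(c_0,\dots,c_{n-1})C_f$ over $\mathscr R$, so the new $j$-th coefficient is $\sum_i c_i(C_f)_{ij}$. Since multiplication by a fixed element $r\in\mathscr R$ is an $R$-linear endomorphism of $\mathscr R$, under $\Phi$ it is represented by a fixed $m\times m$ matrix $M_r$ over $R$ with respect to the monomial basis, that is $\Phi(c_ir)=\Phi(c_i)M_r$. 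Hence applying $\Phi$ coordinatewise turns right multiplication by $C_f$ into right multiplication by the block matrix whose $(i,j)$ block is $M_{(C_f)_{ij}}$, because the $j$-th output block is exactly $\sum_i\Phi(c_i)M_{(C_f)_{ij}}$.

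It then remains to identify this block matrix with $\Phi(C_f)\otimes M$. Here I would use that multiplication by a monomial $\mathbf x^{\alpha}=x_1^{\alpha_1}\cdots x_s^{\alpha_s}$ is precisely the iterated $t_i(x_i)$-polycyclic shift in each variable, and is therefore represented over $R$ by the tensor product $C_{t_s}^{\alpha_s}\otimes\cdots\otimes C_{t_1}^{\alpha_1}$ of the companion matrices $C_{t_i}$; these are exactly the blocks assembled in $M$ (in the $s=2$ case the rows $C_{t_2}^0\otimes C_{t_1}^0$, $C_{t_2}^0\otimes C_{t_1}$, $C_{t_2}\otimes C_{t_1}^0$, $C_{t_2}\otimes C_{t_1}$). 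Consequently each $M_r$ has rows $\Phi(r)\bigl(C_{t_2}^{a}\otimes C_{t_1}^{b}\bigr)$ ranging over the monomial basis, and the explicit computation of $C_F$ carried out in the paragraph preceding \eqref{30} shows that assembling the blocks $M_{(C_f)_{ij}}$ reproduces exactly $\Phi(C_f)\otimes M$. Combining the intertwining identity with the invariance of $\phi^{-1}(\mathscr C)$ under $C_f$ then yields that $\Psi(\mathscr C)$ is closed under right multiplication by $\Phi(C_f)\otimes M$.

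The main obstacle I anticipate lies in the bookkeeping of the last step: one must verify that the chosen lexicographic ordering of the $R$-basis $\mathcal B$ makes the representation matrix equal to $\Phi(C_f)\otimes M$ on the nose, rather than a permutation-conjugate of it, and that the tensor factorization of the monomial-multiplication matrices is compatible with the order in which the variables $x_1,\dots,x_s$ are nested inside $M$. Once the ordering conventions are pinned down exactly as in the worked example, the identification of the blocks is a routine verification.
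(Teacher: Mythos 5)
Your proposal is correct and follows essentially the same route as the paper, which proves this theorem implicitly through the computation preceding it: the code, being an ideal, is closed under multiplication by $x$, and the representation matrix of that multiplication with respect to the $R$-basis $\mathcal B$ is identified (row by row, via the matrices $M_r$ of multiplication by elements of $\mathscr R$) with $\Phi(C_f)\otimes M$. Your closing caveat about the ordering of $\mathcal B$ is well taken, since the paper itself only asserts that $C_F$ is \emph{equivalent} to $\Phi(C_f)\otimes M$ after regrouping rows, but this is the same bookkeeping the paper performs in its worked example.
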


\begin{Corollary}
    Any $f(x)$-polycyclic-Q2DP code is closed under the right multiplication by $\Phi(C_F)\otimes M$.
\end{Corollary}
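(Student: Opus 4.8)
The plan is to derive the Corollary directly from the preceding Theorem, together with the isomorphism $\Psi$ and the matrix identity established in Equation~\eqref{30}. The key observation is that, for $s=2$, an $f(x)$-polycyclic-Q2DP code is exactly the $\Psi$-image of a genuine $f(x)$-polycyclic code over $\mathscr R$: one direction is Theorem~\ref{77} specialised to $s=2$, and the converse follows from Remark~\ref{50}, where the three combinatorial closure conditions (columns under $t_1(x_1)$-shift, rows under $t_2(x_2)$-shift, subarrays under $f(x)$-shift) are shown to be equivalent to the membership relations $x_1 c,\, x_2 c,\, xc \in \mathscr C$. Since $x_1,x_2,x$ generate $\mathscr R_f$ as an $R$-algebra, stability under multiplication by all three is precisely the condition that $\phi^{-1}$ of the code be an ideal of $\mathscr R_f$, i.e.\ an $f(x)$-polycyclic code.

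First I would make this dictionary precise: given an arbitrary $f(x)$-polycyclic-Q2DP code $D\subseteq R^{mn}$, set $\mathscr C:=\Psi^{-1}(D)$ and verify, through the block-array description in Equation~\eqref{61} and Remark~\ref{50}, that $\mathscr C$ is an ideal of $\mathscr R_f$, hence an $f(x)$-polycyclic code over $\mathscr R$. Then I would invoke the preceding Theorem, which guarantees that $\Psi(\mathscr C)=D$ is closed under right multiplication by $\Phi(C_f)\otimes M$.

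The final step is the identification of matrices. Recall from Equation~\eqref{30} that the representation matrix $C_F$ of $x$ with respect to the $R$-basis $\mathcal B$ of $\mathscr R_f$ satisfies $C_F=\Phi(C_f)\otimes M$, where the outer factor $\Phi(C_f)$ realises the $f(x)$-polycyclic shift across the subarrays while the inner factor $M$ performs the reductions modulo $t_1(x_1)$ and $t_2(x_2)$ inside each subarray. Substituting this identity into the conclusion of the Theorem shows that $D$ is closed under right multiplication by $C_F=\Phi(C_f)\otimes M$, which is the assertion of the Corollary.

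The main obstacle I anticipate is the converse direction of the dictionary, namely that every code satisfying the purely combinatorial Q2DP-polycyclic axioms arises as $\Psi(\mathscr C)$ for an actual ideal $\mathscr C$. This forces one to track carefully how the fixed lexicographic ordering distributes the coefficients $c^{u,v}_i$ into the array of Equation~\eqref{61}, and to check that the column-, row-, and subarray-shift closures translate \emph{exactly} into stability under multiplication by $x_1$, $x_2$, and $x$ respectively. Once this translation is secured, the $R$-algebra generation remark makes $\mathscr C$ an ideal, and the remainder of the argument reduces to the substitution $C_F=\Phi(C_f)\otimes M$ already recorded in Equation~\eqref{30}.
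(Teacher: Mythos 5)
Your proposal is correct and follows the route the paper intends: the paper states this Corollary without proof as an immediate consequence of the preceding Theorem, relying on the (implicit) two-way dictionary between $f(x)$-polycyclic-Q2DP codes and ideals of $\mathscr R_f$ via $\Psi$, together with the identity $C_F=\Phi(C_f)\otimes M$ from Equation~\eqref{30}. Your explicit treatment of the converse direction of that dictionary (closure under the three shifts $\Leftrightarrow$ stability under multiplication by $x_1,x_2,x$, which generate $\mathscr R_f$ as an $R$-algebra, hence ideal) is precisely the detail the paper leaves to the reader, and it is sound.
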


Similarly, the above theorem and corollary are true for any value of $s$. It is worth noting that $\{1,x_1,x_2,x_1x_2\}$ forms the basis for $\mathscr R$, and the representation matrices for these basis elements are $\{C_{t_2}^0\otimes C_{t_1}^0, C_{t_2}^0\otimes C_{t_1}, C_{t_2}\otimes C_{t_1}^0, C_{t_2}\otimes C_{t_1}\}$, see \cite[Section 6]{Bajalan_2022}. We notice that the matrix $M$ is constructed based on this set. Therefore, for the general case $s$, the matrix $M$ can be constructed using the representation matrices of the basis elements of $\mathscr R$ and $R$-modules.

%%%%%%%%%%%%%%%%%%%%%%%%%%%%%%%%%%%%%%%%%%%%%%%%%%%%%%%%%%
\section{Euclidean dual of polycyclic codes over $\mathscr R$}\label{74}

A linear code $\mathscr C$ of length $n$ over $\mathscr R$ is called \emph{$f(x)$-sequential} if it is invariant under the $f(x)$-sequential shift $\nu_{f(x)} : \mathscr{R}^n \rightarrow \mathscr{R}^n$ defined by 
\begin{equation}
    \nu_{f(x)}(b_0, b_1, \ldots, b_{n-1}) = (b_1,b_2,\ldots, b_{n-2},d),
\end{equation}
where $d=f_0b_0+f_1b_1+\cdots+f_{n-1}b_{n-1}$. 
As usual,  the \emph{Euclidean product} of $x$ and $y$ in $\mathscr{R}^n$ is given by $\langle x, y \rangle = x \cdot y^{\text{tr}}$, where $y^{\text{tr}}$ denotes the transpose of $y$. The Euclidean dual code $\mathscr{C}^\perp$ of an $\mathscr{R}$-linear code $\mathscr{C}$ of length $n$ is defined as $\mathscr{C}^\perp := \{x \in \mathscr{R}^n \mid \langle x, y \rangle = 0,\, \forall y \in \mathscr{C}\}$.
It is easy to see that the operator  $\nu_{f(x)}$ is the adjoint operator of $\tau_{f(x)}$, that is $\langle \tau_{f(x)}(x), y \rangle= \langle x, \nu_{f(x)}(y) \rangle$. It is well-known that if a linear subspace is invariant under one operator, its Euclidean dual is invariant under the adjoint operator. Therefore, we have the following theorem.  
\begin{Theorem}\label{74}
The linear code $\mathscr C$ is an $f(x)$-polycyclic code over $\mathscr R$ if and only if $\mathscr C^{\perp}$ is an $f(x)$-sequential code.
\end{Theorem}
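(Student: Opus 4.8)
The plan is to derive both implications from the single algebraic identity already recorded above, namely that $\nu_{f(x)}$ is the adjoint of $\tau_{f(x)}$ with respect to the Euclidean product, $\langle \tau_{f(x)}(\boldsymbol x), \boldsymbol y\rangle = \langle \boldsymbol x, \nu_{f(x)}(\boldsymbol y)\rangle$ for all $\boldsymbol x, \boldsymbol y \in \mathscr R^n$. First I would isolate the abstract principle as a one-line lemma: if $T:\mathscr R^n\to\mathscr R^n$ is $\mathscr R$-linear with adjoint $T^\ast$, and $V\subseteq\mathscr R^n$ is a submodule with $T(V)\subseteq V$, then $T^\ast(V^\perp)\subseteq V^\perp$. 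The verification is immediate, since for $\boldsymbol y\in V^\perp$ and any $\boldsymbol v\in V$ one has $\langle \boldsymbol v, T^\ast(\boldsymbol y)\rangle = \langle T(\boldsymbol v), \boldsymbol y\rangle = 0$ because $T(\boldsymbol v)\in V$, whence $T^\ast(\boldsymbol y)\in V^\perp$. This step uses only bilinearity of the pairing and no structural hypothesis on $\mathscr R$.

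For the forward direction I would apply the lemma with $T=\tau_{f(x)}$, $T^\ast=\nu_{f(x)}$ and $V=\mathscr C$: if $\mathscr C$ is $f(x)$-polycyclic then $\tau_{f(x)}(\mathscr C)\subseteq\mathscr C$, hence $\nu_{f(x)}(\mathscr C^\perp)\subseteq \mathscr C^\perp$, which is exactly the assertion that $\mathscr C^\perp$ is $f(x)$-sequential. For the converse I would exploit the symmetry of the pairing to observe that $\tau_{f(x)}$ is in turn the adjoint of $\nu_{f(x)}$, since $\langle \nu_{f(x)}(\boldsymbol x), \boldsymbol y\rangle = \langle \boldsymbol y, \nu_{f(x)}(\boldsymbol x)\rangle = \langle \tau_{f(x)}(\boldsymbol y), \boldsymbol x\rangle = \langle \boldsymbol x, \tau_{f(x)}(\boldsymbol y)\rangle$. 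Applying the lemma again with the two operators interchanged, invariance of $\mathscr C^\perp$ under $\nu_{f(x)}$ yields invariance of $(\mathscr C^\perp)^\perp$ under $\tau_{f(x)}$.

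The one place where a genuine hypothesis is consumed, and the step I expect to be the crux, is closing the converse by identifying $(\mathscr C^\perp)^\perp$ with $\mathscr C$. This double-dual equality is not automatic over an arbitrary ring; it requires the Euclidean pairing on $\mathscr R^n$ to be non-degenerate in the strong sense that $(\mathscr C^\perp)^\perp=\mathscr C$ for every linear code $\mathscr C$. This is precisely where the fact that $\mathscr R$ is a Frobenius ring (Proposition \ref{32}) enters, Frobenius rings being exactly the finite rings over which this MacWilliams-type duality holds. Granting it, $(\mathscr C^\perp)^\perp=\mathscr C$ is $\tau_{f(x)}$-invariant, so $\mathscr C$ is $f(x)$-polycyclic, which completes the equivalence. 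I would flag explicitly that the first implication is valid over any finite commutative ring, whereas only the converse draws on the Frobenius property.
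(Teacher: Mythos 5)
Your proposal is correct and follows essentially the same route as the paper, which simply notes that $\nu_{f(x)}$ is the adjoint of $\tau_{f(x)}$ and invokes the standard fact that the dual of an invariant submodule is invariant under the adjoint. You have merely made explicit the details the paper leaves implicit, in particular that the converse rests on the double-dual identity $(\mathscr C^{\perp})^{\perp}=\mathscr C$, which holds because $\mathscr R$ is a finite Frobenius ring.
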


\begin{Theorem}{\cite[ Theorem 3.5]{Permouth_2009}}
   The codes $\mathscr C$ and $\mathscr C^{\perp}$ are $f(x)$-polycyclic codes if and only if they are both $f(x)$-sequential codes, if and only if $\mathscr C$ is a constacyclic code.
\end{Theorem}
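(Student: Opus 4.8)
The plan is to prove the chain of biconditionals by isolating two independent ingredients: the adjoint relationship between $\tau_{f(x)}$ and $\nu_{f(x)}$ together with the equivalence of Theorem~\ref{74} ($\mathscr C$ is $f(x)$-polycyclic if and only if $\mathscr C^{\perp}$ is $f(x)$-sequential), and the companion-matrix description of the two shifts. Throughout I would use that $\mathscr R$ is a Frobenius ring (Proposition~\ref{32}), which guarantees the double-dual identity $(\mathscr C^{\perp})^{\perp}=\mathscr C$.

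The equivalence between ``$\mathscr C$ and $\mathscr C^{\perp}$ are both $f(x)$-polycyclic'' and ``both are $f(x)$-sequential'' is the cheap part, and I would dispatch it first. By Theorem~\ref{74}, $\mathscr C$ is $f(x)$-polycyclic if and only if $\mathscr C^{\perp}$ is $f(x)$-sequential; applying the same theorem to the code $\mathscr C^{\perp}$ and using $(\mathscr C^{\perp})^{\perp}=\mathscr C$ gives that $\mathscr C^{\perp}$ is $f(x)$-polycyclic if and only if $\mathscr C$ is $f(x)$-sequential. Conjoining the two statements shows that ``$\mathscr C$ and $\mathscr C^{\perp}$ both polycyclic'' is literally the same condition as ``$\mathscr C$ and $\mathscr C^{\perp}$ both sequential,'' and that either one is equivalent to the single statement that $\mathscr C$ is invariant under \emph{both} $\tau_{f(x)}$ and $\nu_{f(x)}$, i.e.\ under right multiplication by $C_f$ and by $C_f^{\mathrm{tr}}$.

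For the equivalence with constacyclicity I would first reduce to the chain-ring components, exploiting the machinery already built in this paper. Writing $f=\bigoplus_{i=1}^{k} f_i e_{C_i}$ and $\mathscr C\cong\bigoplus_{i=1}^{k}\mathscr C_i$ as in Lemma~\ref{3} and Theorem~\ref{22}, one checks that the Euclidean form respects the orthogonal idempotents, $\langle\boldsymbol c,\boldsymbol d\rangle=\sum_i\langle\boldsymbol c_i,\boldsymbol d_i\rangle e_{C_i}$, so that $\mathscr C^{\perp}$ decomposes as $\bigoplus_i \mathscr C_i^{\perp}e_{C_i}$ with each $\mathscr C_i^{\perp}$ the Euclidean dual of $\mathscr C_i$ over the chain ring $\mathscr R_i$. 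By Theorem~\ref{22}(2) the joint condition holds globally if and only if each pair $(\mathscr C_i,\mathscr C_i^{\perp})$ is simultaneously $f_i(x)$-polycyclic over $\mathscr R_i$. Thus it suffices to prove, over a single chain ring, that $\mathscr C_i$ and $\mathscr C_i^{\perp}$ are both $f_i(x)$-polycyclic precisely when $f_i(x)=x^{n}-\lambda_i$ for a unit $\lambda_i$ and $\mathscr C_i$ is $\lambda_i$-constacyclic; assembling the components then yields $f(x)=x^{n}-\lambda$ with $\lambda=\sum_i\lambda_i e_{C_i}$ a unit of $\mathscr R$, i.e.\ $\mathscr C$ constacyclic. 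The forward implication over a component is a direct verification: for $f_i(x)=x^{n}-\lambda_i$ the shift $\tau_{f_i(x)}$ is the $\lambda_i$-constacyclic shift $\sigma_{\lambda_i}$, its inverse is exactly the sequential shift $\nu_{x^{n}-\lambda_i^{-1}}$, and the Euclidean dual of a $\lambda_i$-constacyclic code is $\lambda_i^{-1}$-constacyclic, so both codes are invariant under the relevant polycyclic and sequential shifts.

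The main obstacle is the converse on a component: showing that simultaneous invariance of a \emph{nontrivial} code under $C_{f_i}$ and $C_{f_i}^{\mathrm{tr}}$ forces the feedback vector $(f_{i,0},\ldots,f_{i,n-1})$ to collapse to $(\lambda_i,0,\ldots,0)$. Here I would argue that invariance under both matrices makes $\mathscr C_i$ invariant under the algebra they generate, and in particular under the commutator $C_{f_i}C_{f_i}^{\mathrm{tr}}-C_{f_i}^{\mathrm{tr}}C_{f_i}$; a direct computation of this commutator of a companion matrix with its transpose exhibits entries controlled by $f_{i,1},\ldots,f_{i,n-1}$ and by $f_{i,0}$, and the requirement that a nonzero proper submodule be stable under these low-rank perturbations (over the chain ring, argued on the residue field and lifted) forces the higher feedback coefficients to vanish and $f_{i,0}=\lambda_i$ to be a unit. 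The subtle point to handle carefully is the passage to the dual: the constacyclic constant of $\mathscr C_i^{\perp}$ is $\lambda_i^{-1}$ rather than $\lambda_i$, so the two polycyclic structures live over $x^{n}-\lambda_i$ and $x^{n}-\lambda_i^{-1}$ respectively, and this mismatch must be tracked so that the single polynomial $f_i$ is correctly identified.
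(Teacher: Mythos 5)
The paper does not actually prove this statement; it is quoted verbatim (with an ``$f(x)$-'' prefix added) from Theorem 3.5 of \cite{Permouth_2009}, so your attempt has to stand on its own. Your first step --- that ``$\mathscr C$ and $\mathscr C^{\perp}$ both $f(x)$-polycyclic'' is equivalent to ``both $f(x)$-sequential,'' via Theorem \ref{74} applied to $\mathscr C$ and to $\mathscr C^{\perp}$ together with $(\mathscr C^{\perp})^{\perp}=\mathscr C$ over a Frobenius ring --- is correct, and the reduction to chain-ring components through the idempotents is harmless. The gap is in your converse on a component: you set out to show that joint invariance of a nontrivial code under $C_{f_i}$ and $C_{f_i}^{\mathrm{tr}}$ forces the feedback vector to collapse to $(\lambda_i,0,\ldots,0)$, i.e.\ that $f_i(x)=x^n-\lambda_i$. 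That is not what the theorem asserts, and it is false. Take $R=\mathbb F_5$, $n=2$, $f(x)=x^2-x-2$ (so $f_0=2$, $f_1=1$) and $\mathscr C$ the span of $(1,2)$. Then $\tau_{f}(1,2)=(0,1)+2\,(2,1)=(4,3)=4\cdot(1,2)$ and $\nu_{f}(1,2)=(2,4)=2\cdot(1,2)$, so $\mathscr C$ --- which equals $\mathscr C^{\perp}$, being self-orthogonal --- is simultaneously $f(x)$-polycyclic and $f(x)$-sequential, and it is $4$-constacyclic; yet $f$ is not of the form $x^2-\lambda$. The theorem's content is that the \emph{code} carries a constacyclic structure for some unit $\lambda$, with no constraint on $f$. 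Consequently your commutator argument cannot work: a code need only be \emph{stable} under $C_fC_f^{\mathrm{tr}}-C_f^{\mathrm{tr}}C_f$, not annihilate it, and in the example the one-dimensional code is a common eigenvector of both shifts while the commutator is nonzero and the higher feedback coefficient $f_1$ does not vanish.

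There is also an unresolved problem in your forward direction. If $\mathscr C$ is $\lambda$-constacyclic, then $\mathscr C^{\perp}$ is $\lambda^{-1}$-constacyclic, i.e.\ $(x^n-\lambda^{-1})$-polycyclic, which is not the same as being $(x^n-\lambda)$-polycyclic unless $\lambda^{2}=1$ or a further argument is supplied. You flag this mismatch but do not close it; doing so requires either reading the theorem with the defining polynomial quantified separately for $\mathscr C$ and $\mathscr C^{\perp}$ (which is how \cite{Permouth_2009} frames it) or exhibiting a single $f$ serving both codes, and your sketch does neither. As written, both implications involving constacyclicity remain unproved, and the central one is being attacked via a false intermediate claim.
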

Using the results in \cite{Moro_2006},    we can establish a polynomial structure for $\mathscr C^\perp$ for the special case of $f(x)$. Recall that $R$ is a chain ring with the maximal ideal $\gamma$ and nilpotency index $e$,  $\mathcal C$ is the partition of $\mathcal H=\prod_{i=1}^{s} H_i$, and $h_C$ is defined in Equation \eqref{71}. We denote the ideal $\langle t_1(x_1),\ldots, t_s(x_s) \rangle\subseteq R[x_1,\ldots, x_s]$ by $I$.

\begin{Theorem}\label{78}
    Let $f(x)$ be a regular basic irreducible polynomial with coefficients in $R$. An $f(x)$-polycyclic code over the ring $\mathscr R$ is isomorphic to a sum of ideals of $R_f[x_1,\ldots, x_s]/I $ of the form  
    \begin{equation*}
        \left\langle \big(\gamma+\langle f(x)\rangle\big) ^{j_C}h_C(x_1, x_2, \ldots, x_s) + I\right\rangle,
    \end{equation*}
   where $0 \leqslant j_C \leqslant e$ and $R_f=R[x]/\langle f(x)\rangle$.
\end{Theorem}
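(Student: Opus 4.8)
The plan is to recognize the quotient $\mathscr R_f=\mathscr R[x]/\langle f(x)\rangle$ as a serial ring of exactly the kind treated in Proposition~\ref{13}, but now over an enlarged base chain ring. First I would reorder the indeterminates and use the defining relations to identify
$$\mathscr R_f=\frac{R[x_1,\ldots,x_s][x]}{\langle t_1(x_1),\ldots,t_s(x_s),f(x)\rangle}\cong R_f[x_1,\ldots,x_s]/I,$$
where $R_f=R[x]/\langle f(x)\rangle$ and $I=\langle t_1(x_1),\ldots,t_s(x_s)\rangle$. Because $f(x)$ is a regular basic irreducible polynomial over $R$, the second bullet of Remark~\ref{81} (that is, \cite[Lemma 3.1]{Dinh_2004}) shows that $R_f$ is again a finite commutative chain ring; its maximal ideal is $\langle\gamma+\langle f(x)\rangle\rangle$ and, since $R_f$ is free over $R$ with basis $\{1,x,\ldots,x^{\deg f-1}\}$, its nilpotency index is still $e$. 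Its residue field is the finite extension $\mathbb F_{q'}=\mathbb F_q[x]/\langle\bar f(x)\rangle$.

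Next I would apply the structure theorem of \cite{Moro_2006}, recorded here as Proposition~\ref{13}, to the serial ring $R_f[x_1,\ldots,x_s]/I$ with base chain ring $R_f$. The only hypothesis to verify is that each $t_i(x_i)$ remains monic square-free over $R_f$; this is immediate, since square-freeness concerns only the reduction modulo the maximal ideal, and a square-free polynomial over the perfect field $\mathbb F_q$ stays square-free over the extension $\mathbb F_{q'}$. Proposition~\ref{13} then supplies a complete set of orthogonal idempotents indexed by the classes $C$ (now taken with respect to the $q'$-power action on $\mathcal H$) together with
$$R_f[x_1,\ldots,x_s]/I\cong\bigoplus_{C}\langle h_C+I\rangle\cong\bigoplus_{C}T_C,$$
where every summand $T_C=R_f[x_1,\ldots,x_s]/I_C$ is a chain ring. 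Crucially, inside $T_C$ the extra generators $q_{C,1},z_{C,2},\ldots,z_{C,s}$ of the maximal ideal $\mathfrak M$ are already zero, so $\mathfrak M$ collapses to $\langle\gamma+\langle f(x)\rangle\rangle$; as the $t_i$ are square-free, $T_C$ is an unramified extension of $R_f$, and the nilpotency index of $\gamma$ in $T_C$ is again $e$.

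With this in place the statement is almost immediate. An $f(x)$-polycyclic code is an ideal $\mathscr C$ of $\mathscr R_f$, and transporting it through the isomorphism above writes $\mathscr C=\bigoplus_{C}\mathscr C_C$ with each $\mathscr C_C$ an ideal of the chain ring $T_C$. Since in a finite chain ring every ideal is a power of the maximal ideal, $\mathscr C_C=\langle(\gamma+\langle f(x)\rangle)^{j_C}\rangle$ for a unique $0\leqslant j_C\leqslant e$. Finally I would pull each component back to the ambient ring: because $\mathscr R_f e_C\cong\langle h_C+I\rangle$ and $h_C$ generates this component, the element $(\gamma+\langle f(x)\rangle)^{j_C}h_C+I$ generates exactly $\gamma^{j_C}T_C$ inside $\mathscr R_f$, so that
$$\mathscr C\cong\sum_{C}\left\langle(\gamma+\langle f(x)\rangle)^{j_C}h_C(x_1,\ldots,x_s)+I\right\rangle,$$
as claimed.

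The main obstacle is the base change itself rather than any single computation. One must be careful that passing from $R$ to $R_f$ \emph{refines} the partition $\mathcal C$: the orbits of $\mathcal H$ are now formed under the Frobenius $\nu\mapsto\nu^{q'}$, so $\bar f$ may split over the residue fields $\mathbb F_q(\nu_1,\ldots,\nu_s)$, which is precisely why the coarser decomposition $\bigoplus_i\mathscr R_i[x]/\langle f_i(x)\rangle$ of Proposition~\ref{21} need not consist of chain rings. The heart of the argument is therefore the verification that, after this refinement, each summand $T_C$ really is a chain ring whose maximal ideal is generated by $\gamma$ with nilpotency index $e$; this is what forces every component ideal to be a single power $\gamma^{j_C}$ and pins down the range $0\leqslant j_C\leqslant e$. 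Everything else is bookkeeping through the isomorphisms of Propositions~\ref{13} and~\ref{21}.
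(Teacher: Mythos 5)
Your proposal is correct and follows essentially the same route as the paper: identify $\mathscr R[x]/\langle f(x)\rangle$ with $R_f[x_1,\ldots,x_s]/I$ over the chain ring $R_f=R[x]/\langle f(x)\rangle$ (using that $f$ regular basic irreducible makes $R_f$ a chain ring) and then invoke the structure theory of multivariable serial codes over chain rings from \cite{Moro_2006}. The only difference is that the paper cites Corollary 1 of \cite{Moro_2006} as a black box, whereas you unpack it via Proposition~\ref{13} together with the fact that ideals of a chain ring are powers of the maximal ideal; your added checks (square-freeness of the $t_i$ persisting over the residue field of $R_f$, and the refinement of the partition $\mathcal C$ under the larger Frobenius) are exactly the hypotheses that citation relies on.
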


\begin{proof}
 If $f(x)$ is a regular basic irreducible polynomial in the ring $R[x]$, then $R_f:=R/\langle f(x)\rangle$ is a chain ring with the maximal ideal $\langle \gamma+\langle f(x)\rangle\rangle$, see Remark \ref{81}.  It is well-known that 
\begin{equation}
  \frac{\mathscr R[x]}{\langle f(x)\rangle}\cong \frac{\frac{R[x]}{\langle f(x)\rangle}[x_1,\ldots, x_s]}{\langle t_1(x_1),\ldots, t_s(x_s) \rangle}=\frac{R_f[x_1,\ldots, x_s]}{\langle t_1(x_1),\ldots, t_s(x_s) \rangle}.
\end{equation}
Hence, every $f(x)$-polycyclic code over $\mathscr R$ is isomorphic to an ideal in the multivariable polynomial ring $R_f[x_1,\ldots, x_s]/\langle t_1(x_1),\ldots, t_s(x_s) \rangle $ over the chain ring $R_f$.
Now use Corollary 1 in \cite{Moro_2006}.
\end{proof}

\begin{Remark}
If $f(x)$ is a monic square-free polynomial with coefficients in $R$, it can be factored as a product of regular basic irreducible pairwise coprime polynomials with coefficients in $R$. Therefore, by the Chinese Remainder Theorem, we can extend the above theorem for a monic square-free polynomial with coefficients in $R$.  
\end{Remark}
\begin{Theorem}
     Let $f(x)$ be a regular basic irreducible polynomial with coefficients in $R$. Then the following statements hold.
    \begin{enumerate}
        \item The number of  $f(x)$-polycyclic codes over $\mathscr R$ is equal to $(e+1)^k$, where  $k=\lvert \mathcal{C} \rvert $.
        \item For the $f(x)$-polycyclic code $\mathscr C$, there exists a family of polynomials $G_0, \ldots, G_s \in R[x_1, \ldots, x_s]$ such that
    \begin{equation}\label{73}
        \mathscr C \cong \left\langle G_1,\,\,   \big( \gamma+\langle f(x)\rangle\big) G_2,\,\,  \ldots\,\, ,  \big( \gamma+\langle f(x)\rangle\big)^{e-1} G_s\right\rangle + I.
    \end{equation}
    \item  The size of the $f(x)$-polycyclic code $\mathscr C$ is equal to  
    \begin{equation*}
        \bigg|  \frac{R_f}{\langle \gamma+\langle f(x)\rangle\rangle}\Bigg|^{\sum _{i=0}^{e-1}(e-i)N_i},
    \end{equation*}
    where $N_i$ denotes the number of zeros $\mu \in \prod_{i=1}^{s} H_i$ of $\overline{G}_i$. Here, $\overline{G}_i$  represents the natural ring homomorphism image over \(R_f/\langle \gamma + \langle f(x) \rangle\rangle\). 
\end{enumerate}
\end{Theorem}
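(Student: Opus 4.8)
The plan is to deduce all three statements from the structural description already established in Theorem~\ref{78}, using the chain-ring arithmetic of $R_f=R[x]/\langle f(x)\rangle$. First I would record the basic facts about $R_f$: since $f(x)$ is regular basic irreducible, $R_f$ is a finite chain ring with maximal ideal $\langle\gamma+\langle f(x)\rangle\rangle$, and because $f$ is monic, $R_f$ is free over $R$, so $R\hookrightarrow R_f$ and the nilpotency index stays equal to $e$ (we have $\gamma^{e-1}\notin\langle f(x)\rangle$ while $\gamma^{e}=0$). By Theorem~\ref{78}, an $f(x)$-polycyclic code $\mathscr C$ is identified with an ideal of $R_f[x_1,\ldots,x_s]/I$ of the form $\sum_{C\in\mathcal C}\langle(\gamma+\langle f(x)\rangle)^{j_C}h_C+I\rangle$ with $0\leqslant j_C\leqslant e$, each summand living in the chain-ring component attached to $C$ and cut out by the idempotent $e_C$ of Proposition~\ref{13}.

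For statement (1), I would observe that inside each chain-ring component the ideal generated by $(\gamma+\langle f(x)\rangle)^{j_C}h_C$ is exactly the $j_C$-th power of that component's maximal ideal $\mathfrak M_C$; since the nilpotency index is $e$, these yield precisely the $e+1$ distinct ideals in the chain $T_C\supsetneq\mathfrak M_C\supsetneq\cdots\supsetneq\mathfrak M_C^{e}=0$ as $j_C$ runs over $\{0,1,\ldots,e\}$. Because $\{e_C\}_{C\in\mathcal C}$ is a complete set of orthogonal idempotents, a code amounts to an independent choice of one such ideal in each of the $k=\lvert\mathcal C\rvert$ components, and distinct tuples $(j_C)_C$ give distinct codes; hence the count is $(e+1)^{k}$.

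For statement (2), I would reorganize the direct sum by the value of the exponent rather than by the class. For each level $j\in\{0,1,\ldots,e-1\}$ let $G_j$ be a polynomial representing $\sum_{C:\,j_C=j}e_C$ (equivalently built from the $h_C$ with $j_C=j$); orthogonality of the idempotents gives $\langle(\gamma+\langle f(x)\rangle)^{j}G_j\rangle=\sum_{C:\,j_C=j}\langle(\gamma+\langle f(x)\rangle)^{j}h_C\rangle$, using $\langle\gamma_f^{j}e_C\rangle=\langle\gamma_f^{j}h_C\rangle$ since $e_C$ and $h_C$ generate the same component ideal. The level $j=e$ contributes $0$, so summing over the remaining levels recovers $\mathscr C$ in the generator form~\eqref{73}.

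For statement (3), I would compute the cardinality componentwise: in a chain ring with residue field $\mathbb F$ and nilpotency index $e$, the ideal $\mathfrak M^{j}$ has $\lvert\mathbb F\rvert^{\,e-j}$ elements, so $\lvert\mathscr C\rvert=\prod_{C\in\mathcal C}\lvert\mathbb F_C\rvert^{\,e-j_C}$, where $\mathbb F_C$ is the residue field of the component attached to $C$. The last and most delicate step is to repackage this product as the single-base-field exponent $\sum_{i=0}^{e-1}(e-i)N_i$ over $R_f/\langle\gamma+\langle f(x)\rangle\rangle$: one must match $\lvert\mathbb F_C\rvert$ with the number $\lvert C\rvert$ of points of $\mathcal H=\prod_{i=1}^{s}H_i$ lying in the class $C$, and verify that $N_i$, the number of zeros in $\mathcal H$ of $\overline{G}_i$, equals $\sum_{C:\,j_C=i}\lvert C\rvert$; granting this bookkeeping, $\sum_{i=0}^{e-1}(e-i)N_i=\sum_{C}(e-j_C)\lvert C\rvert$ and the formula follows. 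I expect the main obstacle to be exactly this residue-field accounting—pinning down the sizes $\lvert\mathbb F_C\rvert$ of the components over $R_f$ against the $q$-ary class data—since this is precisely where one must respect that divisibility of ideals, not equality of generators, governs the chain-ring structure.
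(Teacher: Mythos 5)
Your proposal is correct in substance, but it is considerably more self-contained than the paper's own argument: the paper's entire proof consists of the reduction you describe in your first paragraph (identifying an $f(x)$-polycyclic code with an ideal of $R_f[x_1,\ldots,x_s]/I$ over the chain ring $R_f$, exactly as in Theorem~\ref{78}) followed by a citation of Corollary 2, Theorem 3 and Corollary 3 of \cite{Moro_2006}, which are precisely the three facts you prove by hand. Your componentwise arguments are the right ones and essentially reconstruct the content of those cited results: counting the $e+1$ ideals of each chain component $T_C$ and using the orthogonal idempotents to get $(e+1)^k$; regrouping the summands $\langle\gamma_f^{j_C}h_C+I\rangle$ by the common exponent $j$ to produce the generators $G_j$ (note the paper's Equation~\eqref{73} has an indexing slip, running the subscript to $s$ while the exponent runs to $e-1$; your level-by-level description is the intended reading); and computing $\lvert\mathscr C\rvert=\prod_C\lvert\mathbb F_C\rvert^{e-j_C}$ before converting to the base $\lvert R_f/\langle\gamma+\langle f(x)\rangle\rangle\rvert$ via the class sizes $\lvert C\rvert$. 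The one point deserving extra care in your version --- and which the paper silently inherits from the citation --- is that after base-changing to $R_f$ the residue field enlarges to $\mathbb F_q[x]/\langle\bar f\rangle$, so the partition $\mathcal C$ and the class sizes entering your bookkeeping must be taken relative to that larger field; this affects the meaning of $k$ and of $N_i$ but not the shape of the argument. What your approach buys is a proof readable without \cite{Moro_2006} in hand; what the paper's buys is brevity and the guarantee that the delicate residue-field accounting in your statement (3) has already been carried out once and for all in the cited source.
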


\begin{proof}
  Based on the proof of Theorem \ref{78} , every $f(x)$-polycyclic code over $\mathscr R$ is isomorphic to an ideal in the multivariable polynomial ring $R_f[x_1,\ldots, x_s]/\langle t_1(x_1),\ldots, t_s(x_s) \rangle $ over the chain ring $R_f$. Now, we apply Corollary 2, Theorem 3 and Corollary 3 in \cite{Moro_2006}. 
\end{proof}

\begin{Theorem}[$\mathscr R$ Abelian]
   Let $f(x)$ be a regular basic irreducible polynomial with coefficients in $R$, and let  $t_i(x_i)$ be square-free polynomials of the form  $t_i(x_i)=x_i^{e_i}-1$ for  $1\leqslant i\leqslant s$, and the ring $\mathscr R= R[x_1,\ldots, x_s]/\langle t_1(x_1),\ldots, t_s(x_s)\rangle$.   Then the following statements hold.  
  \begin{enumerate}
      \item If $f(x)$-polycyclic code $\mathscr C$ has the form in Equation \eqref{73}, then 
      \begin{equation}
          \mathscr C^\perp =  \left\langle \tau(G_0), \,\, \big( \gamma+\langle f(x)\rangle\big)\tau(G_2), \,\, \ldots\,\,, \big( \gamma+\langle f(x)\rangle\big)^{e-1}\tau(G_s)\right\rangle  + I,
      \end{equation}
      where $\tau$ is the ring automorphism of $R_f[x_1,\ldots, x_s]/\langle t_1(x_1),\ldots, t_s(x_s) \rangle $ given by $\tau(g(x_1,x_2,\ldots, x_s))=g(x_1^{e_1-1}, x_1^{e_2-1}, \ldots, x_1^{e_s-1} )$.
      \item The size of  $\mathscr C^{\perp}$ is equal to
      \begin{equation*}
          \bigg|  \frac{R_f}{\langle \gamma+\langle f(x)\rangle\rangle}\bigg|^{\sum _{i=0}^{e-1}iN_i},
      \end{equation*}
      where $N_i$ denotes the number of zeros $\mu \in \prod_{i=1}^{s} H_i$ of $\overline{G}_i$.
       \end{enumerate}
\end{Theorem}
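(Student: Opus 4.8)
The plan is to reduce the computation to the abelian group-algebra setting over the chain ring $R_f = R[x]/\langle f(x)\rangle$, where the dual-code machinery of \cite{Moro_2006} applies, and then to account for the discrepancy between the Euclidean dual and the untwisted module dual by means of the inversion involution $\tau$. First I would invoke the isomorphism from the proof of Theorem \ref{78}, namely $\mathscr R[x]/\langle f(x)\rangle \cong R_f[x_1,\ldots,x_s]/\langle t_1(x_1),\ldots,t_s(x_s)\rangle$, so that $\mathscr C$ becomes an ideal of the group algebra $R_f\big[\prod_i \mathbb Z/e_i\mathbb Z\big]$ (recall $t_i(x_i)=x_i^{e_i}-1$). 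The map $\tau$ induced by $x_i \mapsto x_i^{e_i-1}=x_i^{-1}$ is precisely the antipode of this group algebra; it is a well-defined involutive ring automorphism exactly because each $t_i$ has the cyclic form $x_i^{e_i}-1$, which is the structural reason the theorem is restricted to this case.

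The heart of part (1) is the identity linking the Euclidean form to algebra multiplication through $\tau$. I would expand the $\mathscr R$-valued product $\langle \boldsymbol c, \boldsymbol c'\rangle = \sum_{j=0}^{n-1} c_j c_j' \in \mathscr R$ in the monomial basis $\{\mathbf x^\alpha\}$ and observe that its coefficient at the identity monomial $\mathbf x^{\mathbf 0}$ equals $\sum_{j,\alpha} c_{j,\alpha}\, c'_{j,-\alpha}$, i.e. the ordinary $R$-bilinear Euclidean form on $R^{mn}$ evaluated at $\boldsymbol c$ and $\tau(\boldsymbol c')$. Since $\mathscr C$ is an ideal it is invariant under multiplication by every unit $\mathbf x^{\gamma}$, and the coefficient of $\mathbf x^{\gamma}$ in $\langle \boldsymbol c,\boldsymbol c'\rangle$ is the identity-coefficient of $\langle \mathbf x^{-\gamma}\boldsymbol c,\boldsymbol c'\rangle$ with $\mathbf x^{-\gamma}\boldsymbol c\in\mathscr C$; hence requiring $\langle \boldsymbol c,\boldsymbol c'\rangle=0$ in $\mathscr R$ for all $\boldsymbol c\in\mathscr C$ collapses to the single condition $\tau(\boldsymbol c')\perp_R \mathscr C$ in the untwisted $R$-module sense. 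This yields $\mathscr C^{\perp}=\tau\big(\mathscr C^{\perp_R}\big)$, where $\perp_R$ is the module dual computed in \cite{Moro_2006}; because $\tau$ is a ring automorphism its image is again an ideal, which explains why the Euclidean dual retains a polycyclic structure in the abelian case, in contrast with the general situation described in the introduction.

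With this correspondence in hand I would apply the dual-code structure theorem of \cite{Moro_2006} to the explicit generating set of $\mathscr C$ in Equation \eqref{73}. That theorem returns a canonical generating set whose nilpotency levels are complemented orbit by orbit: on the component indexed by a class $C$, where $\mathscr C$ restricts to $\langle \gamma^{j_C}\rangle$, the module dual restricts to $\langle \gamma^{\,e-j_C}\rangle$. Since $\tau$ fixes $\gamma$ and only permutes the $x_i$, applying $\tau$ to these generators leaves the displayed powers $\gamma^0,\ldots,\gamma^{e-1}$ intact while replacing each polynomial $G_i$ by $\tau(G_i)$, producing the claimed expression $\big\langle \tau(G_0),\,(\gamma+\langle f(x)\rangle)\tau(G_2),\,\ldots,\,(\gamma+\langle f(x)\rangle)^{e-1}\tau(G_s)\big\rangle + I$, the $G_i$ being the members of the family furnished by the preceding theorem.

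For part (2) I would read the cardinality from the complemented levels, using the interpretation of $N_i$ as the number of points $\mu\in\prod_i H_i$ lying in orbits $C$ with $j_C=i$, so that $\sum_i (e-i)N_i=\sum_C (e-j_C)|C|$ recovers the size of $\mathscr C$ via \cite{Moro_2006}. On each orbit the dual contributes the complementary exponent $e-(e-j_C)=j_C$, whence $|\mathscr C^{\perp}|=\big|R_f/\langle \gamma+\langle f(x)\rangle\rangle\big|^{\sum_{i=0}^{e-1} i N_i}$. As an independent check, $\mathscr R$ and hence $R_f$ is a finite Frobenius ring by Proposition \ref{32}, so $|\mathscr C|\cdot|\mathscr C^{\perp}|=|\mathscr R|^{n}$, and the two exponents indeed add to $\sum_i e N_i = e\,m$, which is consistent. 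The main obstacle I anticipate is the first step of part (1): establishing the inner-product/inversion identity cleanly, since $\mathscr C^{\perp}$ is an $\mathscr R$-valued rather than scalar Euclidean dual, so one must argue through the vanishing of every monomial coefficient and exploit ideal-invariance to reduce to the identity-monomial condition; reconciling the generator indexing of \cite{Moro_2006} with the family $G_0,\ldots,G_s$ is the remaining bookkeeping.
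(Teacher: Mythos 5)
Your proposal is correct and follows essentially the same route as the paper: after passing through the isomorphism $\mathscr R[x]/\langle f(x)\rangle\cong R_f[x_1,\ldots,x_s]/\langle t_1(x_1),\ldots,t_s(x_s)\rangle$ from Theorem \ref{78}, the paper's entire proof is a citation of Theorem 4 and Corollary 5 of \cite{Moro_2006}, which is precisely the dual-code structure theorem for abelian multivariable codes over the chain ring $R_f$ that you invoke. The only difference is that you explicitly supply the antipode/inner-product identity reducing the Euclidean dual to the (inverted) annihilator and the cardinality bookkeeping, details the paper delegates entirely to the citation.
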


\begin{proof}
   See  Theorem 4, Corollary 5 in \cite{Moro_2006}.
\end{proof}
%%%%%%%%%%%%%%%%%%%%%%%%%%%%%%%%%%%%%%%%%%%%%%%%%%%%
\section{Annihilator dual of polycyclic codes over $\mathscr R$}\label{66}
As pointed before, the Euclidean dual of an $f(x)$-polycyclic code over $\mathscr{R}$ is not necessarily an $f(x)$-polycyclic code. In \cite{Alahmadi_2016}, an alternative duality for $f(x)$-polycyclic codes over fields is introduced, where the dual code of any $f(x)$-polycyclic code preserves the $f(x)$-polycyclic structure. In this section, we consider this duality for $f(x)$-polycyclic codes over $\mathscr{R}$. Let $A$ be a commutative ring with $1$ and $k(x)$, $l(x)$ be two non-zero polynomials in $A[x]$ such that the leading coefficient of $l(x)$ is a unit in $A$. Then the division algorithm holds, that is, there exist unique polynomials $q(x)$, $r(x) \in A[x]$ such that $ k(x) = q(x)l(x) + r(x)$ and either $r(x) = 0$ or $\deg r(x) < \deg l(x)$.
\begin{Definition}
 Let $\mathscr C$ be an $f(x)$-polycyclic code of length $n$ over $\mathscr R$. The annihilator product of $g(x), h(x) \in \mathscr R_f$ is defined as
  \begin{equation*}
    \langle g(x), h(x) \rangle_f = r(0),
  \end{equation*} 
 where  $r(x)$ is the remainder of $g(x)h(x)$ divided by $f(x)$. Moreover, the annihilator dual of $\mathscr C$ is defined by
$ \mathscr C^{\circ} = \{h(x) \in \mathscr R_f \mid \langle g(x), h(x) \rangle_f = 0, \forall g(x) \in \mathscr C\}. $ 
\end{Definition}
We notice that the definition is well-defined due to the uniqueness of the reminder in the division algorithm. The $f(x)$-polycyclic code $\mathscr C$ is called  \emph{annihilator self-orthogonal} (respectively \emph{annihilator self-dual},  \emph{annihilator linear complementary dual} (LCD), or \emph{annihilator dual-containing}) if $\mathscr C \subseteq \mathscr C^{\circ}$ (respectively $\mathscr C = \mathscr C^{\circ}$, $\mathscr C \cap \mathscr C^{\circ} = \{0\}$, or $\mathscr C^{\circ}\subseteq \mathscr C$). Recall that $$\operatorname{Ann}(\mathscr C)=\{h(x)\in \mathscr R_f \mid  g(x)h(x)=0 \,(\operatorname{mod}f(x)),\,\, \forall g(x)\in \mathscr C\}.$$
\begin{Remark}\label{44}
If $g(x)$ and $ h(x)$ are two polynomials with coefficients in a ring $A$, then $\deg(g(x)h(x))\leqslant  \deg(g(x))+\deg(h(x))$.  If $\operatorname{lc}(g)\operatorname{lc}(h)\neq 0$, then the equality holds, where ``$\operatorname{lc}$'' denotes the leading coefficient of each polynomial.
\end{Remark}
 \begin{Theorem}\label{17} Let $f_0$ be invertible in $\mathscr R$ and $\mathscr C$ be an $f(x)$-polycyclic code of length $n$ over $\mathscr R$. 
    \begin{enumerate}
        \item   The annihilator product is a non-degenerate $\mathscr R$-bilinear form.  
        \item The annihilator dual $\mathscr C^{\circ}$ is equal to $\operatorname{Ann}(\mathscr C)$.
        \item Let the leading coefficient of $g(x)\in\mathscr R_f$ be a unit element and $\mathscr C$ be a $f(x)$-polycyclic code generated by $g(x)$. If there exists a polynomial $h(x)\in \mathscr R_f$ satisfying $f(x)=g(x)h(x)$, then $\mathscr C^{\circ} = \langle h(x) \rangle$. Moreover, $|\mathscr C|\,|\mathscr C^{\circ}|=|\mathscr R|^n$.
    \end{enumerate}
 \end{Theorem}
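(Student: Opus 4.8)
The plan is to treat all three parts through a single reformulation of the annihilator product. Writing each class in $\mathscr R_f$ by its unique representative of degree $<n$, let $\lambda\colon\mathscr R_f\to\mathscr R$ be the functional sending a polynomial to its constant coefficient. Since $f(x)$ is monic, the division algorithm yields a unique remainder of degree $<n$ and the reduction map $\mathscr R[x]\to\mathscr R_f$ is $\mathscr R$-linear; composing with evaluation at $0$ gives $\langle g(x),h(x)\rangle_f=\lambda\big(g(x)h(x)\big)$, the product being taken in $\mathscr R_f$. Bilinearity is then immediate from the $\mathscr R$-linearity of multiplication and of $\lambda$, and symmetry follows from commutativity of $\mathscr R_f$.

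For non-degeneracy --- which I expect to be the main obstacle and the only place where the hypothesis that $f_0$ is a unit is genuinely used --- I would argue by extracting the top coefficient. Given a nonzero $g(x)=\sum_{i=0}^{n-1}g_ix^i$, let $j$ be the largest index with $g_j\neq 0$. If $j=0$, then $\langle g(x),1\rangle_f=g_0\neq0$. If $j\geqslant 1$, I pair $g(x)$ with $x^{n-j}$: in the product $g(x)x^{n-j}=\sum_{i=0}^{j}g_ix^{i+n-j}$ every term with $i<j$ has exponent in $[n-j,\,n-1]\subseteq[1,n-1]$ and contributes $0$ to the constant coefficient, while the term $i=j$ reduces via $x^n\equiv f_{n-1}x^{n-1}+\cdots+f_0 \pmod{f(x)}$ and contributes exactly $g_jf_0$. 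Thus $\langle g(x),x^{n-j}\rangle_f=g_jf_0$, which is nonzero because $f_0$ is a unit and $g_j\neq0$. Hence the form is non-degenerate, and by symmetry the same holds in the second argument.

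Part (2) then follows from the ideal structure of $\mathscr C$. The inclusion $\operatorname{Ann}(\mathscr C)\subseteq\mathscr C^{\circ}$ is clear, since $g(x)h(x)\equiv0$ makes the remainder, and a fortiori its constant term, vanish. For the reverse inclusion, fix $h(x)\in\mathscr C^{\circ}$ and $g(x)\in\mathscr C$; because $\mathscr C$ is an ideal of $\mathscr R_f$, every $a(x)g(x)$ lies in $\mathscr C$, so $\lambda\big(a(x)\,g(x)h(x)\big)=0$ for all $a(x)\in\mathscr R_f$. Setting $p(x)=g(x)h(x)$, non-degeneracy from part (1) forces $p(x)=0$ in $\mathscr R_f$, i.e. $g(x)h(x)\equiv0\pmod{f(x)}$, so $h(x)\in\operatorname{Ann}(\mathscr C)$.

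For part (3), by (2) it suffices to identify $\operatorname{Ann}(\langle g(x)\rangle)$ with $\langle h(x)\rangle$. The inclusion $\langle h(x)\rangle\subseteq\operatorname{Ann}(\langle g(x)\rangle)$ is immediate from $g(x)h(x)=f(x)\equiv0$. For the converse, the hypothesis that $\operatorname{lc}(g)$ is a unit makes $g(x)$ a non-zero-divisor in $\mathscr R[x]$, since its leading term cannot be annihilated (cf. Remark \ref{44}); if $p(x)g(x)\equiv0\pmod{f(x)}$ then $p(x)g(x)=q(x)g(x)h(x)$ in $\mathscr R[x]$ for some $q(x)$, and cancelling the regular factor $g(x)$ gives $p(x)=q(x)h(x)$, so $p(x)\in\langle h(x)\rangle$. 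This yields $\mathscr C^{\circ}=\langle h(x)\rangle$. Finally, the cardinality identity comes from applying the first isomorphism theorem to the multiplication map $\mu_g\colon\mathscr R_f\to\mathscr R_f$, $a(x)\mapsto a(x)g(x)$: its image is $\mathscr C=\langle g(x)\rangle$ and its kernel is $\operatorname{Ann}(\langle g(x)\rangle)=\mathscr C^{\circ}$, so $\mathscr R_f/\mathscr C^{\circ}\cong\mathscr C$ and $|\mathscr C|\,|\mathscr C^{\circ}|=|\mathscr R_f|=|\mathscr R|^{n}$, the last equality because $\mathscr R_f$ is free of rank $n$ over $\mathscr R$.
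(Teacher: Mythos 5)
Your proposal is correct, and in several places it takes a genuinely different (and cleaner) route than the paper. For non-degeneracy, the paper inverts $x$ explicitly via $m(x)=\sum_{i=0}^{n-2}-f_0^{-1}f_{i+1}x^i+f_0^{-1}x^{n-1}$ and peels off the coefficients $g_0,g_1,\dots$ one at a time by pairing with $1,x^{-1},x^{-2},\dots$; your single pairing of $g$ with $x^{n-j}$, which isolates the top coefficient as $g_jf_0$, reaches the same conclusion in one step and makes the role of the unit $f_0$ transparent. For part~(2) the two arguments are essentially the same idea, but you invoke non-degeneracy abstractly (testing against all $a(x)g(x)$, which lie in the ideal $\mathscr C$) where the paper repeats its explicit iteration with $m(x)$. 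For part~(3) the paper runs the division algorithm against $\operatorname{lc}(h)^{-1}h(x)$ and argues by a somewhat loosely worded minimality contradiction, whereas you lift to $\mathscr R[x]$, use that $\operatorname{lc}(g)$ being a unit makes $g$ regular, and cancel it from $pg=qgh$; this is shorter and avoids the contradiction step entirely. Finally, for the cardinality identity the paper asserts $\langle f(x)/g(x)\rangle\oplus\langle g(x)\rangle=\mathscr R[x]/\langle f(x)\rangle$, which as stated only holds when $g$ and $h$ are coprime; your argument via the first isomorphism theorem applied to $\mu_g$, with kernel $\operatorname{Ann}(\langle g\rangle)=\mathscr C^{\circ}$ and image $\mathscr C$, is more robust and proves the claim in full generality. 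The only point worth making explicit is that in part~(2) you need $\mathscr C$ to be an ideal of $\mathscr R_f$, which is exactly the polycyclic hypothesis; you do use it, so the proof is complete.
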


 \begin{proof}\
     \begin{enumerate}
         \item It is straightforward that the given product is an $\mathscr R$-bilinear form. Moreover, it is easy to see that $x$ is invertible with inverse $m(x)=\sum_{i=0}^{n-2}-f_0^{-1}f_{i+1}x^i+f_0^{-1}x^{n-1}.$ To prove the non-degeneracy, suppose that  $g(x) = g_0 + g_1 x + \cdots + g_{n-1} x^{n-1} \in \mathscr R_f$ satisfies $\langle g(x) , h(x) \rangle_{f} = 0$ for all $h(x) \in \mathscr R_f.$ In the case $h=1$, we have $0=r(0)=g_0$. Hence, we can rewrite $g(x)$ as $(g_1+g_2x+\cdots+g_{n-1}x^{n-2})x$. In the case $h=x^{-1}$, we get $g_1=0$. Repeating this discussion, we conclude that $g_i = 0$ for all $i$, which implies $g(x) = 0$. Therefore, the annihilator product is non-degenerate.
         \item It is easy to see that $\operatorname{Ann}(\mathscr C)\subseteq \mathscr C^{\circ}.$ Conversely, let $g(x)\in \mathscr C^{\circ}$ and $h(x)$ be an arbitrary element of  $\mathscr C$. Let $r(x)=r_0 + r_1x + \cdots + r_{n-1}x^{n-1}$ denote the remainder of $g(x)h(x)$ modulo $f(x)$. We have  $\langle g(x), h(x)\rangle_f=0$, which implies  $r_0 = 0 $. Consequently, $(r_1+r_2x+\cdots+r_{n-1}x^{n-2})x=g(x)h(x)(\operatorname{mod}f(x)).$ According to the proof of the first statement, the inverse of $x$ is $m(x)$. Thus, $r_1+r_2x+\cdots+r_{n-1}x^{n-2}=g(x)h(x)m(x)(\operatorname{mod} f(x))$. Furthermore, $ghm(0)=0(\operatorname{mod} f(x))$, which implies $r_1=0$. By repeating the same reasoning, we have  $r_i = 0$ for all $i$, thus $gh = 0$.
         
         \item Since $f(x)=g(x)h(x)$, $\langle h(x) \rangle \subseteq \operatorname{Ann}(\langle g(x)\rangle )=\mathscr C^{\circ}$. Conversly, let $k(x)\in \operatorname{Ann}(\langle g(x)\rangle )$, then there exists a polynomial $t(x)$ such that $k(x)g(x) = t(x)f(x)$. If $\operatorname{lc}(g)\operatorname{lc}(h)=0$, then $\operatorname{lc}(h)=0$, which is a contradiction. Therefore, $\operatorname{lc}(g)\operatorname{lc}(h)\neq 0$, which implies that $\deg(g(x)h(x))=\deg(g(x))+\deg(h(x))$. Consequently, since $g(x)h(x)=f(x)$, we deduct $\operatorname{lc}(g)\operatorname{lc}(h)=\operatorname{lc}(f)=1$, which implies $\operatorname{lc}(h)$ is invertible. Therefore, we can apply the division algorithm to the monic polynomial $\operatorname{lc}(h)^{-1}h(x)$, that is, there exist unique elements $q(x), r(x)\in  \mathscr R_f$ such that $k(x) = q(x) \operatorname{lc}(h)^{-1}h(x)  + r(x)$ and $\deg(r(x)) < \deg (\operatorname{lc}(h)^{-1}h(x))$.  So, $k(x)g(x)=q(x)a^{-1}f(x)+r(x)g(x)$. This leads to $(t(x)-q(x)a^{-1})f(x)=r(x)g(x)$, and hence $r(x)g(x)\in \langle f(x)\rangle$ Assume that $r(x)$ is non-zero. As $\operatorname{lc}(g)$ is unit,  it follows that $ r(x)g(x)\neq 0$. Consequently, $0\neq r(x)g(x)\in \langle f(x)\rangle $, which contradicts the fact that $h(x)$ is the smallest polynomial satisfying $g(x)h(x) = f(x)$ and $\deg(r(x)) < \deg(\operatorname{lc}(h)^{-1}h(x))=\deg(h(x))$. Thus, $r(x)=0$, and hence $k(x)\in \langle h(x)\rangle$. As a result, $ \operatorname{Ann}(\langle g(x)\rangle )\subseteq \langle h(x) \rangle $. To complete the proof, note that $\langle f(x)/g(x)\rangle\oplus\langle g(x)\rangle=\mathscr R[x]/\langle f(x)\rangle$. Therefore, $|\mathscr C||\mathscr C^{\circ}|=|\mathscr R|^n$.
    \end{enumerate}
 \end{proof}
The set of vectors $ \{e_i : 1 \leqslant i \leqslant n\}$, where each $e_i$ has a value of $1$ in the $i$-th position and $0$ elsewhere, forms a basis for $\mathscr{R}^n$. The \emph{Gram matrix} $A$ associated with this set and the polynomial $f$ is a matrix with entries $a_{i,j}=\langle e_i, e_j\rangle_f$. 
For a linear code $\mathscr C$ over $\mathscr{R}$, we define the set $\mathscr C A := \{cA \mid c \in \mathscr C\}$. Similar to Lemma 4 and Corollary 1 in \cite{Fotue_2020}, we have the following lemma.
 \begin{Lemma}\label{18} Let $f_0\in \mathscr{R}$ be invertible element and $\mathscr C$ be an $f(x)$-polycyclic code of length $n$ over $\mathscr R$. 
    \begin{enumerate}
       \item $\mathscr C^\circ = (\mathscr C A)^\perp$.
       \item $(\mathscr C^{\circ})^{\circ} = \mathscr C$.
    \end{enumerate}
 \end{Lemma}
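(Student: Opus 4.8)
The plan is to realize the annihilator product as a Gram form and then reduce both statements to Euclidean duality, which is well behaved because $\mathscr R$ is Frobenius. First I would identify each $g(x)\in\mathscr R_f$ with its coordinate vector $\phi^{-1}(g)=(g_0,\dots,g_{n-1})\in\mathscr R^n$ in the basis $\{1,x,\dots,x^{n-1}\}$. Since $\langle\cdot,\cdot\rangle_f$ is an $\mathscr R$-bilinear form by Theorem \ref{17}(1) and $\{e_i\}$ is a basis of $\mathscr R^n$, bilinearity expands it over the basis:
\begin{equation*}
\langle g(x),h(x)\rangle_f=\sum_{i,j}g_ih_j\langle e_i,e_j\rangle_f=\sum_{i,j}g_i\,a_{i,j}\,h_j=\big(\phi^{-1}(g)\,A\big)\cdot\phi^{-1}(h)^{\operatorname{tr}}=\langle\phi^{-1}(g)A,\phi^{-1}(h)\rangle.
\end{equation*}
This identity $\langle g,h\rangle_f=\langle gA,h\rangle$ is the workhorse, and it yields part (1) at once: $h\in\mathscr C^\circ$ iff $\langle gA,h\rangle=0$ for every $g\in\mathscr C$, i.e. iff $h$ is Euclidean-orthogonal to every element of $\mathscr C A$, which is precisely $\mathscr C^\circ=(\mathscr C A)^\perp$. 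Note that this derivation uses only bilinearity, so in fact $\mathscr D^\circ=(\mathscr D A)^\perp$ holds for every linear code $\mathscr D$.

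Next I would record two properties of the Gram matrix $A$. It is symmetric, $A=A^{\operatorname{tr}}$, because $g(x)h(x)=h(x)g(x)$ in the commutative ring $\mathscr R[x]$ forces the same remainder modulo $f(x)$, so $a_{i,j}=a_{j,i}$. It is also invertible: by the non-degeneracy in Theorem \ref{17}(1), if $gA=0$ (equivalently $gA\,e_j^{\operatorname{tr}}=0$ for all $j$) then $g=0$, so the $\mathscr R$-linear map $g\mapsto gA$ on the \emph{finite} module $\mathscr R^n$ is injective, hence bijective, and $A\in\mathrm{GL}_n(\mathscr R)$.

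For part (2), I would apply the boxed formula to the linear code $\mathscr C^\circ$, obtaining $(\mathscr C^\circ)^\circ=(\mathscr C^\circ A)^\perp=\big((\mathscr C A)^\perp A\big)^\perp$. The key computational step is the transformation rule for Euclidean duals under an invertible matrix: for any linear code $\mathscr E$ one has $(\mathscr E A)^\perp=\mathscr E^\perp(A^{\operatorname{tr}})^{-1}=\mathscr E^\perp A^{-1}$, using symmetry of $A$. Applying it with $\mathscr E=(\mathscr C A)^\perp$ and then invoking the Euclidean double-dual identity $(\mathscr E^\perp)^\perp=\mathscr E$ gives
\begin{equation*}
(\mathscr C^\circ)^\circ=\big((\mathscr C A)^\perp\big)^\perp A^{-1}=(\mathscr C A)A^{-1}=\mathscr C.
\end{equation*}
The hard part will not be the algebra but securing its two pillars: the invertibility of $A$, which I obtain from non-degeneracy together with the finiteness of $\mathscr R^n$, and the Euclidean double-dual equality $(\mathscr E^\perp)^\perp=\mathscr E$, which holds for arbitrary linear codes precisely because $\mathscr R$ is a finite Frobenius ring by Proposition \ref{32}(2). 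As a robust alternative to the transformation rule, I could close part (2) by counting: invertibility of $A$ gives $|\mathscr C A|=|\mathscr C|$, the Frobenius identity gives $|\mathscr E|\,|\mathscr E^\perp|=|\mathscr R|^n$, whence $|(\mathscr C^\circ)^\circ|=|\mathscr C|$; since symmetry of the form yields the trivial inclusion $\mathscr C\subseteq(\mathscr C^\circ)^\circ$, equality then follows.
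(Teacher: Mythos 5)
Your proposal is correct and follows essentially the route the paper intends: the paper merely defers to Lemma 4 and Corollary 1 of \cite{Fotue_2020}, and the identity $\langle g,h\rangle_f=\langle \phi^{-1}(g)A,\phi^{-1}(h)\rangle$ together with the symmetry and invertibility of $A$, the transformation rule $(\mathscr E A)^\perp=\mathscr E^\perp A^{-1}$, and the Euclidean double-dual property over the finite Frobenius ring $\mathscr R$ are exactly the ingredients used there (and reappear in the paper's own Lemma~\ref{khk4}). Your writeup simply supplies the details the paper omits, including the useful observation that part (1) holds for arbitrary linear codes, which legitimizes applying it to $\mathscr C^\circ$.
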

 \begin{Corollary}
 Let $f_0\in \mathscr{R}$  be an invertible element and $\mathscr C$ be a linear code of length $n$ over $\mathscr R$. Then, $\mathscr C$ is an $f(x)$-polycyclic code if and only if $\mathscr C^{\circ}$ is an $f(x)$-polycyclic code.
 \end{Corollary}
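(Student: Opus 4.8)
The plan is to reduce the statement to the identification, recalled in Section~\ref{28-1}, of an $f(x)$-polycyclic code with an ideal of $\mathscr R_f=\mathscr R[x]/\langle f(x)\rangle$: via $\phi$, a linear code $\mathscr C$ is $f(x)$-polycyclic exactly when $\phi(\mathscr C)$ is closed under multiplication by $x$, since an $\mathscr R$-submodule closed under $x$ is automatically an ideal of $\mathscr R_f$. With this reformulation the corollary becomes the assertion that $\mathscr C$ is an ideal of $\mathscr R_f$ if and only if $\mathscr C^{\circ}$ is.

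For the forward implication I would invoke Theorem~\ref{17}(2): since $f_0$ is invertible and $\mathscr C$ is $f(x)$-polycyclic, $\mathscr C^{\circ}=\operatorname{Ann}(\mathscr C)$, and in the commutative ring $\mathscr R_f$ the annihilator of any subset is an ideal; hence $\mathscr C^{\circ}$ is an ideal and therefore $f(x)$-polycyclic. A self-contained variant avoids Theorem~\ref{17}(2) entirely: because $\mathscr R_f$ is commutative, $\langle g(x),xh(x)\rangle_f=\langle xg(x),h(x)\rangle_f$ for all $g,h$, both sides being the constant term of $xgh$ modulo $f(x)$; if $\mathscr C$ is an ideal and $h\in\mathscr C^{\circ}$, then $xg\in\mathscr C$ for every $g\in\mathscr C$, so $\langle g,xh\rangle_f=\langle xg,h\rangle_f=0$, showing $xh\in\mathscr C^{\circ}$ and hence that $\mathscr C^{\circ}$ is closed under multiplication by $x$.

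The backward implication I would then deduce from the forward one together with the involution $(\mathscr C^{\circ})^{\circ}=\mathscr C$. Assuming $\mathscr C^{\circ}$ is $f(x)$-polycyclic, the forward implication applied to $\mathscr C^{\circ}$ shows that $(\mathscr C^{\circ})^{\circ}$ is $f(x)$-polycyclic, and the involution identifies this code with $\mathscr C$.

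The main obstacle is precisely this last use of $(\mathscr C^{\circ})^{\circ}=\mathscr C$. Lemma~\ref{18}(2) records the identity, but under the standing hypothesis that the base code is already $f(x)$-polycyclic, which is what we are trying to establish for $\mathscr C$; so I would verify that the double annihilator dual collapses for an \emph{arbitrary} linear code. Two ingredients suffice. First, $\mathscr C\subseteq(\mathscr C^{\circ})^{\circ}$ for every linear code, because the annihilator product is symmetric ($gh=hg$ in $\mathscr R_f$), so $\langle g,c\rangle_f=\langle c,g\rangle_f=0$ whenever $c\in\mathscr C$ and $g\in\mathscr C^{\circ}$. Second, a cardinality count forces equality: by Lemma~\ref{18}(1), $\mathscr C^{\circ}=(\mathscr C A)^{\perp}$, where the Gram matrix $A$ is invertible by the non-degeneracy in Theorem~\ref{17}(1); since $\mathscr R$ is a Frobenius ring (Proposition~\ref{32}(2)), $|\mathscr D|\,|\mathscr D^{\perp}|=|\mathscr R|^{n}$ for every linear code $\mathscr D$, and invertibility of $A$ gives $|\mathscr C A|=|\mathscr C|$, whence $|\mathscr C^{\circ}|=|\mathscr R|^{n}/|\mathscr C|$ and likewise $|(\mathscr C^{\circ})^{\circ}|=|\mathscr R|^{n}/|\mathscr C^{\circ}|=|\mathscr C|$. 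Combining the inclusion with equal cardinalities yields $(\mathscr C^{\circ})^{\circ}=\mathscr C$ for any linear code, closing the argument. The only care needed is to confirm that Lemma~\ref{18}(1) and the non-degeneracy of Theorem~\ref{17}(1) rest solely on the bilinear structure of $\langle\cdot,\cdot\rangle_f$ and not on polycyclicity of $\mathscr C$, so that they remain available for the code whose polycyclicity is not yet known.
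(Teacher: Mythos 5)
Your proposal follows essentially the same route as the paper's proof, which reads in full: ``It follows from the fact that $\operatorname{Ann}(\mathscr C)$ is an ideal of $\mathscr R_f$, and $(\mathscr C^{\circ})^{\circ} = \mathscr C$.'' The one point where you go beyond the paper is worth keeping: Lemma~\ref{18}(2) states $(\mathscr C^{\circ})^{\circ}=\mathscr C$ only for $f(x)$-polycyclic codes, whereas the backward implication needs it for a code not yet known to be polycyclic, and your verification of the identity for arbitrary linear codes (the inclusion $\mathscr C\subseteq(\mathscr C^{\circ})^{\circ}$ from symmetry of the product, plus the cardinality count via $\mathscr C^{\circ}=(\mathscr C A)^{\perp}$, invertibility of $A$, and the Frobenius property of $\mathscr R$) is sound and fills a gap that the paper's one-line proof glosses over.
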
 
 \begin{proof}
    It follows from the fact that $\operatorname{Ann}(\mathscr C)$ is an ideal of $\mathscr R_f$, and  $(\mathscr C^{\circ})^{\circ} = \mathscr C$.
 \end{proof}
  
 \begin{Theorem}\label{48} Let $f_0\in \mathscr{R}$  be an invertible element and  $\mathscr C$ be an $f(x)$-polycyclic code of length $n$ over $\mathscr R$ with the decomposition $\mathscr C \cong \oplus_{i=1}^k\mathscr C_i $.
    \begin{enumerate}
    \item  $\mathscr C^{\circ} \cong \oplus_{i=1}^k\mathscr C_i^{\circ}$.
    \item $\mathscr C$ is annihilator self-orthogonal (resp. self-dual, LCD, dual-containing) if and only if each $\mathscr C_i$ is annihilator self-orthogonal (resp. self-dual, LCD, dual-containing).
    \end{enumerate}
 \end{Theorem}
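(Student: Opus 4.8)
The plan is to reduce both statements to the ring isomorphism $\eta$ of Proposition~\ref{21} together with the identification $\mathscr C^\circ = \operatorname{Ann}(\mathscr C)$ from Theorem~\ref{17}(2), and then to exploit the elementary fact that annihilators in a finite direct product of rings are computed componentwise. Before doing so I would record the transfer of the invertibility hypothesis. Writing $f_0 = f_{1,0}e_{C_1} + \cdots + f_{k,0}e_{C_k}$ under the decomposition $\mathscr R \cong \oplus_{i=1}^k \mathscr R_i$ of Proposition~\ref{13}, an element of a direct product of rings is a unit if and only if each of its components is a unit; hence $f_0$ is invertible in $\mathscr R$ if and only if each $f_{i,0}$ is invertible in $\mathscr R_i$. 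This is precisely the hypothesis needed to apply Theorem~\ref{17} over each factor $\mathscr R_i$ with the polynomial $f_i$, so that $\mathscr C_i^\circ = \operatorname{Ann}(\mathscr C_i)$ holds in $\mathscr R_i[x]/\langle f_i(x)\rangle$ as well.

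For the first statement, I would argue as follows. By Theorem~\ref{17}(2) we have $\mathscr C^\circ = \operatorname{Ann}(\mathscr C)$ inside $\mathscr R_f$, and by Theorem~\ref{22}(1) the ring isomorphism $\eta$ of \eqref{24} sends $\mathscr C$ to the direct-sum ideal $\oplus_{i=1}^k \mathscr C_i$. Since $\eta$ is a ring isomorphism it preserves annihilators, so $\eta(\mathscr C^\circ) = \operatorname{Ann}\big(\oplus_{i=1}^k\mathscr C_i\big)$. Because multiplication in $\oplus_{i=1}^k \mathscr R_i[x]/\langle f_i(x)\rangle$ is componentwise, a tuple $(h_1,\ldots,h_k)$ annihilates $\oplus_i \mathscr C_i$ precisely when each $h_i$ annihilates $\mathscr C_i$; that is, $\operatorname{Ann}\big(\oplus_i \mathscr C_i\big) = \oplus_i \operatorname{Ann}(\mathscr C_i)$. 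Combining this with $\operatorname{Ann}(\mathscr C_i) = \mathscr C_i^\circ$ from the previous paragraph yields $\mathscr C^\circ \cong \oplus_{i=1}^k \mathscr C_i^\circ$.

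The second statement is then a formal consequence of the first together with $\mathscr C \cong \oplus_i \mathscr C_i$. Applying $\eta$ to both $\mathscr C$ and $\mathscr C^\circ$ turns each of the four conditions into a relation between the direct sums $\oplus_i \mathscr C_i$ and $\oplus_i \mathscr C_i^\circ$. Since the $e_{C_i}$ form a complete set of orthogonal idempotents, inclusion, equality and intersection of such direct sums are all decided coordinatewise: one has $\oplus_i \mathscr C_i \subseteq \oplus_i \mathscr C_i^\circ$ if and only if $\mathscr C_i \subseteq \mathscr C_i^\circ$ for every $i$, and likewise $\oplus_i \mathscr C_i = \oplus_i \mathscr C_i^\circ$, $\big(\oplus_i \mathscr C_i\big)\cap\big(\oplus_i \mathscr C_i^\circ\big)=\{0\}$, and $\oplus_i \mathscr C_i^\circ \subseteq \oplus_i \mathscr C_i$ each reduce to the corresponding statement on every factor. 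This delivers the self-orthogonal, self-dual, LCD, and dual-containing equivalences simultaneously.

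I expect the only genuinely delicate point to be the bookkeeping in the second paragraph: one must make sure that the locally defined annihilator product (remainder modulo $f_i$ followed by evaluation at $0$) on each factor is the one whose radical is the ideal-theoretic $\operatorname{Ann}(\mathscr C_i)$, so that the analytic definition of $\mathscr C_i^\circ$ and the algebraic annihilator genuinely coincide there. This is exactly where the invertibility of $f_{i,0}$ is consumed, and it is the reason Theorem~\ref{17} must be invoked both globally over $\mathscr R$ and locally over each $\mathscr R_i$; once that identification is secured, the direct-product computation of annihilators and the coordinatewise analysis of the four duality conditions are entirely routine.
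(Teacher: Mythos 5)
Your proposal is correct, and it reaches the conclusion by a genuinely different route from the paper. You both begin identically: the unit group of $\mathscr R \cong \oplus_{i=1}^k \mathscr R_i$ decomposes componentwise, so invertibility of $f_0$ passes to each $f_{i,0}$, which is what licenses applying Theorem~\ref{17} on every factor. From there the paper works with the Gram matrix: it invokes Lemma~\ref{18} to write $\mathscr C^\circ = (\mathscr C A)^\perp$, decomposes $A = \oplus_i A_i e_{C_i}$ along the idempotents, and pushes the Euclidean dual through the direct sum to get $\oplus_i (\mathscr C_i A_i)^\perp e_{C_i} = \oplus_i \mathscr C_i^\circ e_{C_i}$. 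You instead bypass the Gram matrix entirely, using the identification $\mathscr C^\circ = \operatorname{Ann}(\mathscr C)$ from Theorem~\ref{17}(2) together with the elementary fact that annihilators of ideals in a finite direct product of rings are computed componentwise, since $\eta$ is a ring isomorphism and hence carries $\operatorname{Ann}(\mathscr C)$ to $\operatorname{Ann}(\oplus_i \mathscr C_i) = \oplus_i \operatorname{Ann}(\mathscr C_i)$. Your route is arguably cleaner: it avoids having to justify that the Gram matrix and the Euclidean dual both decompose compatibly along the idempotents, and it makes transparent exactly where the invertibility of $f_0$ is consumed (namely, only in identifying $\mathscr C^\circ$ with the ideal-theoretic annihilator on each factor). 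What the paper's Gram-matrix approach buys in exchange is consistency with the machinery it reuses later in the annihilator CSS construction, where the relation $\mathscr C^\circ = (\mathscr C A)^\perp$ is needed anyway. Your handling of the second statement, reducing inclusion, equality, and trivial intersection of direct sums to the corresponding coordinatewise conditions, matches the paper's idempotent-multiplication argument in substance.
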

 
 \begin{proof}\
  \begin{enumerate}
    \item First, we claim that the annihilator product defined on each $\mathscr R_i[x]/\langle f_i(x)\rangle$ is non-degenerate. To prove it, note that since $\mathscr R \cong\oplus_{i=1}^k\mathscr R_{i}$, the group of units of the finite ring $\mathscr R$ decomposes as  $\mathscr R^* \cong\oplus_{i=1}^k\mathscr R^*_{i}$, see \cite[Chapter 1]{McDonald_1974}. Thus, since the constant part of the polynomial $f(x)$ is invertible, the constant part of each polynomial $f_i(x)$  is also invertible, and therefore, by Theorem \ref{17} the claim follows. According to the proof of the first statement in Proposition \ref{21}, the ring $\mathscr R[x]/\langle f(x)\rangle\cong \oplus_{i=1}^k\mathscr R_i[x]/\langle f_i(x)\rangle$ is isomorphic to $\oplus_{i=1}^k\mathscr R_i[x]/\langle f_i(x)\rangle e_{C_i}$, thus  $\mathscr C\cong \oplus_{i=1}^k\mathscr C_i\cong \oplus_{i=1}^k\mathscr C_ie_{C_i}$. Furthermore, since the entries of the Gram matrix $A$ are in $\mathscr R$, by Proposition \ref{13} yields $A = \oplus_{i=1}^k A_ie_{C_i} $, where $A_i$ is an $n \times n$ matrices over $\mathscr R$. Consequently, by Lemma \ref{18}, we have
      \begin{align*}
        \mathscr C^\circ = \left(\mathscr C A\right)^\perp & = \left(\bigg(\bigoplus_{i=1}^k\mathscr C_ie_{C_i}\bigg)\bigg(\bigoplus_{i=1}^k A_ie_{C_i}\bigg)\right)^\perp\\
                                                      &  = \left(\bigoplus_{i=1}^k\bigg(\mathscr C_i A_i\bigg) e_{C_i}\right)^\perp \\
                                                      &  = \bigoplus_{i=1}^k\bigg(\mathscr C_i A_i\bigg)^\perp e_{C_i}= \bigoplus_{i=1}^k\mathscr C_i^{\circ} e_{C_i}.
       \end{align*}
    \item If  $\mathscr C$ is annihilator self-orthogonal, then $\oplus_{i=1}^k\mathscr C_i e_{C_i}\subseteq \oplus_{i=1}^k\mathscr C_i^{\circ}e_{C_i}$. Multiplying the idempotent orthogonal $e_{C_i}$ on both sides implies $\mathscr C_i \subseteq \mathscr C_i^\circ$ for all $i$. The proof of the other statements follows a similar reasoning.
  \end{enumerate}
 \end{proof}

 \begin{Theorem}\label{47}
Let $f_0\in \mathscr{R}$  be an invertible element,  $\mathscr C$ is an $f(x)$-polycyclic code of length $n$ over $\mathscr R$ generated by a single polynomial $g(x) \in \mathscr R_f$, and the leading coefficient of $g(x)$ be a unit element in $\mathscr R$ and there exist $h(x) \in \mathscr R_f$ such that $f(x) = g(x)h(x)$. Then  
  \begin{enumerate}
         \item $\mathscr C$ is annihilator self-dual  if and only if $f(x)=ag^2(x)$ for some unit element $a$ in $\mathscr R$.
         \item $\mathscr C$ is annihilator self-orthogonal (dual-containing) if and only if $h(x) \mid g(x)$ ($g(x)\mid h(x)$).
        \item $\mathscr C$ is  annihilator LCD if and only if $\operatorname{gcd}(g(x), h(x)) = 1$.
  \end{enumerate}   
\end{Theorem}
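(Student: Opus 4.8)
The plan is to reduce all three equivalences to the single structural fact, supplied by Theorem~\ref{17}(3), that under the stated hypotheses $\mathscr C=\langle g(x)\rangle$ and $\mathscr C^{\circ}=\langle h(x)\rangle$ inside $\mathscr R_f$. Moreover the proof of that theorem already establishes $\operatorname{lc}(g)\operatorname{lc}(h)=\operatorname{lc}(f)=1$, so $\operatorname{lc}(h)$ is a unit as well, and $|\mathscr C|\,|\mathscr C^{\circ}|=|\mathscr R|^{n}$. The four annihilator conditions then translate verbatim into the ideal relations $\langle g\rangle\subseteq\langle h\rangle$ (self-orthogonal), $\langle g\rangle=\langle h\rangle$ (self-dual), $\langle g\rangle\cap\langle h\rangle=\{0\}$ (LCD), and $\langle h\rangle\subseteq\langle g\rangle$ (dual-containing), so the whole theorem becomes a statement about two principal ideals whose generators multiply to $f$.

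The first tool I would record is an explicit description of these ideals. Since $\operatorname{lc}(h)$ is a unit, the division algorithm recalled at the beginning of this section lets me reduce any $g(x)k(x)$ modulo $f(x)=g(x)h(x)$ to a genuine product $g(x)r(x)$ with $\deg r<\deg h$, so that
\[
\langle g\rangle=\{\,g(x)r(x): r\in\mathscr R[x],\ \deg r<\deg h\,\},
\]
and symmetrically $\langle h\rangle=\{\,h(x)r(x):\deg r<\deg g\,\}$. Because $\operatorname{lc}(g),\operatorname{lc}(h)$ are units and a unit times a nonzero coefficient stays nonzero, these representatives have degree exactly $\deg g+\deg r$ (resp.\ $\deg h+\deg r$); in particular the minimal degree of a nonzero element of $\langle g\rangle$ is $\deg g$ and of $\langle h\rangle$ is $\deg h$.

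With this description, Part (2) is immediate: $\langle g\rangle\subseteq\langle h\rangle$ holds iff $g\in\langle h\rangle$, i.e.\ iff $g=h\,r$ for some $r$, which is exactly $h\mid g$; the dual-containing case is the same argument with the roles of $g$ and $h$ interchanged. For Part (1) I would use both inclusions: if $\langle g\rangle=\langle h\rangle$, writing $h=g\,r$ and $g=h\,s$ and comparing degrees forces $\deg g=\deg h$ and $r,s$ to be constants, whence $\operatorname{lc}(h)=r\operatorname{lc}(g)$ shows $r$ is a unit; thus $h=ag$ with $a$ a unit and $f=gh=ag^{2}$. Conversely $f=ag^{2}$ forces $h=ag$ by uniqueness of the quotient $f/g$ (the leading coefficient of $g$ being a unit), so $\langle h\rangle=\langle ag\rangle=\langle g\rangle$.

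The main obstacle is Part (3), since $\mathscr R$ is neither a field nor a domain and $\mathscr R[x]$ need not be a principal ideal ring, so ``$\operatorname{gcd}(g,h)=1$'' must be read as comaximality, $1\in\langle g,h\rangle$, rather than as the absence of a common irreducible factor. I would sidestep any greatest-common-divisor theory over $\mathscr R$ by a counting argument: the isomorphism $(\mathscr C+\mathscr C^{\circ})/\mathscr C^{\circ}\cong\mathscr C/(\mathscr C\cap\mathscr C^{\circ})$ of $\mathscr R$-modules gives $|\mathscr C\cap\mathscr C^{\circ}|\,|\mathscr C+\mathscr C^{\circ}|=|\mathscr C|\,|\mathscr C^{\circ}|=|\mathscr R|^{n}$ by Theorem~\ref{17}(3). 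As $\mathscr C+\mathscr C^{\circ}=\langle g,h\rangle\subseteq\mathscr R_f$ and $|\mathscr R_f|=|\mathscr R|^{n}$, the intersection is trivial precisely when $\langle g,h\rangle=\mathscr R_f$, that is when $1\in\langle g,h\rangle$, which is exactly $\operatorname{gcd}(g,h)=1$ in this sense. As a cross-check, the entire theorem could alternatively be deduced componentwise from the chain-ring decomposition $\mathscr C\cong\oplus_{i=1}^{k}\mathscr C_i$ of Theorem~\ref{48}, but the direct argument above requires no such reduction.
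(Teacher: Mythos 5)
Your proposal is correct, and for parts (1) and (2) it follows essentially the same route as the paper: both rest on Theorem~\ref{17}(3) giving $\mathscr C=\langle g\rangle$, $\mathscr C^{\circ}=\langle h\rangle$ and $\operatorname{lc}(g)\operatorname{lc}(h)=1$, and both settle the self-dual case by writing one generator as a multiple of the other and comparing degrees (the paper works with the pair of relations $g^2=l\,f$ and $f=k\,g^2$, you with $h=gr$ and $g=hs$; the degree bookkeeping is the same). Your explicit canonical-representative description of $\langle g\rangle$ and $\langle h\rangle$ via division by the unit-leading-coefficient cofactor is a welcome addition: the paper's part (2) is dispatched with a one-line ``which completes the proof,'' and your lemma is precisely what is needed to pass from membership in an ideal of $\mathscr R_f$ to genuine polynomial divisibility. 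The real divergence is in part (3): the paper simply defers to Theorem~2.8 of Qi's work on $\mathbb F_q+u\mathbb F_q$, whereas you give a self-contained counting argument, combining the second isomorphism theorem with the identity $|\mathscr C|\,|\mathscr C^{\circ}|=|\mathscr R|^{n}$ from Theorem~\ref{17}(3) to show that $\mathscr C\cap\mathscr C^{\circ}=\{0\}$ is equivalent to $\langle g,h\rangle=\mathscr R_f$. This buys independence from the cited reference and makes explicit that over the non-domain $\mathscr R$ the condition $\gcd(g,h)=1$ must be read as comaximality $1\in\langle g,h\rangle$ rather than via irreducible factorizations --- a point the paper leaves implicit; the only caveat is that your reading of $\gcd$ should be flagged as a definition, since the paper never fixes one.
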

\begin{proof}\
    \begin{enumerate}
        \item If $f(x)=ag^2(x)$, according to Theorem \ref{17}, we have $\mathscr C^{\circ}=\langle ag(x) \rangle=\langle g(x) \rangle=\mathscr C$. Conversely, let $\mathscr C$ be annihilator self-dual. Then, $g(x)\in \mathscr C$ implies $g(x)\in \mathscr C^\circ$. Thus, $g^2(x)=0 (\operatorname{mod}f(x))$, that means  there exists $l(x)$ in $\mathscr R[x]$  satisfying
        \begin{equation}\label{45}
            g^2(x)=l(x)f(x).
        \end{equation}   
        On the other side, by Theorem \ref{17}, we have $\mathscr C^{\circ}=\langle h(x) \rangle$. Since $\mathscr C^{\circ}=\mathscr C$, it follows that $\langle h(x) \rangle=\langle g(x) \rangle$. Thus, there exists $k(x)$ in $\mathscr R[x]$ such that $h(x)=k(x)g(x)$. Consequently,
    \begin{equation}\label{46}
        f(x)=k(x)g^2(x).
    \end{equation}  
   Now, since $\operatorname{lc}(g)\operatorname{lc}(g)\neq 0$,  $\operatorname{lc}(l)\operatorname{lc}(f)\neq 0$ and $\operatorname{lc}(k)\operatorname{lc}(g^2)\neq 0$, employing Remark \ref{44} and Eqs. \eqref{45} and \eqref{46}, we deduce
    $2\deg g=\deg l+\deg f$ and $\deg f=\deg k+2\deg g$, which imply $\deg l+\deg k=0$. Consequently, $\deg l=\deg k=0$. Thus, there exist elements $a, b\in R$ such that $l(x)=a$ , $k(x)=b$. Note that since leading coefficients of $f,g$ are units,  Eq.s \eqref{45} and \eqref{46} yield $a$ and $b$ are units.
    \item  Using Theorem \ref{17}, we obtain $\mathscr C=\langle g(x) \rangle$ and $\mathscr C^{\circ}=\langle h(x) \rangle$, which complete the proof. 
    \item Similar to the proof of Theorem 2.8 in \cite{Qi_2022}.
             
  \end{enumerate}
  \end{proof}
  
\begin{Remark}\label{56}
In Theorem 3 of \cite{Alahmadi_2016}, it is asserted ``A polycyclic code $\langle g \rangle$ in $\mathbb{F}[x]/\langle f \rangle$ is annihilator self-dual if and only if $f = g^2$''. The authors provided a proof for only one direction: ``If $f = g^2$, then the polycyclic code $\langle g \rangle$ in $\mathbb{F}[x]/\langle f \rangle$ is annihilator self-dual''. However, they omitted the proof for the converse statement. The proof of Theorem \ref{47} in this paper shows that the converse statement of their theorem is not true in general. Similarly, in Theorem 2.7 of \cite{Qi_2022}, the author attempted to prove Theorem 3 of \cite{Alahmadi_2016} for the case $\mathbb F$ is replaced with the ring $A=\mathbb F_q+\mathbb F_q$ with $u^2=0$. The author's proof relies on this statement: ``If $h(x)$ divides $g(x)$ and $g(x)$ divides $h(x)$, then $h(x)$ equals $g(x)$'', without considering that this divisibility relation cannot be true for elements in the ring of polynomials in general.
\end{Remark}

\begin{Corollary}\label{49}
    With the conditions in Theorem \ref{47}. Suppose that $\eta(g(x))=\big(g_1(x), g_2(x), \ldots, g_k(x)\big)$  and $\eta(h(x))=\big(h_1(x), h_2(x), \ldots, h_k(x)\big)$, where  the mapping $\eta$ defined in Equation \eqref{24}.
     \begin{enumerate}
         \item $\mathscr C$ is an annihilator self-dual code if and only if $f_i(x)=a_ig_i^2(x)$, for some unit elements $a_i$ in $\mathscr R_i$, for all $i\in\{ 1,\ldots, k\}$.
         \item $\mathscr C$ is an annihilator self-orthogonal (dual-containing) code if and only if $h_i(x) \mid g_i(x)$ ($g_i(x) \mid h_i(x)$) for all $i\in\{ 1,\ldots, k\}$.
         \item $\mathscr C$ is an annihilator LCD code if and only if $\operatorname{gcd}(g_i(x), h_i(x)) = 1$ for all $i\in\{ 1,\ldots, k\}$.
  \end{enumerate}   
\end{Corollary}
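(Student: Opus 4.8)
The plan is to reduce the global statement to a componentwise one by combining the decomposition of Theorem \ref{48} with the single-generator characterizations of Theorem \ref{47}, applied over each chain ring $\mathscr R_i$. First I would observe that, since $\eta$ in \eqref{24} is a ring isomorphism, the factorization $f(x) = g(x)h(x)$ transports componentwise to $f_i(x) = g_i(x)h_i(x)$ in each $\mathscr R_i[x]/\langle f_i(x)\rangle$; likewise $\eta$ carries the ideal $\mathscr C = \langle g(x)\rangle$ to $\bigoplus_{i=1}^k \langle g_i(x)\rangle$, so that $\mathscr C_i = \langle g_i(x)\rangle$ (cf.\ Theorem \ref{22}).

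Next I would verify that the hypotheses of Theorem \ref{47} hold for each component over $\mathscr R_i$. Because $\mathscr R^* \cong \bigoplus_{i=1}^k \mathscr R_i^*$ (as used in the proof of Theorem \ref{48}), invertibility of $f_0$ in $\mathscr R$ forces invertibility of each constant term $f_{i,0}$ in $\mathscr R_i$. The subtle point, and the one step that deserves genuine care, is the leading coefficient: writing the degree-$d$ coefficient of $g(x)$ as $g_d = \sum_i g_{i,d}\, e_{C_i}$, the assumption that $\operatorname{lc}(g) = g_d$ is a unit of $\mathscr R$ forces each $g_{i,d}$ to be a unit of $\mathscr R_i$ (units decompose along the idempotents). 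In particular no component drops in degree, so $\deg g_i = d$ and $\operatorname{lc}(g_i) = g_{i,d}$ is a unit for every $i$. This is precisely what makes the hypotheses of Theorem \ref{47} available simultaneously for all the $\mathscr C_i$.

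With the hypotheses in place, I would apply Theorem \ref{47} to each $\mathscr C_i = \langle g_i(x)\rangle$ in $\mathscr R_i[x]/\langle f_i(x)\rangle$, obtaining that $\mathscr C_i$ is annihilator self-dual iff $f_i = a_i g_i^2$ for a unit $a_i \in \mathscr R_i$; annihilator self-orthogonal (resp.\ dual-containing) iff $h_i \mid g_i$ (resp.\ $g_i \mid h_i$); and annihilator LCD iff $\gcd(g_i, h_i) = 1$. Finally I would invoke Theorem \ref{48}(2), which asserts that $\mathscr C$ enjoys any one of these four annihilator properties precisely when every $\mathscr C_i$ does. Chaining these two equivalences delivers the three claimed statements at once.

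The only genuine obstacle is the degree and leading-coefficient bookkeeping of the second paragraph, which is what guarantees that each generator $g_i$ retains a unit leading coefficient so that Theorem \ref{47} applies unchanged. Once that is settled, the argument is a direct concatenation of Theorem \ref{47} (componentwise) with Theorem \ref{48}(2) (gluing), and no further computation is required.
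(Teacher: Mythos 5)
Your proposal is correct and follows essentially the same route as the paper: decompose via $\eta$, use $\mathscr R^* \cong \bigoplus_{i=1}^k \mathscr R_i^*$ to see that the unit leading coefficient of $g(x)$ (and the invertibility of $f_0$) pass to each component so that Theorem \ref{47} applies to every $\mathscr C_i$, and then glue with Theorem \ref{48}. Your extra care with the degree bookkeeping (no component of $g$ drops in degree) only makes explicit a point the paper leaves implicit.
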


\begin{proof}
   We will proof the first statement, and the other ones follow similarly. Note that since $\mathscr R \cong\oplus_{i=1}^k\mathscr R_{i}$, the group of units of the  ring $\mathscr R$ decomposes as  $\mathscr R^* \cong\oplus_{i=1}^k\mathscr R^*_{i}$  \cite[Chapter 1]{McDonald_1974}. Thus, if the leading coefficient of $g(x)$ is a unit element in $\mathscr R$, then the leading coefficient of $g_i(x)$ is a unit element in $\mathscr R_i$ for all $i\in\{ 1,\ldots, k\}$. Therefore, by Theorem \ref{47}, each $\mathscr C_i$  is an annihilator self-dual code if and only if $f_i(x)=a_ig_i^2(x)$ for some unit element $a_i$  in $\mathscr R_i$. Then, the proof follows by Theorem \ref{48}.
\end{proof}

\begin{Example}\label{sh1}
    Consider the ring $\mathscr R = R[x_1, x_2, x_3]/\langle x_1^2-1, x_2^2-1, x_3^2-1 \rangle$, where $R$ is a chain ring with the residue field $ \mathbb F_q$. With a similar approach as in \cite{Ashraf_2022}, any element $r \in \mathscr R$ of the form $r = a_1 + a_2x_1 + a_3x_2 + a_4x_3 + a_5x_1x_2 + a_6x_2x_3 + a_7x_1x_3 + a_8x_1x_2x_3$ can be uniquely expressed as $r_1e_1+r_2e_2+r_3e_3+r_4e_4+r_5e_5+r_6e_6+r_7e_7+r_7e_8$, where $r_i\in \mathscr R$ and
   \begin{align*}
e_1 &= t(1 + x_1 + x_2 + x_3 + x_1x_2 + x_2x_3 + x_1x_3 + x_1x_2x_3),\\
e_2 &= t(1 + x_1 + x_2 - x_3 + x_1x_2 - x_2x_3 - x_1x_3 -x_1x_2x_3 ),\\
e_3 &= t(1 + x_1 - x_2 + x_3 - x_1x_2 - x_2x_3 + x_1x_3 - x_1x_2x_3),\\
e_4 &= t(1 - x_1 + x_2 + x_3 - x_1x_2 + x_2x_3 - x_1x_3 - x_1x_2x_3),\\
e_5 &= t(1 + x_1 - x_2 - x_3 - x_1x_2 + x_2x_3 - x_1x_3 + x_1x_2x_3),\\
e_6 &= t(1 - x_1 - x_2 + x_3 + x_1x_2 - x_2x_3 - x_1x_3 + x_1x_2x_3),\\
e_7 &= t(1 - x_1 + x_2 - x_3 - x_1x_2 - x_2x_3 + x_1x_3 + x_1x_2x_3),\\
e_8 &= t(1 - x_1 - x_2 - x_3 + x_1x_2 + x_2x_3 + x_1x_3 - x_1x_2x_3),
\end{align*}
where $t\in \mathbb F_q$ and $8t=1 \mod p$.
As a special case, let $R=\mathbb Z_9$ and $t=2$. 
\begin{enumerate}
 \item Suppose that $f_0=2x_1-2x_2-2x_3+x_1x_2x_3$, $f_1=x_1+2x_3$,$f_2=1-2x_1+2x_2+2x_3+x_1x_2+x_2x_3$, $g_0=x_1-2x_3+x_1x_2$, $g_1=-2x_1x_2-2x_2x_3-2x_1x_3$ and $g_2=1+x_1+x_2+x_3+x_1x_2+x_2x_3+x_1x_3+x_1x_2x_3$ are elements in $\mathscr R$. Let $\mathscr C$ be a code in $\mathscr R[x]/\langle f(x)\rangle$ generated by $g(x)=g_2x^2+g_1x+g_0$, where $f(x)=x^3-f_2x^2-f_1x-f_0$.  It is easy to see that
\begin{align*}
   f_0=& 2e_1+e_2+e_5+2e_8, & g_0=& 2e_1+2e_2+e_5+2e_8,\\
   f_1=&2e_4+e_5+2e_7+e_8, & g_1=&e_1+2e_2+2e_3+e_4+e_6+2e_7,\\
   f_2=&2e_4+2e_5+e_6+e_7+e_8,  & g_2=& 2e_1+2e_2+2e_3++2e_4+2e_5+2e_6+2e_7+2e_8.
\end{align*}
Therefore, 
\begin{align*}
    f_1(x) &= x^3 + 1, & f_2(x) &= x^3 + 2, & f_3(x) &= x^3 + x^2, & f_4(x) &= x^3 + x^2 + x, \\
    f_5(x) &= x^3 + x^2 +2x+ 2, & f_6(x) &= x^3 + 2x^2, & f_7(x) &= x^3 + 2x^2 + x , & f_8(x) &= x^3 + 2x^2 + 2x + 1,
\end{align*}
 and 
\begin{align*}
    g_1(x) &=  2x^2+x+2, & g_2(x) &=  2x^2+2x+2, & g_3(x) &= 2x^2+2x, & g_4(x) &= 2x^2 + x, \\
    g_5(x) &=  2x^2 +1, & g_6(x) &=  2x^2+x, & g_7(x) &=  2x^2 + 2x , & g_8(x) &= 2x^2 +1.
\end{align*}
One can easily see that  $g^2_i(x)=f_i(x)$ for all $i$. Using Corollary \ref{49},  $\mathscr C$ is an annihilator self-dual code in $\mathscr R[x]/\langle f(x)\rangle$. 

\item Suppose now that $f_0=-x_3-x_1x_2+x_2x_3-x_1x_2x_3$, $f_1=2x_1+x_1x_2+2x_2x_3+2x_1x_3+2x_1x_2x_3$, $f_2=1+2x_2-2x_3+2x_1x_2-2x_1x_3$,    $g_0=2x_1+2x_2+x_3+x_2x_3+x_1x_3$, $g_1=2+x_1+2x_2+x_3+2x_1x_2+2x_1x_3$ and $g_2=2+x_1+2x_2+x_3+x_1x_2+x_2x_3+x_1x_3+x_1x_2x_3$ are elements in $\mathscr R$. Let $\mathscr C$ be a code in $\mathscr R[x]/\langle f(x)\rangle$ generated by $g(x)=g_2x^2+g_1x+g_0$, where $f(x)=x^3-f_2x^2-f_1x-f_0$.  It is easy to see that
\begin{align*}
   f_0=& 2e_1+2e_2+2e_3+2e_4+2e_5+2e_6+e_7+2e_8, & g_0=& 2e_1+2e_2+e_5+e_8, \\
   f_1=&e_1+2e_4+2e_6+e_8, & g_1=&2e_1+e_4+e_6+2e_7, \\
   f_2=&2e_2+e_3+e_4+e_8, &  g_2=&2e_1+2e_2+2e_3++2e_4+2e_5+2e_6+2e_7+2e_8.
\end{align*}
Therefore, 
\begin{align*}
    f_1(x) &= x^3 + 2x + 1, & f_2(x) &= x^3 + x^2 + 1, & f_3(x) &= x^3 + 2x^2 + 1, & f_4(x) &= x^3 + 2x^2 + x + 1, \\
    f_5(x) &= x^3 + 1, & f_6(x) &= x^3 + x + 1, & f_7(x) &= x^3 + 2 , & f_8(x) &= x^3 + 2x^2 + 2x + 1,
\end{align*}
 and 
\begin{align*}
    g_1(x) &=  x^2 + 2x + 2, & g_2(x) &=  x^2 + 2x + 2, & g_3(x) &= x^2 + x + 2, & g_4(x) &= x^2 + x + 2, \\
    g_5(x) &=  x^2 + 2x + 2, & g_6(x) &=  x^2 + x + 2, & g_7(x) &= x^2 + x + 2 , & g_8(x) &= x^2 + 2x + 2.
\end{align*}
Assume that $h_i(x)=x+1$.  The greatest common divisor of \( g_i(x) \) and \( h_i(x) \) is $1$ for all $i$. Moreover, $f_i(x)=g_i(x)h_i(x)$.  By Corollary \ref{49}, the code $\mathscr C$ generated by $g(x)$ is an annihilator LCD code in $\mathscr R[x]/\langle f(x)\rangle$.
\end{enumerate}
\end{Example}
%%%%%%%%%%%%%%%%%%%%%%%%%%%%%%%%%%%%%%%%%%%%%%%%%%%%
\section{Annihilator CSS construction}\label{76}
{The Euclidean CSS construction is applied to classical linear codes that preserve Euclidean duals to get quantum codes, see for example \cite{Nadella_2012}. In this section, we aim to introduce the annihilator CSS construction for polycyclic codes based on the annihilator dual.}

Let $R$ be a finite commutative Frobenius ring with $\mathfrak{q}$ elements. Let $f(x) = x^n - \sum_{r=0}^{n-1} f_{r}x^{r}\in R[x]$ and $C$ be an $f(x)$-polycyclic code. According to Section 6, the annihilator product $\langle g(x), h(x) \rangle_f$ is defined as $r(0)$, where $r(x)$ is the reminder of $gh(x)$ divided by $f$. The annihilator dual with respect to this product is denoted by $C^{\circ}.$ The Gram matrix $A$ associated with the polynomial $f$ is a $n\times n$ matrix with entries $a_{i,j}=\langle x^{i-1}, x^{j-1}\rangle_f$. For example, if  $f(x) = x^5 - f_4x^4 - f_3x^3 - f_2x^2 - f_1x - f_0$, then
\begin{equation}\label{123}
A = 
\begin{pmatrix}
    1 & 0 & 0 & 0 & 0 \\[4pt]
    0 & 0 & 0 & 0 & f_0 \\[4pt]
    0 & 0 & 0 & f_0 & f_0f_4 \\[4pt]
    0 & 0 & f_0 & f_0f_4 & f_0(f_4^2 + f_3) \\[4pt]
    0 & f_0 & f_0f_4 & f_0(f_4^2 + f_3) & f_0\left(f_4^3 + 2f_4f_3+f_2\right)
\end{pmatrix}.
\end{equation}
The Gram matrix $A$ is symmetric, and if  $f_0$ is invertible, then $A$ is also invertible. The following lemma provides a recursive formula for calculating the matrix $A$. 
\begin{Lemma}
 Let $f(x) = x^n - \sum_{r=0}^{n-1} f_{r}x^{r}\in R[x]$ and $A=(a_{i,j})$ be an $n\times n$ Gram matrix associated with $f$. Then  $a_{i,j} \;=\; c_{\,i+j-2}$, where
 \begin{equation*}
       c_k=
       \begin{cases}
         1, & k=0,\\
         0, & 0< k<n,\\
         \sum_{r=0}^{n-1} f_r\,c_{k-n+r}, & k\ge n.
       \end{cases}  
 \end{equation*}
\end{Lemma}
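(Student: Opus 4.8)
The plan is to reduce every entry of $A$ to the constant term of a single reduced monomial and then to read off the recurrence from the defining relation of $\mathscr{R}_f$. First I would observe that, by the definition of the annihilator product, $a_{i,j}=\langle x^{i-1},x^{j-1}\rangle_f$ is the value at $0$ of the remainder of $x^{i-1}x^{j-1}=x^{\,i+j-2}$ modulo $f(x)$. In particular $a_{i,j}$ depends on $i$ and $j$ only through $k=i+j-2$, so it is consistent to define $c_k$ as the constant term of the remainder of $x^k$ upon division by $f(x)$; with this definition $a_{i,j}=c_{i+j-2}$ holds by construction, and the whole lemma collapses to verifying that $c_k$ satisfies the three displayed cases.

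For the base cases, note that when $0\le k<n$ the monomial $x^k$ already has degree strictly less than $n=\deg f$, so it is its own remainder modulo $f(x)$. Its constant term is therefore $1$ if $k=0$ and $0$ if $0<k<n$, which yields the first two branches of the recurrence.

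The core of the argument is the case $k\ge n$, and the main point to get right is the $R$-linearity of the reduction. Let $\epsilon:R[x]\to R$ be the map sending $g(x)$ to the constant term of $\big(g(x)\bmod f(x)\big)$; since taking the remainder modulo the monic polynomial $f(x)$ is an $R$-module homomorphism and extraction of the constant term is $R$-linear, the composite $\epsilon$ is $R$-linear, and by construction $c_k=\epsilon(x^k)$. Because $f(x)=x^n-\sum_{r=0}^{n-1}f_rx^r$ vanishes in $\mathscr{R}_f$, we have the congruence $x^n\equiv\sum_{r=0}^{n-1}f_rx^r\pmod{f(x)}$; multiplying by $x^{k-n}$, which is legitimate since $k\ge n$, gives $x^k\equiv\sum_{r=0}^{n-1}f_rx^{\,k-n+r}\pmod{f(x)}$. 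Applying $\epsilon$ to both sides and using its linearity yields $c_k=\sum_{r=0}^{n-1}f_r\,c_{\,k-n+r}$, which is exactly the third branch. The indices $k-n+r$ run from $k-n\ge 0$ up to $k-1<k$, so each term on the right is a previously defined value and the recursion is well founded. I do not expect a genuine obstacle here; the only delicate point is to justify that ``constant term of the remainder'' is additive and $R$-homogeneous, after which the recurrence is forced by the single congruence $x^n\equiv\sum_{r=0}^{n-1}f_rx^r$.
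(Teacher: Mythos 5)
Your proposal is correct and follows essentially the same route as the paper: both identify $a_{i,j}$ with the constant term $c_{i+j-2}$ of $x^{i+j-2}$ reduced modulo $f(x)$, handle $0\le k<n$ by noting $x^k$ is its own remainder, and derive the case $k\ge n$ from the congruence $x^k\equiv\sum_{r=0}^{n-1}f_r x^{\,k-n+r}\pmod{f(x)}$. Your explicit justification that the ``constant term of the remainder'' map is $R$-linear is a point the paper leaves implicit, but the argument is otherwise identical.
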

\begin{proof}
To compute each $a_{i,j}$, it suffices to find the constant term of the associated polynomial
\begin{equation*}
x^{\,i+j-2}\bigr\rvert_{x^n = f_{n-1}x^{n-1} + \cdots + f_1 x + f_0}.
\end{equation*}
Let us denote the constant term of $x^k\vert_{x^n}$ by $c_k$. Therefore, $A$ is the following Hankel matrix\footnote{A matrix is called Hankel if each ascending skew-diagonal from left to right is constant.}
\begin{equation*}
A=\begin{pmatrix}
c_0 & c_1 & c_2 & \cdots & c_{n-2} & c_{n-1} \\
c_1 & c_2 & c_3 & \ldots & c_{n-1} & c_n \\
c_2 & c_3 & c_4 & \ldots & c_n & c_{n+1} \\
\vdots & \vdots  & \vdots  & \ddots & \vdots  & \vdots \\
c_{n-2} & c_{n-1} & c_n & \ldots & c_{2n-4} & c_{2n-3} \\
c_{n-1} & c_n & c_{n+1} & \ldots & c_{2n-3} & c_{2n-2}
\end{pmatrix}.
\end{equation*}
It is clear that $c_k=1$ if $k=0$, and $c_k=0$ if $0 < k< n$. For $k\geqslant n$, we have $x^k=x^{k-n} x^n= x^{k-n}\sum_{r=0}^{n-1} f_r\,x^r=\sum_{r=0}^{n-1} f_r\, x^{(k-n)+r}$. Therefore, 
\begin{equation*}
c_k =\left(\text{constant term of }x^k\bigr\rvert_{x^n}\right)=
\sum_{r=0}^{n-1} f_r \left( \left. \text{constant term of } x^{(k-n)+r} \right|_{x^n} \right)= \sum_{r=0}^{n-1} f_r\,c_{\,k-n+r}.
\end{equation*}
\end{proof}

\begin{Lemma}\label{khk4}
 If $R$ is a commutative ring and $f_0$ is invertible, then $C^{\circ}=(C A)^\perp= C^\perp A^{-1}$, where $C A=\{cA\,|\,c\in C\}$.
\end{Lemma}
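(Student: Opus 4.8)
The plan is to reduce both claimed equalities to the observation that the annihilator product is exactly the bilinear form encoded by the Gram matrix $A$. First I would record that, writing $g(x)=\sum_{i=0}^{n-1}g_ix^{i}$ and $h(x)=\sum_{j=0}^{n-1}h_jx^{j}$ with coordinate row vectors $\mathbf g=(g_0,\dots,g_{n-1})$ and $\mathbf h=(h_0,\dots,h_{n-1})$, the $R$-linearity of ``remainder modulo $f$'' (valid because $f$ is monic, so its leading coefficient is a unit) and of evaluation at $0$ make $\langle\,\cdot\,,\,\cdot\,\rangle_f$ bilinear. Combined with the identification $a_{i,j}=\langle x^{i-1},x^{j-1}\rangle_f$, this gives
\[
\langle g(x),h(x)\rangle_f=\sum_{i,j}g_ih_j\langle x^{i},x^{j}\rangle_f=\mathbf g\,A\,\mathbf h^{\mathrm{tr}}=\langle \mathbf gA,\mathbf h\rangle .
\]
Hence $h(x)\in C^{\circ}$ holds if and only if $\langle \mathbf gA,\mathbf h\rangle=0$ for every $g(x)\in C$, which is precisely the condition $\mathbf h\in(CA)^{\perp}$. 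This yields the first equality $C^{\circ}=(CA)^{\perp}$ with no further computation.

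For the second equality I would use that $f_0$ invertible forces $A$ to be invertible (as noted just after \eqref{123}) and that $A$ is symmetric, so $A^{-1}$ is symmetric as well because $(A^{-1})^{\mathrm{tr}}=(A^{\mathrm{tr}})^{-1}=A^{-1}$. Both inclusions then fall out by transposing. For $C^{\perp}A^{-1}\subseteq(CA)^{\perp}$, I take $\mathbf w\in C^{\perp}$ and set $\mathbf y=\mathbf wA^{-1}$; then for any $\mathbf g\in C$,
\[
\mathbf g\,A\,\mathbf y^{\mathrm{tr}}=\mathbf g\,A\,(A^{-1})^{\mathrm{tr}}\mathbf w^{\mathrm{tr}}=\mathbf g\,A\,A^{-1}\mathbf w^{\mathrm{tr}}=\mathbf g\,\mathbf w^{\mathrm{tr}}=0,
\]
using symmetry of $A^{-1}$ and $\mathbf w\in C^{\perp}$, so $\mathbf y\in(CA)^{\perp}$. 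Conversely, given $\mathbf y\in(CA)^{\perp}$, I put $\mathbf w=\mathbf yA$, so that $\mathbf y=\mathbf wA^{-1}$; then for every $\mathbf g\in C$,
\[
\mathbf g\,\mathbf w^{\mathrm{tr}}=\mathbf g\,A^{\mathrm{tr}}\mathbf y^{\mathrm{tr}}=\mathbf g\,A\,\mathbf y^{\mathrm{tr}}=0,
\]
so $\mathbf w\in C^{\perp}$ and $\mathbf y\in C^{\perp}A^{-1}$. The two inclusions combine to give $(CA)^{\perp}=C^{\perp}A^{-1}$.

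The step I would treat most carefully is that we work over a general commutative (Frobenius) ring rather than a field, so I cannot invoke dimension counting to hand one inclusion for free; instead both inclusions must be proved by the explicit transpose computations above, which remain valid verbatim once $A$ is a unit in the matrix ring $M_n(R)$. I would also double-check the index bookkeeping $a_{i,j}=\langle x^{i-1},x^{j-1}\rangle_f$ against the $0$-based coordinate vectors before attaching the Euclidean product, verifying that $\langle g,h\rangle_f=\mathbf gA\mathbf h^{\mathrm{tr}}$ and not $\mathbf hA\mathbf g^{\mathrm{tr}}$; symmetry of $A$ renders a slip here harmless in the final statement, but it is the one place where a silent transposition could creep in.
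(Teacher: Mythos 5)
Your proposal is correct and follows essentially the same route as the paper: the first equality comes from identifying the annihilator product with the bilinear form $\mathbf g A\mathbf h^{\mathrm{tr}}$ (the paper delegates this to Lemma~4 of \cite{Fotue_2020}, which you spell out directly), and the second from the symmetry and invertibility of $A$ via $\langle cA,v\rangle=\langle c,vA\rangle$. Your two explicit inclusions are just the paper's chain of equivalences unpacked, so there is no substantive difference.
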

\begin{proof}
The first equality can be proved similarly to Lemma 4 in \cite{Fotue_2020}. Consider the vectors $c, v \in {R}^n$. Since $A$ is symmetric, we have
$\langle cA,  v\rangle=cAv^\text{tr}=c(vA)^\text{tr}=\langle c,  vA\rangle$, where $\langle \,,\,\rangle$ denotes the Euclidean inner product. So, $(C A)^\perp=\bigl\{v \in {R}^n \mid \langle cA, v\rangle = 0,\,\forall c\in C\bigr\}=\bigl\{v\in {R}^n \mid \langle c , vA\rangle = 0,\,\forall c\in C\bigr\}$. Thus, $v \in (C A)^\perp$ if and only if $vA \in C^\perp$. Since $A$ is invertible, this is equivalent to $v \in (C A)^\perp$ if and only if $v \in C^\perp A^{-1}$. Therefore, $(C A)^\perp= C^\perp A^{-1}$.
\end{proof}

\begin{Remark}
Suppose that $f(x)=x^n - f_0\in R[x]$ with $f_0$ invertible. Then $A$ and  its inverse are the monomial matrices
\begin{equation*}
A=  
\begin{pmatrix}
1 & 0 & \ldots & 0 & 0 \\
0 & 0 & \ldots & 0 & f_0 \\
0 & 0 & \ldots & f_0 & 0 \\
\vdots & \vdots & \iddots & \vdots & \vdots \\
0 & f_0 & \ldots & 0&0 
\end{pmatrix},\qquad 
A^{-1}=  
\begin{pmatrix}
1 & 0 & \ldots & 0 & 0 \\
0 & 0 & \ldots & 0 & f_0^{-1} \\
0 & 0 & \ldots & f_0^{-1} & 0 \\
\vdots & \vdots & \iddots & \vdots & \vdots \\
0 & f_0^{-1} & \ldots & 0&0 
\end{pmatrix}.
\end{equation*}
In this case, since $C^{\circ}=C^\perp A^{-1}$, the dual codes $C^\circ$ and $C^\perp$ are monomially equivalent, which ensures they have the same minimum Hamming distance. However, in general, $A$ may not preserve the minimum Hamming distance because it may not preserve the Hamming weight. 
\end{Remark}

Below is a well-known theorem for constructing quantum codes from classical linear codes that preserve the Euclidean dual. Note that $[[n, k, d]]_{\mathfrak{q}}$ will denote a $R$-ary quantum code of length $n$, with $\mathfrak{q}^k$ codewords, and the minimum distance $d$.
\begin{Theorem}[Euclidean CSS construction \cite{Nadella_2012}]\label{khk3}
Let $C_1$ and $C_2$ be two linear codes over the finite commutative Frobenius ring $R$ with parameters $[n,k_1,d_1]_R$ and $[n,k_2,d_2]_R$, respectively. If $C_2^{\perp} \subseteq C_1$, then there exists a quantum stabilizer code with parameters $[[n, k_1+k_2-n,d]]_{\mathfrak{q}}$, where $d = \min\{\operatorname{wt}(c)\mid c \in (C_1 \setminus C_2^{\perp}) \cup (C_2 \setminus C_1^{\perp})\}$ is the minimum Hamming distance.  
\end{Theorem}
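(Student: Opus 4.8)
The plan is to realize the quantum code through the symplectic stabilizer formalism, which over a finite commutative Frobenius ring is available precisely because such a ring admits a generating character $\chi\in\operatorname{Hom}(R;\mathbb C^*)$. First I would set up the state space $(\mathbb C^{\mathfrak q})^{\otimes n}$ with orthonormal basis $\{\lvert v\rangle : v\in R^n\}$ and introduce the generalized Pauli operators $X(a)\lvert v\rangle=\lvert v+a\rangle$ and $Z(b)\lvert v\rangle=\chi(b\cdot v)\lvert v\rangle$ for $a,b\in R^n$. The structural fact I would record is that $X(a)Z(b)$ and $X(a')Z(b')$ commute exactly when the symplectic form $\langle(a\mid b),(a'\mid b')\rangle_s=a\cdot b'-b\cdot a'$ vanishes; it is non-degeneracy of the generating character $\chi$ that upgrades this from ``if'' to ``if and only if.'' Consequently, any symplectically self-orthogonal submodule $D\subseteq R^{2n}$ (that is, $D\subseteq D^{\perp_s}$) generates an abelian stabilizer group whose joint eigenspace is the quantum code.

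The second step is to build $D$ from the two classical codes by taking $D=C_1^\perp\times C_2^\perp\subseteq R^{2n}$. Expanding $\langle(a\mid b),(a'\mid b')\rangle_s$ on pairs drawn from $D$, both terms have the shape (element of $C_1^\perp$)$\cdot$(element of $C_2^\perp$), so symplectic self-orthogonality reduces to the single requirement that every element of $C_1^\perp$ be Euclidean-orthogonal to every element of $C_2^\perp$; since $(C_1^\perp)^\perp=C_1$ over a Frobenius ring, this is exactly the hypothesis $C_2^\perp\subseteq C_1$ (equivalently $C_1^\perp\subseteq C_2$). For the dimension count I would invoke that Frobenius rings are precisely those for which $|C|\,|C^\perp|=\mathfrak q^{\,n}$ holds for every linear code $C$; hence $|D|=|C_1^\perp|\,|C_2^\perp|=\mathfrak q^{\,n-k_1}\mathfrak q^{\,n-k_2}$, and the stabilizer code carries $\mathfrak q^{\,n}/|D|=\mathfrak q^{\,k_1+k_2-n}$ codewords, giving the claimed $k=k_1+k_2-n$.

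For the distance, I would compute the symplectic dual $D^{\perp_s}=C_2\times C_1$, again using $(C_i^\perp)^\perp=C_i$, so that the logical operators are represented by the nonzero cosets of $D$ in $D^{\perp_s}$ and the minimum distance is $\min\{\operatorname{swt}(u)\mid u\in D^{\perp_s}\setminus D\}$. Restricting to the pure $X$-type vectors $(a\mid 0)$ with $a\in C_2\setminus C_1^\perp$ and the pure $Z$-type vectors $(0\mid b)$ with $b\in C_1\setminus C_2^\perp$, and observing that $\operatorname{swt}(a\mid 0)=\operatorname{wt}(a)$ and $\operatorname{swt}(0\mid b)=\operatorname{wt}(b)$, delivers exactly $d=\min\{\operatorname{wt}(c)\mid c\in(C_1\setminus C_2^\perp)\cup(C_2\setminus C_1^\perp)\}$, which is the standard CSS distance obtained from pure coset representatives.

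The step I expect to be the main obstacle is the first one: transporting the stabilizer-to-symplectic dictionary from the field case to an arbitrary finite commutative Frobenius ring. Over a field the commutation relation and the eigenspace count are classical, whereas over a ring one must verify that $\chi$ being a \emph{generating} character makes the pairing $b\mapsto\chi(b\cdot v)$ non-degenerate on $R^n$, that the stabilizer group acts with eigenspaces of uniform size, and that $D^{\perp_s}/D$ genuinely indexes the logical operators. These are exactly the points where the Frobenius hypothesis (rather than mere finiteness) is indispensable, and where the double-dual identity $(C^\perp)^\perp=C$ quietly underpins both the self-orthogonality reduction and the identification of $D^{\perp_s}$.
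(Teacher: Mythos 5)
The paper never proves this theorem: it is stated as a known result and attributed to the cited dissertation \cite{Nadella_2012}, so there is no internal proof to compare against. Your reconstruction follows essentially the route of that cited source --- the character-theoretic Pauli/stabilizer formalism over a finite Frobenius ring, the CSS stabilizer module $D=C_1^\perp\times C_2^\perp$, symplectic self-orthogonality of $D$ deduced from $C_2^\perp\subseteq C_1$ via the double-dual identity, the cardinality count from the Frobenius duality $|C|\,|C^\perp|=\mathfrak q^{\,n}$, and the identification $D^{\perp_s}=C_2\times C_1$ --- and it is sound. Two points deserve tightening. First, for a single pair of operators the commutation criterion is $\chi(a\cdot b'-b\cdot a')=1$, \emph{not} vanishing of the symplectic form; it is only because $D$ is a submodule (hence closed under all scalar multiples) and $\chi$ is a generating character (its kernel contains no nonzero ideal) that abelianness of the generated group becomes equivalent to symplectic self-orthogonality of $D$ --- you gesture at this, but it is exactly where the Frobenius hypothesis enters and it should be stated for submodules rather than for individual operators. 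Second, your distance argument only exhibits the pure $X$- and $Z$-type vectors, which proves the code distance is at most $d=\min\{\operatorname{wt}(c)\mid c\in(C_1\setminus C_2^\perp)\cup(C_2\setminus C_1^\perp)\}$; to get equality you must also rule out mixed vectors of smaller symplectic weight, which follows in one line from $\operatorname{swt}(a\mid b)\geqslant\max\{\operatorname{wt}(a),\operatorname{wt}(b)\}$ together with the observation that any $(a\mid b)\in(C_2\times C_1)\setminus(C_1^\perp\times C_2^\perp)$ has $a\in C_2\setminus C_1^\perp$ or $b\in C_1\setminus C_2^\perp$. With those two sentences added, your argument is complete.
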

%%%%%%%%%%%%%%%%%%%%%%%%%%%%%%%%%%%%%%%%%%%%%%%%%%%
 Now, we are ready to introduce the CSS construction for polycyclic codes that preserve annihilator duals.
\begin{Theorem}[Annihilator CSS construction]\label{sh6}
Suppose that $R$ is a Frobenius ring and $f_0\in R$ is invertible. Let $f(x)=g_1(x)h_1(x)=g_2(x)h_2(x)$, and let $C_1=\langle g_1(x)\rangle$ and $C_2=\langle g_2(x)\rangle$ be two $f(x)$-polycyclic codes over $R$ with parameters $[n, k_1, d_1]_R$ and $[n, k_2, d_2]_R$, respectively. If  $C^\circ_2 \subseteq C_1$ then there exists a quantum stabilizer code with parameters $[[n, k_1+k_2-n, d]]_{\mathfrak{q}}$, where $d = \min\{\operatorname{wt}(c)\mid c \in (C_1 \setminus C_2^{\circ}) \cup (C_2A \setminus C_1^{\circ}A)\}$ is the minimum Hamming distance.
\end{Theorem}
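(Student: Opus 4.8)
The plan is to reduce the statement to the Euclidean CSS construction of Theorem~\ref{khk3} by using the Gram matrix $A$ to convert annihilator duality into Euclidean duality. The key tool is Lemma~\ref{khk4}, which states that for any $f(x)$-polycyclic code $C$ over $R$ with $f_0$ invertible one has $C^{\circ} = (CA)^{\perp} = C^{\perp}A^{-1}$. Since $f_0$ is invertible, $A$ is invertible, so right multiplication by $A$ is an $R$-linear automorphism of $R^n$; this is what lets us pass freely between the two dualities while controlling cardinalities.

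First I would set $D_1 = C_1$ and $D_2 = C_2 A$. Applying Lemma~\ref{khk4} to $C_2$ gives $D_2^{\perp} = (C_2 A)^{\perp} = C_2^{\circ}$, so the hypothesis $C_2^{\circ} \subseteq C_1$ is exactly the inclusion $D_2^{\perp} \subseteq D_1$ required by Theorem~\ref{khk3}. Because $A$ is invertible, the map $c \mapsto cA$ is a bijection on $R^n$, whence $\lvert D_2\rvert = \lvert C_2 A\rvert = \lvert C_2\rvert = \mathfrak{q}^{k_2}$; thus $D_2$ carries the same parameter $k_2$ as $C_2$, while $D_1 = C_1$ has parameter $k_1$.

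Next I would invoke Theorem~\ref{khk3} with the pair $(D_1, D_2)$ to obtain a quantum stabilizer code with parameters $[[n,\, k_1 + k_2 - n,\, d]]_{\mathfrak{q}}$, where $d = \min\{\operatorname{wt}(c) \mid c \in (D_1 \setminus D_2^{\perp}) \cup (D_2 \setminus D_1^{\perp})\}$. Substituting $D_1 = C_1$, $D_2 = C_2 A$ and $D_2^{\perp} = C_2^{\circ}$, the first piece becomes $C_1 \setminus C_2^{\circ}$. For the second piece I would again apply Lemma~\ref{khk4}, now to $C_1$: from $C_1^{\circ} = C_1^{\perp}A^{-1}$ we obtain the identity $C_1^{\perp} = C_1^{\circ}A$, so $D_1^{\perp} = C_1^{\perp} = C_1^{\circ}A$ and the second piece becomes $C_2 A \setminus C_1^{\circ}A$. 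This produces precisely the claimed distance
$$d = \min\{\operatorname{wt}(c) \mid c \in (C_1 \setminus C_2^{\circ}) \cup (C_2 A \setminus C_1^{\circ}A)\},$$
completing the argument.

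The proof is essentially bookkeeping once Lemma~\ref{khk4} is available, and I do not expect a deep obstacle. The one point requiring genuine care is the dimension claim for $D_2 = C_2 A$: one must use that $A$ is invertible (guaranteed by invertibility of $f_0$) so that right multiplication preserves cardinality, and that over a Frobenius ring the cardinality determines the parameter $k$ through $\lvert C\rvert = \mathfrak{q}^{k}$. The other delicate step is matching the two halves of the minimum-distance set exactly to the statement, which the identity $C_1^{\perp} = C_1^{\circ}A$ resolves cleanly.
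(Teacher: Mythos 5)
Your proposal is correct and follows essentially the same route as the paper: both replace $C_2$ by $C_2A$ in the Euclidean CSS construction (Theorem~\ref{khk3}), use invertibility of $A$ (from invertibility of $f_0$) to preserve the parameter $k_2$, and apply Lemma~\ref{khk4} to translate $(C_2A)^{\perp}=C_2^{\circ}$ and $C_1^{\perp}=C_1^{\circ}A$ in the hypothesis and the distance set. Your write-up is in fact slightly more explicit than the paper's about verifying the containment hypothesis $D_2^{\perp}\subseteq D_1$.
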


\begin{proof}
Since the $n\times n$ Gram matrix $A$ is invertible over $R$, then $xA = 0$ leads to $x=0$ for any $x\in R^n$ (similar to Corollary 2.4 in \cite{Liu_2015}). So the transformation $x\mapsto xA$ is a bijective over $R^n$, which implies that the dimension of the code $C_2 A$ equals to $ k_2$. So, if we replace $C_2$ by $C_2A$ in the Theorem above, then there exists a quantum stabilizer code with parameters $[[n, k_1+k_2-n, d]]_{\mathfrak{q}}$, where  $d = \min\{\operatorname{wt}(c)\mid c \in (C_1 \setminus (C_2A)^{\perp}) \cup (C_2A \setminus C_1^{\perp})\}$. Applying Lemma \ref{khk4}, we obtain $d = \min\{\operatorname{wt}(c)\mid c \in (C_1 \setminus C_2^{\circ}) \cup (C_2A \setminus C_1^{\circ}A)\}$. 
\end{proof}

\begin{Corollary}\label{71}
Suppose that $R$ is a finite Frobenius ring and $f_0\in R$ is invertible. Let $f(x)=g(x)h(x)$, and let $C=\langle g(x)\rangle$ be an  $f(x)$-polycyclic code over $R$ with parameters $[n,k,d_1]_R$. If $C^\circ \subseteq C$, then there exists an $[[n, 2k-n, d]]_{\mathfrak{q}}$ stabilizer code, where $ d=\min\{\operatorname{wt}(c) \mid c \in (C \setminus C^{\circ}) \cup (CA \setminus C^{\circ}A)\}$. 
\end{Corollary}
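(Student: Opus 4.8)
The plan is to obtain this corollary as the diagonal special case $C_1 = C_2 = C$ of the Annihilator CSS construction in Theorem~\ref{sh6}. Concretely, in Theorem~\ref{sh6} I would take $g_1(x) = g_2(x) = g(x)$, which forces $h_1(x) = h_2(x) = h(x)$ through the common factorization $f(x) = g(x)h(x)$, and hence $C_1 = C_2 = \langle g(x)\rangle = C$. No new code-theoretic input is needed; the entire substance already resides in the preceding theorem.

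With this identification, each ingredient of Theorem~\ref{sh6} specializes cleanly. Since $C_1$ and $C_2$ now share the dimension $k_1 = k_2 = k$, the dimension of the resulting quantum code, namely $k_1 + k_2 - n$, becomes $2k - n$. The inclusion hypothesis $C_2^{\circ} \subseteq C_1$ of Theorem~\ref{sh6} reduces exactly to the assumed containment $C^{\circ} \subseteq C$. Finally, the minimum-distance expression $d = \min\{\operatorname{wt}(c)\mid c \in (C_1 \setminus C_2^{\circ}) \cup (C_2A \setminus C_1^{\circ}A)\}$ collapses to $d = \min\{\operatorname{wt}(c)\mid c \in (C \setminus C^{\circ}) \cup (CA \setminus C^{\circ}A)\}$, which is precisely the claimed formula.

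The remaining hypotheses of Theorem~\ref{sh6}---that $R$ is a finite Frobenius ring and that $f_0$ is invertible---carry over verbatim, and the invertibility of $f_0$ is exactly what ensures, via the earlier discussion of the Gram matrix $A$, that $A$ is invertible; this keeps $x \mapsto xA$ a bijection on $R^n$ and so preserves $\dim(CA) = k$. Since every hypothesis of Theorem~\ref{sh6} is satisfied and every conclusion specializes to the asserted statement, the corollary follows at once. I do not anticipate any genuine obstacle here: this is simply the symmetric instance obtained by feeding a single $f(x)$-polycyclic code into both slots of the general construction, so the only thing to verify is that the parameter and distance formulas specialize correctly, as checked above.
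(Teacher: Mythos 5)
Your proposal is correct and matches the paper's (implicit) argument exactly: the corollary is precisely the diagonal specialization $C_1=C_2=C$ of Theorem~\ref{sh6}, and the paper offers no separate proof because the parameters, the containment hypothesis, and the distance formula all collapse verbatim as you describe. Nothing further is needed.
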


\begin{Example}
 Consider the serial ring $\mathscr R = \mathbb F_3[x_1, x_2]/\langle x_1^2-x_1, x_2^2-x_2 \rangle$, which is isomorphic to $\mathbb F_3+u\mathbb F_3+v\mathbb F_3+uv\mathbb F_3$ with $u^2=u$, $v^2=v$, and $uv=vu$. According to Proposition \ref{32}, $\mathscr R$ is a Frobenius ring.  Let $f(x)=x^5+2x^3+x^2+2=(x+1)^4(x+2)\in \mathscr R[x]$. The $f(x)$-polycyclic code $\mathscr C$ generated by $g(x)=x+1$ is a free $\mathscr R$-module of dimension $k=\deg(f)-\deg(g)$ with the generating set $\{g(x), xg(x),\ldots, x^{k-1}g(x)\}$. Consider the Gray map $\varphi : \mathscr R \rightarrow \mathbb F_3^4$ defined as $\varphi(a +ub+vc+uvd) =(a,a +b,a +c,a +b+c+d)$ in \cite{Ashraf_2016}. Using equation \eqref{123}, one can compute the Gram matrix $A$. Aplying Theorem \ref{47}, $\mathscr C$ is annihilator dual-containing. Using Magma~\cite{Magma}, we compute that $d=2$. Therefore, we obtain a $[[5, 3, 2]]_{3^4}$ quantum stabilizer code.
\end{Example}
%%%%%%%%%%%%%%%%%%%%%%%%%%%%%%%%%%%%%%%%%%%%%%%%%%%
\section{Conclusion and future work}
In this paper, we have investigated the structure of $f(x)$-polycyclic codes over a specific category of serial rings, defined as $\mathscr R=R[x_1,\ldots, x_s]/\langle t_1(x_1),\ldots, t_s(x_s) \rangle$, where $R$ is a chain ring and each $t_i(x_i)$ in $R[x_i]$ is a monic square-free polynomial. We have established a unified framework in the literature of constacyclic and polycyclic codes over various rings, all of which are examples of the ring $\mathscr R$. We have defined the new concepts QsDP codes and $f(x)$-polycyclic-QsDP codes, and have proved that $f(x)$-polycyclic codes over $\mathscr R$ are isomorphic to $f(x)$-polycyclic-QsDP codes. Additionally,   we have established a polynomial structure for the Euclidean dual of $f(x)$-polycyclic codes over $\mathscr R$. The annihilator dual is an alternative duality concept ensuring that the dual of $f(x)$-polycyclic codes have an $f(x)$- polycyclic structure, a feature that fails in the Euclidean dual. We have studied some properties of the annihilator dual of $f(x)$-polycyclic code over the ring $\mathscr R$ and have provided necessary and sufficient conditions for the existence of annihilator self-dual, annihilator self-orthogonal, annihilator LCD, and annihilator dual-containing polycyclic codes over $\mathscr R$. Finally, we have provided an annihilator CSS construction to derive quantum codes from annihilator dual-preserving $f(x)$-polycyclic codes over Frobenius rings. 
The use of a general polynomial other than $x^n-1$ or $x^n-\lambda$ (cyclic and constacyclic case) provides the chance to explore a broader range of factors, and the annihilator duality provides that the dual code is always in the same ambient space and it is easy to compute, hence the annihilator CSS construction provides a broader class of codes where to find quantum codes.
As a future work, it will be worth studying the annihilator duality and checking if it yields new quantum codes. Also in \cite{Bag_2025}, the authors have recently studied quasi-polycyclic codes and their application for constructing quantum error-correcting codes based on the CSS construction for Euclidean and symplectic duality. Extending these results to the annihilator dual setting may yield new classes of quantum codes.
%%%%%%%%%%%%%%%%%%%%%%%%%%%%%%%%%%%%%%%%%%%%%%%%%%

%%%%%%%%%%%%%%%%%%%%%%%%%%%%%%%%%%%%%%%%%%%%%%%%%%%%%%%%%%%%%%%%%%%%%%%%%%%%%%%%%%%%%%%%%%%%%%%%
\end{document}